\newcommand{\K}{{\mathbb K}}
\newcommand{\F}{{\mathbb F}}
\newcommand{\Q}{{\mathbb Q}}
\newcommand{\C}{{\mathbb C}}
\newcommand{\elim}[2]{{\mathrm {elim}_{#1} (#2)}}
\newcommand{\Coeff}[2]{{\mathrm {coeff}_{#1} (#2)}}
\newcommand{\I}{{\mathbf I}}
\newcommand{\V}{{\mathbf V}}
\newcommand{\initial}{\mathrm{in}}
\newcommand{\lm}{\mathrm{lm}}
\newcommand{\adj}{\mathit{adj}}
\newcommand{\equal}{\mathrel{\overset{\makebox[0pt]{\mbox{\normalfont\tiny\sffamily rad}}}{=}}}
\renewcommand{\subset}{\subseteq}
\renewcommand{\supset}{\supseteq}
\newtheorem{thm}{Theorem}
\newtheorem{prop}[thm]{Proposition}
\newtheorem{cor}[thm]{Corollary}
\newtheorem{lem}[thm]{Lemma}
\newtheorem*{ques}{Question}
\newtheorem*{notation}{Notation}
\theoremstyle{definition}
\newtheorem{defn}{Definition}[section]
\newtheorem{exmp}{Example}[section]
\theoremstyle{remark}
\newtheorem*{rem}{Remark}
\let\oldtabular\tabular
\renewcommand{\tabular}{\footnotesize\oldtabular}
\title
[Chordal structure in polynomial ideals]
{Exploiting chordal structure in polynomial ideals:
\\ A Gr\"obner bases approach}
\date{\today}
\author{Diego Cifuentes} 
\address{
Laboratory for Information and Decision Systems (LIDS), 
Massachusetts Institute of Technology, Cambridge MA 02139, USA}
\email{diegcif@mit.edu}
\author{Pablo A. Parrilo}
\address{
Laboratory for Information and Decision Systems (LIDS), 
Massachusetts Institute of Technology, Cambridge MA 02139, USA}
\email{parrilo@mit.edu}
\keywords {Chordal graph, Elimination theory, Gr\"obner bases, Structured polynomials, Treewidth}
\begin{document}

\begin{abstract}

Chordal structure and bounded treewidth allow for efficient
computation in numerical linear algebra, graphical models, constraint
satisfaction and many other areas. In this paper, we begin the study
of how to exploit chordal structure in computational algebraic
geometry, and in particular, for solving polynomial systems.

The structure of a system of polynomial equations can be described in
terms of a graph. By carefully exploiting the properties of this graph
(in particular, its chordal completions), more efficient algorithms
can be developed. To this end, we develop a new technique, which we
refer to as \emph{chordal elimination}, that relies on elimination
theory and Gr\"obner bases.  By maintaining graph structure throughout
the process, chordal elimination can outperform standard Gr\"obner
bases algorithms in many cases. The reason is that all computations
are done on ``smaller'' rings, of size equal to the treewidth of the
graph (instead of the total number of variables).  In particular, for
a restricted class of ideals, the computational complexity is linear
in the number of variables.  Chordal structure arises in many relevant
applications. We demonstrate the suitability of our methods in
examples from graph colorings, cryptography, sensor localization and
differential equations.

\end{abstract}

\maketitle

\section{Introduction}

Systems of polynomial equations can be used to model a large variety of applications.
In most cases the systems arising have a particular sparsity structure, and exploiting such structure can greatly improve their efficiency.
When all polynomials have degree one, we have the special case of systems of linear equations, which are often represented using matrices.
In such case, it is well known that under a \emph{chordal structure} many matrix algorithms can be done efficiently~\cite{rose1970triangulated,pothen2004elimination,paulsen1989schur}.
Similarly, many hard combinatorial problems can be solved efficiently for chordal graphs~\cite{golumbic2004algorithmic}.
Chordal graphs are also a keystone in constraint satisfaction, graphical models and database theory~\cite{Dechter2003,Beeri1983,Lauritzen1988}.
We address the question of whether chordality might also help  solve nonlinear equations.

It is natural to expect that the complexity of ``solving'' a system of
polynomials should depend on the underlying graph structure of the
equations.  In particular, a parameter of the graph called the
\emph{treewidth} determines the complexity of solving the problems
described above, and it should influence polynomial equations as well.
For instance, several combinatorial problems (e.g., Hamiltonian
circuit, vertex colorings, vertex cover) are NP-hard in general, but
are tractable if the treewidth is
bounded~\cite{bodlaender2008combinatorial}.  Nevertheless, standard
algebraic geometry techniques typically do not make use of this graph.
This paper links Gr\"obner bases with this graph structure of the
system.

It should be mentioned that, unlike classical graph problems, the
ubiquity of systems of polynomials makes them hard to solve in the
general case, even for small treewidth.  Indeed, solving zero
dimensional quadratic equations of treewidth~1 is already NP-complete,
as seen in the following example.

\begin{exmp}[Polynomials on trees are hard]\label{exmp:treewidth2hard}
Let $a_1,\ldots,a_n$ and $S$ be given integers. The Subset Sum problem
asks for a subset $A\subset \{a_1,\ldots,a_n\}$ whose sum is equal to
$S$.  Let $s_i$ denote the sum of $A\cap \{a_1,\ldots,a_i\}$ and note
that we can recover the subset $A$ from the values of
$s_1,\ldots,s_n$.  We can thus formulate the problem as:
\begin{align*}
  0 &= s_0 \\
  0 &= (s_{i} - s_{i-1})( s_i - s_{i-1} - a_i), & \mbox{for }1\leq i\leq n \\
  S &= s_n
\end{align*}
Observe that the structure of these equations can be represented with
the path graph $s_0$---$s_1$---$s_2 \cdots s_{n-1}$---$s_n$, which is
a tree. However, it is well-known that the Subset Sum problem is
NP-complete.
\end{exmp}

Despite this hardness result, it is still desirable to take advantage of this chordal structure.
In this paper, we introduce a new method that exploits this structure.
We refer to it as \emph{chordal elimination}.
Chordal elimination is based on ideas used in sparse linear algebra.
In particular, if the equations are linear chordal elimination defaults to sparse Gaussian elimination.

We proceed to formalize our statements now.
We consider the polynomial ring $R = \K[x_0,x_1,\ldots,x_{n-1}]$ over some algebraically closed field $\K$.
 We fix once and for all the lexicographic term order with $x_0>x_1>\cdots>x_{n-1}$~\footnote{Observe that smaller indices correspond to larger variables.}. 
 Given a system of polynomials $F = \{f_1,f_2,\ldots,f_s\}$ in the ring $R$, we associate to it a graph $G(F)$ with vertex set $V=\{x_0,\ldots,x_{n-1}\}$.
 Note that the vertices of $G(F)$ inherit the order from $R$.
 Such graph is given by a union of cliques: for each $f_i$ we form a clique in all its variables.
 Equivalently, there is an edge between $x_i$ and $x_j$ if and only if there is some polynomial that contains both variables. 
 We say that $G(F)$ constitutes the \emph{sparsity structure} of $F$.
 In constraint satisfaction problems, $G(F)$ is usually called the primal constraint graph~\cite{Dechter2003}.
 
 Throughout this document we fix an ideal $I\subset R$ with a given set of generators $F$.
 We assume that the associated graph $G = G(F)$ is chordal.
 Even more, we assume that $x_0>\cdots>x_{n-1}$ is a perfect elimination ordering (see Definition~\ref{defn:perfectelimination}) of the graph.
 In the event that $G$ is not chordal, the same reasoning applies by considering a chordal completion.
 We want to compute the elimination ideals of $I$, denoted as $\elim{l}{I}$, while preserving the sparsity structure.
 As we are mainly interested in the zero set of $I$ rather than finding the exact elimination ideals, we attempt to find some $I_l$ such that $\V(I_l)=\V(\elim{l}{I})$.

\begin{ques}\label{ques:chordal}
  Consider an ideal $I\subset R$ with generators $F$, and fix the lex order $x_0>x_1>\cdots>x_{n-1}$.
  Assume that such order is a perfect elimination ordering of its associated graph $G(F)$.
 Can we find ideals $I_l$, with some generators $F_l$, such that $\V(I_l) = \V(\elim{l}{I})$ and the sparsity structure is preserved, i.e., $G(F_l)\subset G(F)$? 
\end{ques}

We could also ask a stronger question: 
Does there exist a Gr\"obner basis $gb$ that preserves the sparsity structure, i.e., $G(gb)\subset G(F)$?
It turns out that it is not generally possible to find a Gr\"obner basis that preserves chordality, as seen in the next example.

\begin{exmp}[Gr\"obner bases may destroy chordality]\label{exmp:destroychordal}
Let $I = \langle x_0x_2 -1, x_1x_2-1\rangle$, whose associated graph is the path $x_0$---$x_2$---$x_1$. 
Note that any Gr\"obner basis must contain the polynomial $p = x_0-x_1$, breaking the sparsity structure. 
Nevertheless, we can find some generators for its first elimination ideal $\elim{1}{I} = \langle x_1x_2-1 \rangle$, that preserve such structure.
\end{exmp}

As evidenced in Example~\ref{exmp:destroychordal}, a Gr\"obner basis with the same graph structure might not exist, but we might still be able to find its elimination ideals. 
Our main method, \emph{chordal elimination}, attempts to find ideals $I_l$ as proposed above.
It generalizes the ideas of sparse linear algebra.
As opposed to Gaussian elimination, in the general case chordal elimination may not lead to the right elimination ideals.
Nevertheless, we can certify when the ideals found are correct.
This allows us to prove that for a large family of problems, which includes the case of linear equations, chordal elimination \emph{succeeds} in finding the elimination ideals.

The aim of chordal elimination is to obtain a good description of the ideal (e.g., a Gr\"obner basis), while preserving at the same time the underlying graph structure.
However, as illustrated above, there may not be a Gr\"obner basis that preserves the structure.
For larger systems, Gr\"obner bases can be extremely big and thus they may not be practical.
Nonetheless, we can ask for some sparse generators of the ideal that are the closest to such Gr\"obner basis.
We argue that one such representation can be given by finding the elimination ideals of all \emph{maximal cliques} of the graph.
We extend chordal elimination to compute these ideals in Algorithm~\ref{alg:elimcliques}.
In case $I$ is zero dimensional, it is straightforward to obtain the roots from such representation.

Chordal elimination shares many of the limitations of other elimination methods.
In particular, if $\V(I)$ is finite, the complexity depends intrinsically on the size of the projection $|\pi_l(\V(I))|$.
As such, it performs much better if such set is small.
In Theorem~\ref{thm:complexityeliml} we show complexity bounds for certain family of ideals where this condition is met.
Specifically, we show that chordal elimination is $O(n)$ if the \emph{treewidth} is bounded.

Chordal structure arises in many different applications and we believe
that algebraic geometry algorithms should take advantage of it.  The
last part of this paper evaluates our methods on some of such
applications, including cryptography, sensor localization and
differential equations.

We now summarize our contributions.
\begin{itemize}
  \item We present a new elimination algorithm that exploits chordal
    structure in systems of polynomial equations.  This method is
    presented in Algorithm~\ref{alg:eliml}.  To our knowledge, this is
    the first work that exploits chordal structure in computational
    algebraic geometry, as we will argue in the ``Related work''
    section below.

  \item We prove that the chordal elimination algorithm computes the
    correct elimination ideals for a large family of problems,
    although (as explained in Section~\ref{s:chordelim}) in general it
    may fail to do so.  In particular, Lemma~\ref{thm:exactelim}
    specifies conditions under which chordal elimination succeeds.  We
    show in Theorem~\ref{thm:exactsimplicial} that these conditions
    are met for a large class of problems.  Among others, this class
    includes linear equations and generic dense ideals.

  \item We present a recursive method
    (Algorithm~\ref{alg:elimcliques}) to compute the elimination
    ideals of all maximal cliques of the graph.  These ideals provide
    a good sparse description from which we can easily find all
    solutions, as seen in Section~\ref{s:elimcliques}.  We show in
    Corollary~\ref{thm:elimcliques} that this algorithm succeeds under
    the same conditions of chordal elimination.
  
\item We show in Theorem~\ref{thm:complexityeliml} and
  Corollary~\ref{thm:complexityvariety} that the complexity of our
  methods is linear in the number of variables and exponential in the
  treewidth for a restricted class of problems.

  \item Section~\ref{s:applications} provides experimental evaluation
    of our methods in problems from graph colorings, cryptography,
    sensor localization and differential equations.  In all these
    cases we show the advantages of chordal elimination over standard
    Gr\"obner bases algorithms.  In some cases, we show that we can
    also find a lex Gr\"obner basis faster than with degrevlex
    ordering, by making use of chordal elimination.  This subverts the
    heuristic of preferring degrevlex.
\end{itemize}

The document is structured as follows.
In Section~\ref{s:preliminaries} we provide a brief introduction to chordal graphs and we recall some ideas from algebraic geometry.
In Section~\ref{s:chordelim} we present our main method, chordal elimination.
Section~\ref{s:exactelim} presents some types of systems under which chordal elimination succeeds.
In Section~\ref{s:cliqueselim}, we present a method to find the elimination ideals of all maximal cliques of the graph. 
In Section~\ref{s:finitefield} we analyze the computational complexity of the algorithms proposed for a certain class of problems.
Finally, Section~\ref{s:applications} presents an experimental evaluation of our methods.

\subsection*{Related work}

Even though there is a broad literature regarding chordality/treewidth and also polynomial system solving, their interaction has received almost
no attention.  A meeting point between these two areas is the case of
linear equations, for which graph modelling methods have been very
successful.  There is also a lot of research studying connections
between graph theory and computational/commutative algebra.  We now
proceed to review previous works in these areas, comparing them with
our methods.

\subsubsection*{Chordality and bounded treewidth}
The concepts of chordality and bounded treewidth are pervasive in many different research areas.
In fact, several hard graph problems (e.g., vertex colorings, vertex covers, weighted independent set) can be solved efficiently in chordal graphs and in graphs of bounded treewidth~\cite{golumbic2004algorithmic,bodlaender2008combinatorial}.
In a similar way, many problems in constraint satisfaction and graphical models become polynomial-time solvable under bounded treewidth assumptions~\cite{Dechter2003,dalmau2002constraint}.
In other words, some hard problems are fixed-parameter-tractable when they are parametrized by the treewidth.

The logic community has also studied families of graph problems which are fixed-parameter-tractable with respect to the treewidth~\cite{courcelle2012graph}.
Makowsky and Meer applied these methods to algebraic problems such as evaluation, feasibility and positivity of polynomials~\cite{Makowsky2002}.
They show that these problems are tractable under bounded treewidth and finite domain conditions.
On the contrary, our methods do not require a discrete domain as they rely on well studied tools from computational algebraic geometry.
Moreover, their methods are not implementable due to the large underlying constants.

\subsubsection*{Chordality in linear algebra}

The use of graph theory methods in sparse linear algebra goes back at
least to the work of Parter~\cite{parter1961use}.  It was soon
realized that symmetric Gaussian elimination (Cholesky factorization)
does not introduce additional nonzero entries (i.e., no fill-in), only
in the case where the adjacency graph of the matrix is
chordal~\cite{rose1970triangulated}.  Current numerical linear algebra
methods exploit this property by first finding a small chordal
completion of this adjacency graph~\cite{pothen2004elimination}.  The
nonsymmetric case is quite more complicated, but a standard approach
is to use instead a chordal completion of the adjacency graph of
$A^TA$~\cite{davis2004column,pothen2004elimination}.
In this paper we generalize these ideas to the case of nonlinear equations.
We note that chordality is also used in several sparse matrix problems from optimization, such as matrix inversion, positive semidefinite matrix completion and Hessian evaluation~\cite{paulsen1989schur,OPT}. 

\subsubsection*{Structured polynomials}
Solving structured systems of polynomial equations is a well-studied
problem in computational algebraic geometry.  Many past techniques
make use of different types of structure.  In particular, properties
such as symmetry~\cite{Gatemann90,faugere2009solving} and
multi-homogeneous structure~\cite{faugere2011grobner} have been
exploited within the Gr\"obner basis framework.  Symmetry and
multi-homogeneous structure have also been used in homotopy
continuation methods; see e.g.,~\cite{Sommese2005}.

Sparsity in the equations has also been exploited by making use of
polytopal abstractions of the system~\cite{sturmpolytope}.  This idea
has led to faster algorithms based on homotopy
methods~\cite{huber1995polyhedral,li1997numerical}, sparse
resultants~\cite{emiris1995efficient} and more recently Gr\"obner
bases~\cite{faugere2014sparse}.  All these methods will perform
efficiently provided that certain measure of complexity of the system,
known as the BKK bound, is small.  Nonetheless, these type of methods
do not take advantage of the chordal structure we study in this paper.
Indeed, we will see that our methods may perform efficiently even when
the number of solutions, and thus the BKK bound, is very large.

A different body of methods come from the algebraic cryptanalysis
community, which considers very sparse equations over small finite
fields.  One popular approach is converting the problem into a SAT
problem and use SAT solvers~\cite{bard2007efficient}.  A different
idea is seen in~\cite{raddum2006new}, where they represent each
equation with its zero set and treat it as a constraint satisfaction
problem (CSP).  These methods implicitly exploit the graph structure
of the system as both SAT and CSP solvers can take advantage of it.
Our work, on the other hand, directly relates the graph structure with
the algebraic properties of the ideal.  In addition, our methods apply
to positive dimensional systems and arbitrary fields.

\subsubsection*{Graphs in computer algebra}
There is a long standing interaction between graph theory and computational algebra.
Indeed, several polynomial ideals have been associated to graphs in the past years~\cite{bayer1982division,Villarreal1990,DeLoera2009,Herzog2010binomial}.
One of the earliest such ideals was the coloring ideal~\cite{bayer1982division,hillar2008algebraic,DeLoera2009}, which was recently considered for the special case of chordal graphs~\cite{DeLoera2015}.
The edge ideal is perhaps the most widely studied example~\cite{Villarreal1990,Villarreal2015monomial}, given its tight connections with simplicial complexes, and the many graph properties that can be inferred from the ideal (e.g., connectedness, acyclicity, colorability, chordality).
More recently, the related binomial edge ideals have also attracted a lot of research~\cite{Herzog2010binomial}.

These graph ideals allow to infer combinatorial properties by means of commutative/computational algebra, and they are also crucial in understanding the complexity of computational algebra problems.
However, previous work has mostly focused on structural properties of these specific families of ideals, as opposed to effective methods for general polynomials, such as the ones from sparse linear algebra. 
In contrast, our paper uses graph theoretic methods as a constructive guide to perform computations on arbitrary sparse polynomial systems.


\section{Preliminaries}\label{s:preliminaries}

\subsection{Chordal graphs}

Chordal graphs, also known as triangulated graphs, have many equivalent characterizations.
A good presentation is found in~\cite{Blair1993}.
For our purposes, we use the following definition.

\begin{defn}\label{defn:perfectelimination}
  Let $G$ be a graph with vertices $x_0,\ldots,x_{n-1}$.
  An ordering of its vertices $x_0>x_1>\dots>x_{n-1}$ is a \emph{perfect elimination ordering} if for each $x_l$ the set
 \begin{align}\label{eq:cliqueXl}
 X_l:=\{x_l\}\cup \{x_m: x_m\mbox{ is adjacent to }x_l,\; x_m<x_l\}
 \end{align}
 is such that the restriction $G|_{X_l}$ is a clique.
 A graph $G$ is \emph{chordal} if it has a  perfect elimination ordering.
\end{defn}
\begin{rem}
  Observe that lower indices correspond to larger vertices.
\end{rem}

Chordal graphs have many interesting properties.
Observe, for instance, that the number of maximal cliques is at most $n$.
The reason is that any clique should be contained in some $X_l$.
It is easy to see that trees are chordal graphs:
by successively pruning a leaf from the tree we get a perfect elimination ordering.

Given a chordal graph $G$, a perfect elimination ordering can be found in linear time.
A classic and simple algorithm to do so is \emph{Maximum Cardinality Search} (MCS) \cite{Blair1993}.
This algorithm successively selects a vertex with maximal number of neighbors among previously chosen vertices, as shown in Algorithm~\ref{alg:mcs}.
The ordering obtained is a reversed perfect elimination ordering.

\begin{algorithm}
  \caption{Maximum Cardinality Search~\cite{Blair1993}}
  \label{alg:mcs}
  \begin{algorithmic}[1]
    \Require{A chordal graph $G=(V,E)$ and an optional initial clique}
    \Ensure{A reversed perfect elimination ordering $\sigma$}
    \Procedure{MCS}{$G,\mathit{start}=\emptyset$}
    \State $\sigma:= \mathit{start}$
    \While {$|\sigma|<n$}
    \State choose $v \in V-\sigma$, that maximizes $|\adj(v)\cap \sigma|$ 
    \State append $v$ to $\sigma$
    \EndWhile
    \State \Return $\sigma$
    \EndProcedure
  \end{algorithmic}
\end{algorithm}

\begin{defn}\label{defn:chordalcompletion}
  Let $G$ be an arbitrary graph.
  We say that $\overline{G}$ is a \emph{chordal completion} of $G$, if it is chordal and $G$ is a subgraph of $\overline{G}$.
  The \emph{clique number} of $\overline{G}$ is the size of its largest clique.
  The \emph{treewidth} of $G$ is the minimum clique number of $\overline{G}$ (minus one) among all possible chordal completions.
\end{defn}

Observe that given any ordering $x_0>\cdots>x_{n-1}$ of the vertices of $G$, there is a natural chordal completion $\overline{G}$, i.e., we add edges to $G$ in such a way that each $G|_{X_l}$ is a clique.
In general, we want to find a chordal completion with a small clique number.
However, there are $n!$ possible orderings of the vertices and thus finding the best chordal completion is not simple.
Indeed, this problem is NP-hard~\cite{Arnborg1987}, but there are good heuristics and approximation algorithms~\cite{bodlaender2008combinatorial}.

\begin{exmp}
\begin{figure}[htb]
\centering
\includegraphics[scale=0.4]{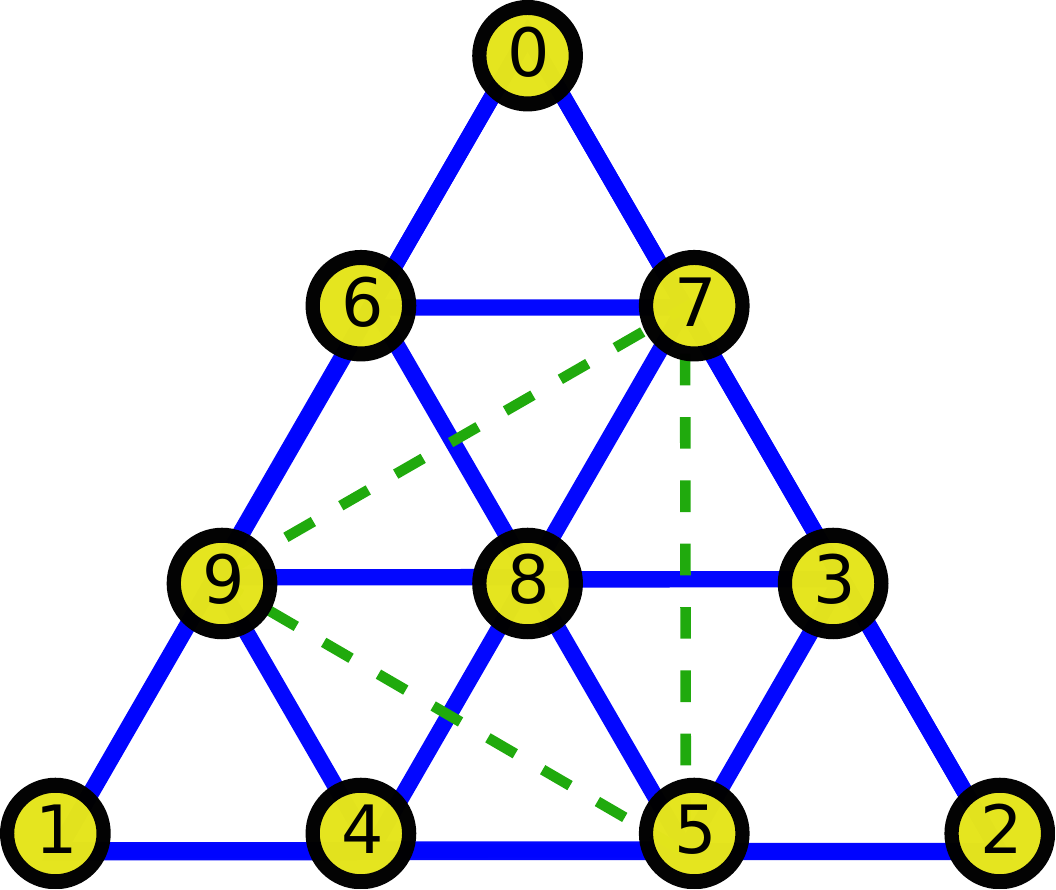}
\caption[10-vertex graph and a chordal completion]
{10-vertex graph (blue, solid) and a chordal completion (green, dashed).}
\label{fig:graph10notchordal}
\end{figure}
  Let $G$ be the blue/solid graph in Figure~\ref{fig:graph10notchordal}.
  This graph is not chordal but if we add the three green/dashed edges shown in the figure we obtain a chordal completion $\overline{G}$.
  In fact, the ordering $x_0>\cdots>x_9$ is a perfect elimination ordering of the chordal completion.
  The clique number of $\overline{G}$ is four and the treewidth of $G$ is three.
\end{exmp}

As mentioned in the introduction, we will assume throughout this document that the graph $G = G(F)$ is chordal and the ordering of its vertices (inherited from the polynomial ring) is a perfect elimination ordering.
However, for a non-chordal graph $G$ the same results hold by considering a chordal completion.

\begin{rem}[The linear case]
  We finalize this section by explaining how for linear equations finding a chordal completion of $G(F)$ agrees with standard methods from numerical linear algebra.
  A linear set of equations $F$ can be written in matrix form as $Ax = b$.
  This is equivalent to $(A^TA)x = A^Tb$, which can be solved with a Cholesky factorization.
  As mentioned earlier, to minimize the \emph{fill-in} (nonzero entries) we need to find a small chordal completion of the adjacency graph $G$ of matrix $A^TA$.
  It can be seen that this adjacency graph $G$ coincides with the graph $G(F)$ that we associate to the equations.
  Alternatively, we can directly perform an LU decomposition (Gaussian elimination) on matrix $A$, and it turns out that the adjacency graph $G$ also bounds the fill-in of the LU factors~\cite{davis2004column,pothen2004elimination}.  
\end{rem}

\subsection{Algebraic geometry}

We use standard tools from computational algebraic geometry, following the notation from~\cite{clo}.
In particular, we assume familiarity with Gr\"obner bases, elimination ideals and resultants.

We let $\elim{l}{I}$ be the $l$-th elimination ideal, i.e., 
\begin{align*}
  \elim{l}{I} := I \cap  \K[x_{l},\ldots,x_{n-1}].
\end{align*}
We will denote by $I_l$ the ``approximation'' that we will compute to this elimination ideal, defined in Section~\ref{s:chordelim}. 
We also denote $\pi_l:\K^n\to \K^{n-l}$ the projection onto the last $n-l$ coordinates.\\
We recall the correspondence between elimination and projection given by
\begin{align*}
  \V(\elim{l}{I}) = \overline{\pi_l(\V(I))}
\end{align*}
where $\overline{S}$ denotes the closure of $S$ with respect to the Zariski topology.

\begin{rem}
In order for the above equation to hold, we require that $\K$ is algebraically closed, as we assume throughout this paper.
However, if the coefficients of the equations $F$ are contained in a smaller field (e.g., $\K=\C$ but the coefficients are in $\Q$) then all computations in our algorithms will stay within such field.
\end{rem}

\section{Chordal elimination}\label{s:chordelim}

 In this section, we present our main method: chordal elimination.
 As mentioned before, we attempt to compute some generators for the elimination ideals with the same structure $G$. 
 The approach we follow mimics the Gaussian elimination process by isolating the polynomials that do not involve the variables that we are eliminating.
 The output of chordal elimination is an ``approximate'' elimination ideal that preserves chordality. 
 We call it approximate in the sense that, in general, it might not be the exact elimination ideal, but hopefully it will be close to it.
 In fact, we will find inner and outer approximations to the ideal, as will be seen later.
 In the case that both approximations are the same we will be sure that the elimination ideal was computed correctly.

 \subsection{Incremental elimination}\label{s:incrementalelim}

We follow an incremental approach to compute the elimination ideals, in a similar way as in Gaussian elimination.
We illustrate the basic methodology through the following example.

\begin{exmp}\label{exmp:elimworks}
  Consider the following ideal 
  \begin{align*}
    I = \langle x_0^4-1,x_0^2+x_2,x_1^2+x_2,x_2^2 + x_3 \rangle.
  \end{align*}
  The associated graph is the tree in Figure~\ref{fig:graph3}.
  \begin{figure}[htb]
  \centering
  \includegraphics[scale=0.35]{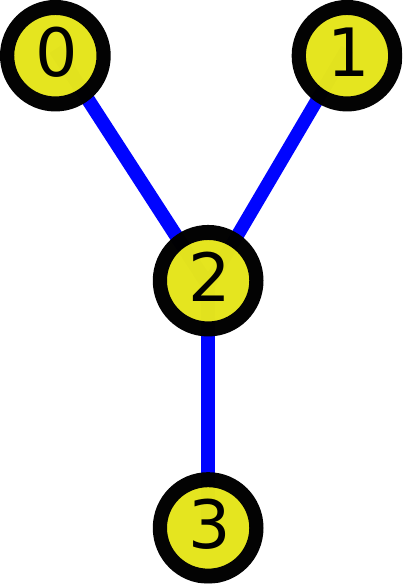}  
  \caption[Simple 3-vertex tree]
  {Simple 3-vertex tree.}
  \label{fig:graph3}
  \end{figure}
   We incrementally eliminate each of the variables, considering at each step only the equations involving it. 
   First, we consider only the polynomials involving $x_0$; there are two: $x_0^4-1,x_0^2 + x_2$.
   If we eliminate $x_0$ from these equations we obtain $x_2^2 -1$.
   This leads to the following elimination ideal
  \begin{align*}
    I_1 = \langle x_1^2+x_2,x_2^2-1,x_2^2 + x_3 \rangle.
  \end{align*}
   We now consider the polynomials involving $x_1$; there is only one: $x_1^2+x_2$. 
   Thus, we cannot eliminate $x_1$, so our second elimination ideal is
  \begin{align*}
    I_2 = \langle x_2^2-1,x_2^2 + x_3 \rangle.
  \end{align*}
   Finally, we eliminate $x_2$ from the remaining equations obtaining
  \begin{align*}
    I_3 = \langle x_3 +1 \rangle.
  \end{align*}
  For this example all elimination ideals found are correct, as can be seen from the lex Gr\"obner basis $gb = \{x_0^2+x_2,x_1^2 +x_2, x_2^2-1,x_3 + 1\}$.
\end{exmp}

 Example~\ref{exmp:elimworks} shows the basic idea we follow.
 Namely, to eliminate a variable $x_i$ we only consider a subset of the equations.
 In the above example, these equations only involved two variables at each step.
 In general, to eliminate $x_i$ we only take into account its neighboring variables in the graph.
 Therefore, if the neighborhood of each $x_i$ is small, we should require less computation.
 The chordality property will imply that these neighborhoods (cliques) are never expanded in the process.

 This successive elimination process is simple, but it is not clear whether it always leads to the correct elimination ideals.
 The following example illustrates that this is not always the case.

 \begin{exmp}[Incremental elimination may fail]\label{exmp:elimfails}
  Consider the ideal 
  $$I = \langle x_0x_1+1,x_1+x_2,x_1x_2\rangle.$$
  The associated graph is the path $x_0$---$x_1$---$x_2$.
 We proceed in an incremental way as before. 
 First, we consider only the polynomials involving $x_0$, there is only one: $x_0x_1+1$.
 Thus, we cannot eliminate $x_0$, and we are left with the ideal
 $$I_1 =  \langle x_1+x_2,x_1x_2\rangle.$$
 Eliminating $x_1$ from the two equations above, we obtain
 $$I_2 = \langle x_2^2\rangle.$$
 Observe that the original ideal $I$ is infeasible, i.e., $I= \langle 1 \rangle$, but the ideals $I_1,I_2$ found are feasible.
 Thus the elimination ideals found are not correct.
\end{exmp}

Example~\ref{exmp:elimworks} and Example~\ref{exmp:elimfails} show this incremental approach to obtain elimination ideals.
In the first case the elimination process was correct, but in the second case it was not correct.
The problem in the second example can be seen in the following equation:
 \begin{align*}
   \elim{1}{\langle x_0x_1+1,x_1+x_2,x_1x_2\rangle} \neq \elim{1}{\langle x_0x_1+1\rangle} + \langle x_1+x_2, x_1x_2 \rangle.
 \end{align*}
The goal now is to understand why these ideals are different, and when can we ensure that we successfully found the elimination ideals.

\subsection{Bounding the first elimination ideal}\label{s:squeeze}

We just introduced an incremental approach to compute elimination ideals and we observed that it might not be correct.
As will be shown next, the result of this process is always an inner approximation to the actual elimination ideal.
Even more, we will see that we can also find an outer approximation to it.
By comparing these approximations (or bounds) we can certify the cases where the elimination is correct.
We now analyze the case of the first elimination ideal, and we will later proceed to further elimination ideals.

We formalize the elimination procedure presented in Section~\ref{s:incrementalelim}.
Let $I_1$ be our estimation to the first elimination ideal as described before.
Recall that to compute the ideal $I_1$ we want to use only a subset of the equations; in the examples above the ones containing variable $x_0$.
Let's denote as $J$ the ideal of these equations, and $K$ the ideal of the remaining equations.
Then $I = J + K$, and our estimation to the first elimination ideal is given by
$I_1 = \elim{1}{J} + K$.
Note that the equations of $I$ involving $x_0$ must certainly be part of $J$.

In this way, to compute $I_1$ we only to do operations on the generators of $J$; we never deal with $K$.
As a result, the computation of $I_1$ can be done on a smaller ring, whose variables correspond to a neighborhood, or clique, of the chordal graph.
Chordality will ensure that graphical structure of $I$ is preserved, i.e., the graph associated to (the generators of) $I_1$ is a subgraph of $G$.
We elaborate more on this later.

 We want to show the relationship between our estimate $I_1$ and the actual elimination ideal $\elim{1}{I}$.
To do so, the key will be the Closure Theorem \cite[Chapter~3]{clo}.

\begin{defn}
  Let $1\leq l <n$ and let $I = \langle f_1,\ldots,f_s \rangle\subset \K[x_{l-1},\ldots,x_{n-1}]$ be an ideal with a fixed set of generators. 
  For each $1\leq t\leq s$ assume that  $f_t$ is of the form 
  \begin{align*}
    f_t = u_t(x_{l},\ldots,x_{n-1})x_{l-1}^{d_t} + (\mbox{terms with smaller degree in } x_{l-1} )
  \end{align*}
  for some $d_t\geq 0$ and $u_t\in \K[x_{l},\ldots,x_{n-1}]$.
  We define the \emph{coefficient ideal} of $I$ to be
  \begin{align*}
    \Coeff{l}{I}:= \langle u_t : 1\leq t \leq s \rangle \subset \K[x_{l},\ldots,x_{n-1}].
  \end{align*}
\end{defn}

\begin{thm}[Closure Theorem]\label{thm:closure}
  Let $I = \langle f_1,\ldots,f_s \rangle\subset \K[x_0,\ldots,x_{n-1}]$.
  Let $W := \Coeff{1}{I}$ be the coefficient ideal, let $\elim{1}{I}$ be the first elimination ideal, and let $\pi:\K^n\to \K^{n-1}$ be the projection onto the last factor.
 Then,
   \begin{align*}
     \V(\elim{1}{I}) &= \overline{\pi(\V(I))}\\
     \V(\elim{1}{I}) - \V(W) &\subset \pi(\V(I)).
   \end{align*}
\end{thm}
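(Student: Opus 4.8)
The statement to be proved is the Closure Theorem, so the plan is really to reduce it to the standard Closure Theorem from elimination theory (as in~\cite{clo}, Chapter~3) and then to identify the exceptional set $\V(W)$ explicitly in terms of the coefficient ideal $\Coeff{1}{I}$. First I would recall that the classical Closure Theorem guarantees that $\V(\elim{1}{I}) = \overline{\pi(\V(I))}$ — this is exactly the correspondence between elimination and projection already cited in the Preliminaries, valid since $\K$ is algebraically closed — so the first displayed equation needs no new argument. The real content is the second line: every point of $\V(\elim{1}{I})$ that lies outside $\V(W)$ is in the honest image $\pi(\V(I))$, not merely in its closure. This is a ``partial extension'' statement: given $a = (a_1,\ldots,a_{n-1}) \in \V(\elim{1}{I})$, we must produce $a_0 \in \K$ with $(a_0, a) \in \V(I)$, and the hypothesis $a \notin \V(W)$ is what licenses the extension.

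The key step is the extension argument. For each generator $f_t$ write $f_t = u_t(x_1,\ldots,x_{n-1}) x_0^{d_t} + (\text{lower order in } x_0)$. Substituting $x_i = a_i$ for $i \geq 1$ turns each $f_t$ into a univariate polynomial $f_t(x_0, a) \in \K[x_0]$. Since $a \notin \V(W) = \V(\langle u_1,\ldots,u_s\rangle)$, at least one $u_t(a) \neq 0$, so at least one of these univariate polynomials is nonconstant (it has a genuine nonzero leading coefficient in degree $d_t \geq 1$; note $d_t = 0$ forces $u_t = f_t \in \elim{1}{I}$, hence $u_t(a) = 0$, so the nonvanishing $u_t$ necessarily has $d_t \geq 1$). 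Because $\K$ is algebraically closed, that nonconstant polynomial has a root; but I need a common root of \emph{all} the $f_t(x_0,a)$ simultaneously. The clean way to get this is the standard one: the ideal $\langle f_1(x_0,a),\ldots,f_s(x_0,a)\rangle \subset \K[x_0]$ is either all of $\K[x_0]$ or has a root, and one shows it cannot be all of $\K[x_0]$ by invoking that $a \in \V(\elim{1}{I})$ together with a resultant/Gröbner-basis specialization lemma — precisely the machinery behind the classical Closure Theorem. Rather than reprove that, I would cite~\cite[Chapter~3]{clo} for the existence of an exceptional variety $\V(W')$ with $\V(\elim{1}{I}) - \V(W') \subset \pi(\V(I))$, and then check that one may take $W' = \Coeff{1}{I}$: the classical proof's bad set is exactly the locus where all the leading-coefficient-type polynomials (here the $u_t$) vanish, which is $\V(W)$.

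The main obstacle — and the only place where care is needed beyond citation — is matching the exceptional set in the textbook version (often phrased via a single polynomial built from a Gröbner basis, or via the leading coefficients of a Gröbner basis with respect to an elimination order) with the coefficient ideal $\Coeff{1}{I}$ defined here directly from the given generators $f_1,\ldots,f_s$. The point to verify is that if all $u_t(a) = 0$ then extension may genuinely fail, whereas if some $u_t(a) \neq 0$ the univariate system $\{f_t(x_0,a)\}_t$ has a solution; the forward direction (some $u_t(a)\neq 0 \Rightarrow$ extends) is what we need, and it follows because a nonvanishing $u_t$ keeps $\deg_{x_0} f_t(x_0,a) = d_t$, so $f_t(x_0,a)$ is a nonzero polynomial, and then the standard extension step for $\V(\elim{1}{I})$ applies. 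I expect the write-up to be short: state that the first equality is the classical Closure Theorem, then prove the inclusion by the specialization/extension argument above, remarking that $d_t = 0$ generators contribute to $\elim{1}{I}$ directly and so do not obstruct anything.
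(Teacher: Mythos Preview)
The paper does not prove this theorem at all: it is stated as the classical Closure Theorem and attributed to \cite[Chapter~3]{clo}, then used as a black box in the proof of Lemma~\ref{thm:elim}. So there is no ``paper's own proof'' to compare against; the authors simply cite the result.

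Your proposal therefore goes beyond what the paper does. The sketch you give is the standard one --- the first equality is the usual projection/closure correspondence, and the second is the Extension Theorem applied with the observation that the exceptional locus is exactly where all leading $x_0$-coefficients $u_t$ vanish. Your handling of the $d_t=0$ case (such $f_t$ already lie in $\elim{1}{I}$, hence vanish at $a$, so a nonvanishing $u_t$ forces $d_t\geq 1$) is correct and is the one point that needs checking to match the paper's $\Coeff{1}{I}$ with the textbook exceptional set. If you are writing this up, a one-line citation to \cite[Chapter~3]{clo} (the Closure and Extension Theorems) is all the paper expects; a full argument is unnecessary here.
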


The next lemma tells us that $I_1$ is an inner approximation to the actual elimination ideal $\elim{1}{I}$.
It also describes an outer approximation to it, which depends on $I_1$ and some ideal $W$.
If the two bounds are equal, this implies that we successfully found the elimination ideal.

\begin{lem}\label{thm:elim}
  Let  ${J} = \langle f_1,\ldots,f_s \rangle\subset \K[x_0,\ldots, x_{n-1}]$, 
 let ${K} = \langle g_1,\ldots, g_{r}\rangle\subset \K[x_1,\ldots, x_{n-1}]$ and let
\begin{align*}
  I := {J} + {K} = \langle f_1,\ldots,f_s,g_1,\ldots,g_r \rangle.
\end{align*}
Let the ideals $I_1,W\subset \K[x_1,\ldots, x_{n-1}]$ be 
\begin{align*}
I_1 &:= \elim{1}{J} + {K}\\
W &:= \Coeff{1}{J}+ K.
\end{align*}
Then the following equations hold:
\begin{align}
  \V(\elim{1}{I}) &= \overline{\pi(\V(I))} \subset \V(I_1)\label{eq:elim1}\\
  \V(I_1) - \V(W) &\subset \pi(\V(I))\label{eq:elim2}.
\end{align}
\end{lem}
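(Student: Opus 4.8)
The plan is to derive both inclusions from the Closure Theorem applied to the ideal $I = J + K$, using the key observation that because $K \subset \K[x_1,\ldots,x_{n-1}]$, adding $K$ "commutes" with elimination and with taking coefficient ideals in the appropriate sense. Concretely, I would first establish the algebraic identities
\begin{align*}
  \elim{1}{I} &= \elim{1}{J} + K = I_1,\\
  \Coeff{1}{I} &= \Coeff{1}{J} + K' \quad \text{for a suitable subideal, with } \Coeff{1}{I} + K = W,
\end{align*}
where the point is that the generators $g_1,\ldots,g_r$ of $K$ already lie in $\K[x_1,\ldots,x_{n-1}]$, so they contribute to $\elim{1}{I}$ directly and, being of degree $0$ in $x_0$, their "leading coefficient in $x_0$" is the generator itself. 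I expect the first identity $\elim{1}{I} = \elim{1}{J} + K$ to require a small argument: the inclusion $\supseteq$ is clear, and for $\subseteq$ one takes $h \in I \cap \K[x_1,\ldots,x_{n-1}]$, writes $h = \sum a_t f_t + \sum b_j g_j$, and must argue that the $x_0$-dependence of $\sum a_t f_t$ can be absorbed — this is exactly where one must be slightly careful, and I would phrase it via the fact that $\K[x_0,\ldots,x_{n-1}] = \K[x_1,\ldots,x_{n-1}][x_0]$ and reduce modulo $K$ to work in $(\K[x_1,\ldots,x_{n-1}]/K')[x_0]$ style, or simply cite that elimination distributes over sums when one summand is already in the smaller ring.

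Next, with these identities in hand, I would invoke Theorem~\ref{thm:closure} (the Closure Theorem) for $I = \langle f_1,\ldots,f_s,g_1,\ldots,g_r\rangle \subset \K[x_0,\ldots,x_{n-1}]$. Its first conclusion gives $\V(\elim{1}{I}) = \overline{\pi(\V(I))}$; combined with $\elim{1}{I} = I_1$, this yields $\V(\elim{1}{I}) = \overline{\pi(\V(I))} = \V(I_1)$, which is in fact the equality form of \eqref{eq:elim1} (the stated $\subset$ is then immediate, indeed an equality). For \eqref{eq:elim2}, the Closure Theorem's second conclusion reads $\V(\elim{1}{I}) - \V(\Coeff{1}{I}) \subset \pi(\V(I))$. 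Substituting $\V(\elim{1}{I}) = \V(I_1)$ and noting that $\V(\Coeff{1}{I}) \supseteq \V(W)$ would run the wrong way, so instead I use that $W = \Coeff{1}{J} + K$ and check $\Coeff{1}{I} \subseteq W$, hence $\V(W) \subseteq \V(\Coeff{1}{I})$, hence $\V(I_1) - \V(W) \supseteq \V(I_1) - \V(\Coeff{1}{I})$ — again the wrong direction. The correct move is the reverse: one wants $\V(\Coeff{1}{I}) \subseteq \V(W)$, i.e. $W \subseteq \sqrt{\Coeff{1}{I}}$, which follows if $\Coeff{1}{I} \supseteq \Coeff{1}{J} + K$ up to radical; since every $g_j \in \K[x_1,\ldots,x_{n-1}]$ appears among the generators of $I$ with $x_0$-degree $0$, its coefficient $u_j$ is $g_j$ itself, so $g_j \in \Coeff{1}{I}$, and similarly the $u_t$ from the $f_t$ lie in $\Coeff{1}{I}$; thus $W = \Coeff{1}{J} + K \subseteq \Coeff{1}{I}$ and therefore $\V(\Coeff{1}{I}) \subseteq \V(W)$. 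Then $\V(I_1) - \V(W) \subseteq \V(\elim{1}{I}) - \V(\Coeff{1}{I}) \subseteq \pi(\V(I))$, which is \eqref{eq:elim2}.

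The main obstacle is the bookkeeping around the coefficient ideal: one has to be careful that $\Coeff{1}{I}$, computed from the \emph{combined} generating set $\{f_t\} \cup \{g_j\}$, genuinely contains both $\Coeff{1}{J}$ and $K$, and in particular that the $g_j$'s — which have degree $0$ in $x_0$ — contribute themselves as "leading coefficients." This depends on reading the definition of $\Coeff{l}{\cdot}$ correctly for generators that do not involve $x_{l-1}$ at all (there $d_t = 0$ and $u_t = f_t$). I would state this explicitly as a short observation before the main computation. A secondary, purely cosmetic point is that \eqref{eq:elim1} is stated as an inclusion but the proof actually delivers equality $\V(I_1) = \overline{\pi(\V(I))}$; I would prove the equality and remark that the weaker inclusion suffices for later use. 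Everything else — substituting identities into the two conclusions of the Closure Theorem and chasing the set inclusions — is routine.
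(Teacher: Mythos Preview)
Your proposal has a genuine gap: the identity $\elim{1}{I} = \elim{1}{J} + K$ is \emph{false} in general, and this is precisely the content of the lemma. Only the easy inclusion $I_1 = \elim{1}{J} + K \subseteq \elim{1}{I}$ holds; elimination does \emph{not} distribute over sums even when one summand already lies in the smaller ring. The paper's own Example~\ref{exmp:elimfails} is a counterexample: with $J=\langle x_0x_1+1\rangle$ and $K=\langle x_1+x_2,\,x_1x_2\rangle$ one has $\elim{1}{J}=0$, so $I_1=K$, yet $I=\langle 1\rangle$ and hence $\elim{1}{I}=\langle 1\rangle\neq I_1$. Your ``absorb the $x_0$-dependence'' sketch cannot be completed; the whole point of the lemma (and of the chordal elimination framework) is that $I_1$ is only an \emph{approximation} to $\elim{1}{I}$, with the gap controlled by $W$.

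For \eqref{eq:elim1} the easy inclusion $I_1\subseteq\elim{1}{I}$ actually suffices, so that half survives. But your argument for \eqref{eq:elim2} collapses: you need $\V(I_1)-\V(W)\subseteq \V(\elim{1}{I})-\V(\Coeff{1}{I})$, which would require $\V(I_1)\subseteq\V(\elim{1}{I})$, and you only have the reverse containment. Applying the Closure Theorem to $I$ gives information about points of $\V(\elim{1}{I})$, not about the possibly larger set $\V(I_1)$. The paper instead applies the Closure Theorem to $J$ \emph{alone} and works geometrically: writing $\V(I)=\V(J)\cap\V(K)$ and using that $\V(K)$ is a cylinder over $\K^{n-1}$, one gets $\pi(\V(I))=\pi(\V(J))\cap\V^{n-1}(K)$, and then intersects the Closure Theorem's conclusions for $J$ with $\V^{n-1}(K)$. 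That is what produces \eqref{eq:elim2} for $I_1$ rather than for $\elim{1}{I}$.
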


\begin{proof}
  We first show Equation~\eqref{eq:elim1}.
 The Closure Theorem says that $\V(\elim{1}{I}) = \overline{\pi(\V(I))}$.
 We will show that $\pi(\V(I))\subset \V(I_1)$,
 from which Equation~\eqref{eq:elim1} follows because $\V(I_1)$ is closed.

  In the following equations sometimes we will consider the varieties in $\K^n$ and sometimes in $\K^{n-1}$.
 To specify, we will denote them as $\V^n$ and $\V^{n-1}$ respectively.
 Notice that
  \begin{align*}
    \pi(\V^n(I)) = \pi(\V^n({J}+{K})) =  \pi(\V^{n}({J})\cap \V^n({K})).
  \end{align*}
  Now observe that
  \begin{align*}
     \pi(\V^{n}({J})\cap \V^n({K})) =  \pi(\V^{n}({J}))\cap \V^{n-1}({K}).
  \end{align*}
  The reason is the fact that if $S\subset \K^n$, $T\subset \K^{n-1}$ are arbitrary sets, then
  \begin{align*}
     \pi(S\cap (\K\times T)) =  \pi(S)\cap T.
  \end{align*}
  Finally, note that  $\overline{\pi(\V^n({J}))} =  \V(\elim{1}{J})$.
 Combining everything we conclude:
  \begin{align*}
    \pi(\V(I)) &= \pi(\V^{n}({J})) \cap \V^{n-1}({K}) \\
    &\subset \overline{\pi(\V^{n}({J}))} \cap \V^{n-1}({K}) \\
    &=  \V^{n-1}(\elim{1}{J})\cap \V^{n-1}({K}) \\
    &= \V^{n-1}(\elim{1}{J}+{K}) \\
    &= \V(I_1).
  \end{align*}

  We now show Equation~\eqref{eq:elim2}.
  The Closure Theorem states that 
  $$\V(\elim{1}{J}) - \V(\Coeff{1}{J}) \subset \pi(\V({J})).$$
 Then,
  \begin{align*}
    \V(I_1) - \V(W) &= [\V^{n-1}(\elim{1}{J})\cap \V^{n-1}({K})]  - [\V^{n-1}(\Coeff{1}{J})\cap \V^{n-1}(K)] \\
    & = [\V^{n-1}(\elim{1}{J})-\V^{n-1}(\Coeff{1}{J})]\cap \V^{n-1}({K})\\
    & \subset \pi(\V^n({J}))\cap \V^{n-1}({K})\\
    & = \pi(\V(I)).
  \end{align*}
  This concludes the proof.
\end{proof}

Note that the lemma above implies the following equations:
\begin{align*}
  \V(I_1) - \V(W) &\subset \V(\elim{1}{I}) \subset \V(I_1) \\
  \sqrt{I_1}: \sqrt{W} &\supset \sqrt{\elim{1}{I}}\supset \sqrt{I_1}
\end{align*}
where we used that set difference of varieties corresponds to ideal quotient.
Thus, the ideal $W$ bounds the approximation error of our estimation $I_1$ to the ideal $\elim{1}{I}$.
 In particular, if $\V(W)$ is empty then $I_1$ and $\elim{1}{I}$ determine the same variety.

\subsection{Bounding all elimination ideals}

Lemma~\ref{thm:elim} gave us the relationship between our estimation $I_1$ and the actual elimination ideal $\elim{1}{I}$.
We generalize this now to further elimination ideals.

We denote by $I_l$ to our estimation to the $l$-th elimination ideal.
As before, to estimate $\elim{l+1}{I_l}$ we only use a subset of the equations of $I_l$, which we denote as $J_l$.
The remaining equations are denoted as $K_{l+1}$.
Then $I_{l+1} = \elim{l+1}{I_l} + K_{l+1}$.
The following theorem establishes the relationship between $I_{l+1}$ and $\elim{l+1}{I}$.

\begin{thm}\label{thm:eliml}
  Let $I\subset \K[x_0,\ldots,x_{n-1}]$ be an ideal.
  Consider ideals $I_l\subset \K[x_l,\ldots,x_{n-1}]$ for $0\leq l< n$, with $I_0:= I$, which are constructed recursively as follows:
  \begin{enumerate}[(i)]\itemsep0em
    \item\label{line:fakedecompse} Given $I_l$, let $J_{l}\subset \K[x_{l},\ldots,x_{n-1}]$, $K_{l+1}\subset \K[x_{l+1},\ldots,x_{n-1}]$ be\footnote{Note that this decomposition is not unique, since we are not fully specifying the ideals $J_l,K_{l+1}$.}
      such that  $I_{l} = J_{l} + K_{l+1}$. 
    \item Let
      $I_{l+1} := \elim{l+1}{J_l} + K_{l+1}$.
    \item Also denote $W_{l+1} := \Coeff{l+1}{J_l}+ K_{l+1}$.
  \end{enumerate}
  Then for each $l$ the following equations hold:
\begin{align}
  \V(\elim{l}{I}) &= \overline{\pi_l(\V(I))} \subset \V(I_l)\label{eq:eliml1}\\
  \V(I_l) - [\pi_l(\V(W_1))\cup \dots\cup \pi_l(\V(W_l))] &\subset \pi_l(\V(I))\label{eq:eliml2}.
\end{align}
\end{thm}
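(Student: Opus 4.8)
The plan is to prove Theorem~\ref{thm:eliml} by induction on $l$, using Lemma~\ref{thm:elim} as the one-step engine. The base case $l=0$ is trivial: $I_0 = I$, so $\V(\elim{0}{I}) = \V(I) = \overline{\pi_0(\V(I))}$ and \eqref{eq:eliml1} holds with equality, while \eqref{eq:eliml2} is vacuous since there are no $W$'s to subtract (and $\pi_0 = \id$). For the inductive step, suppose \eqref{eq:eliml1} and \eqref{eq:eliml2} hold at level $l$. We apply Lemma~\ref{thm:elim} to the decomposition $I_l = J_l + K_{l+1}$, but one subtlety must be addressed first: Lemma~\ref{thm:elim} is stated for a splitting into $\K[x_0,\ldots,x_{n-1}]$ and $\K[x_1,\ldots,x_{n-1}]$, i.e.\ eliminating the single top variable $x_0$. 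Here we are working inside $\K[x_l,\ldots,x_{n-1}]$ and eliminating $x_l$, so I would first note that Lemma~\ref{thm:elim} applies verbatim after relabeling variables (the ambient ring in the lemma is arbitrary; only the role of the top variable matters). This gives, at one step,
\begin{align}
  \V(\elim{l+1}{I_l}) &= \overline{\pi(\V(I_l))} \subset \V(I_{l+1})\label{eq:step1}\\
  \V(I_{l+1}) - \V(W_{l+1}) &\subset \pi(\V(I_l))\label{eq:step2}
\end{align}
where now $\pi:\K^{n-l}\to\K^{n-l-1}$ drops the $x_l$ coordinate, so that $\pi_{l+1} = \pi\circ\pi_l$.

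Next I would assemble \eqref{eq:eliml1} at level $l+1$. The chain is: $\V(\elim{l+1}{I}) = \overline{\pi_{l+1}(\V(I))} = \overline{\pi(\pi_l(\V(I)))}$. By the inductive hypothesis \eqref{eq:eliml1}, $\pi_l(\V(I)) \subset \V(I_l)$, hence $\overline{\pi(\pi_l(\V(I)))} \subset \overline{\pi(\V(I_l))} = \V(\elim{l+1}{I_l}) \subset \V(I_{l+1})$, using \eqref{eq:step1}. That settles the first equation. I should also record, for later use, the reverse-type containment coming from the closure/projection correspondence applied directly to $I$: $\overline{\pi_{l+1}(\V(I))} = \V(\elim{l+1}{I})$, which is just the standard elimination–projection fact iterated, so no extra work beyond citing it.

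The real content is \eqref{eq:eliml2} at level $l+1$, and this is where I expect the main obstacle. I want to show
\[
  \V(I_{l+1}) - \bigl[\pi_{l+1}(\V(W_1))\cup\cdots\cup\pi_{l+1}(\V(W_{l+1}))\bigr] \subset \pi_{l+1}(\V(I)).
\]
Take a point $q$ in the left-hand side. From \eqref{eq:step2}, since $q \notin \V(W_{l+1})$, we get $q \in \pi(\V(I_l))$, so $q = \pi(p)$ for some $p \in \V(I_l)$. Now I would like to conclude $p \in \pi_l(\V(I))$ via the inductive hypothesis \eqref{eq:eliml2}, which requires $p \notin \pi_l(\V(W_j))$ for all $j = 1,\ldots,l$. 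The delicate point is that knowing $q = \pi(p)\notin \pi_{l+1}(\V(W_j)) = \pi(\pi_l(\V(W_j)))$ does \emph{not} immediately give $p\notin\pi_l(\V(W_j))$ — $p$ could be a bad point sitting above a good $q$. The resolution I would pursue: we are free to choose the preimage $p$. More carefully, the set $\pi^{-1}(q)\cap \V(I_l)$ is nonempty; I must argue we can pick $p$ in it avoiding all the $\pi_l(\V(W_j))$. This should follow because $q \notin \pi_{l+1}(\V(W_j))$ means the \emph{entire} fiber $\pi^{-1}(q)$ misses $\pi_l(\V(W_j))$ — indeed if some $p'\in\pi^{-1}(q)$ lay in $\pi_l(\V(W_j))$ then $q = \pi(p')\in\pi(\pi_l(\V(W_j))) = \pi_{l+1}(\V(W_j))$, contradiction. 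So every choice of $p\in\pi^{-1}(q)$ automatically satisfies $p\notin\pi_l(\V(W_j))$ for $j\le l$. Combined with $p\in\V(I_l)$ from \eqref{eq:step2}, the inductive hypothesis \eqref{eq:eliml2} yields $p\in\pi_l(\V(I))$, and applying $\pi$ gives $q = \pi(p)\in\pi(\pi_l(\V(I))) = \pi_{l+1}(\V(I))$, as desired. Once this fiberwise bookkeeping is laid out cleanly, the induction closes; I would double-check the edge behavior of the projections $\pi_l$ vs.\ $\pi_{l+1}$ and the indexing of the $W_j$'s, since that is the easiest place to make an off-by-one slip.
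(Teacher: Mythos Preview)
Your proposal is correct and follows essentially the same approach as the paper: induction on $l$ with Lemma~\ref{thm:elim} as the one-step input, composing projections via $\pi_{l+1}=\pi\circ\pi_l$. The paper presents the inductive step for~\eqref{eq:eliml2} as a set-theoretic chain using the general fact $\pi(A\setminus B)\supset \pi(A)\setminus\pi(B)$, whereas you unfold that same inclusion pointwise via the fiber argument; the content is identical.
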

\begin{proof}
  It follows from Lemma~\ref{thm:elim} by induction. See Appendix~\ref{s:proofschordelim}.
\end{proof}

The lemma above implies the following equations:
\begin{subequations} \label{eq:boundeliml}
\begin{align}  
  \V(I_L) - \V(W) &\subset \V(\elim{L}{I}) \subset \V(I_L)\\
  \sqrt{I_L}:\sqrt{W} &\supset \sqrt{\elim{L}{I}}\supset \sqrt{I_L}
\end{align}
\end{subequations}
where the ideal $W$ is
\begin{align}\label{eq:intersectionWl}
  W := \elim{L}{W_1}\cap \dots\cap \elim{L}{W_L}.
\end{align}
Note also that by construction we always have that if $x_m<x_l$ then $I_m\subset I_l$.

\subsection{Chordal elimination algorithm}\label{s:eliml}

The recursive construction given in Theorem~\ref{thm:eliml} is not fully specified (see item~\ref{line:fakedecompse}).
In particular, it is not clear which decomposition $I_l = J_l + K_{l+1}$ to use.
We now describe the specific decomposition we use to obtain the chordal elimination algorithm.
 
We recall the definition of the cliques $X_l$ from Equation~\eqref{eq:cliqueXl}.
Equivalently, $X_l$ is the largest clique containing $x_l$ in $G|_{\{x_l,\ldots,x_{n-1}\}}$.
Let $f_j$ be a generator of $I_l$.
If all the variables in $f_j$ are contained in $X_l$, we put $f_j$ in $J_l$.
Otherwise, if some variable of $f_j$ is not in $X_l$, we put $f_j$ in $K_{l+1}$.
We refer to this procedure as \emph{clique decomposition}.

\begin{exmp}
 Let $I = \langle f,g,h\rangle$ where  $f = x_0^2+ x_1x_2$, $g = x_1^3 + x_2$ and $h = x_1 +x_3$. 
 Note that the associated graph consists of a triangle $x_0,x_1,x_2$ and the edge $x_1,x_3$.
 Thus, we have $X_0 = \{x_0,x_1,x_2\}$.
 The clique decomposition sets $J_0 = \langle f,g\rangle$, $K_1 = \langle h\rangle$.
\end{exmp}

We should mention that this decomposition method is reminiscent to the bucket elimination algorithm from constraint satisfaction~\cite{Dechter1998}.
However, we do not place an equation $f$ in its largest variable, but rather in the largest variable $x_l$ such that $f\in \K[X_l]$.
The reason for doing this is to shrink the variety $\V(J_l)$ further.
This leads to a tighter approximation of the elimination ideals and simplifies the Gr\"obner basis computation.

It is easy to see that the procedure in Theorem~\ref{thm:eliml}, using this clique decomposition, preserves chordality. 
We state that now.

\begin{prop}\label{thm:decompositionpreserve}
  Let $I$ be an ideal with chordal graph $G$.
 If we follow the procedure in Theorem~\ref{thm:eliml} using the clique decomposition, then the graph associated to $I_l$ is a subgraph of $G$.
\end{prop}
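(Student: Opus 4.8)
The plan is to prove this by induction on $l$, tracking how the graph of $I_{l+1}$ sits inside the graph of $I_l$, which by hypothesis sits inside $G$. The base case is $l=0$, where $I_0 = I$ and $G(I_0) = G(F) = G$ by assumption. For the inductive step, suppose $G(I_l) \subseteq G|_{\{x_l,\ldots,x_{n-1}\}}$ (a subgraph of $G$ restricted to the remaining variables, which is again chordal with perfect elimination ordering $x_l > \cdots > x_{n-1}$). I need to show $G(I_{l+1}) \subseteq G|_{\{x_{l+1},\ldots,x_{n-1}\}}$.

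First I would analyze the two pieces of the clique decomposition $I_l = J_l + K_{l+1}$ separately. For $K_{l+1}$: by construction every generator placed in $K_{l+1}$ already lies in $\K[x_{l+1},\ldots,x_{n-1}]$ (indeed, a generator goes into $K_{l+1}$ only if it involves a variable strictly smaller than $x_l$ outside $X_l$ — but actually the cleaner statement is that generators in $J_l$ are exactly those with all variables in $X_l$, so generators in $K_{l+1}$ together with $\elim{l+1}{J_l}$ form $I_{l+1}$; one must check $K_{l+1}$ does not involve $x_l$). The key observation here is: if a generator $f_j$ of $I_l$ involves $x_l$, then since $G(I_l) \subseteq G$ and $x_l$ is the largest variable present, all neighbors of $x_l$ appearing in $f_j$ lie in $X_l$ (by definition of $X_l$ as the clique of $x_l$ in $G|_{\{x_l,\ldots,x_{n-1}\}}$), hence all variables of $f_j$ lie in $X_l$, so $f_j$ goes into $J_l$. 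Therefore no generator of $K_{l+1}$ involves $x_l$, giving $K_{l+1} \subseteq \K[x_{l+1},\ldots,x_{n-1}]$ with $G(K_{l+1}) \subseteq G(I_l) \setminus \{x_l\} \subseteq G|_{\{x_{l+1},\ldots,x_{n-1}\}}$.

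Next, for $\elim{l+1}{J_l}$: all generators of $J_l$ have variables contained in the clique $X_l$, so $J_l \subseteq \K[X_l]$. Then $\elim{l+1}{J_l} = J_l \cap \K[x_{l+1},\ldots,x_{n-1}] \subseteq \K[X_l \setminus \{x_l\}]$. Since $X_l$ is a clique in $G$, the set $X_l \setminus \{x_l\}$ is also a clique in $G|_{\{x_{l+1},\ldots,x_{n-1}\}}$; hence whatever generators we compute for $\elim{l+1}{J_l}$, each involves only variables in this clique, so its graph is a subgraph of that clique, hence of $G$. Combining, $G(I_{l+1})$ is the union of $G(\elim{l+1}{J_l})$ and $G(K_{l+1})$, both subgraphs of $G|_{\{x_{l+1},\ldots,x_{n-1}\}}$, completing the induction.

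The main subtlety — I would not call it an obstacle, but it is the point requiring care — is the claim that any generator $f_j$ of $I_l$ involving $x_l$ automatically has all its variables inside $X_l$; this is exactly where chordality (the perfect elimination ordering) enters, and it hinges on the inductive hypothesis $G(I_l) \subseteq G$ together with the fact that, in a graph with perfect elimination ordering, the down-neighbors of $x_l$ form the clique $X_l \setminus \{x_l\}$. A secondary point is that "the graph associated to $I_l$" depends on the chosen generating set $F_l$; the statement should be read as: the generators produced by the algorithm (i.e., the union of a chosen generating set of $\elim{l+1}{J_l}$ with the inherited generators of $K_{l+1}$) have graph inside $G$, and this is what the induction actually delivers.
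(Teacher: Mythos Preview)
Your proposal is correct and follows essentially the same approach as the paper: the paper's proof is a two-sentence version of your argument, observing that the generators of $K_{l+1}$ are passed along unchanged while $J_l \subset \K[X_l]$ is supported on a clique, so any computation performed on it (in particular $\elim{l+1}{J_l}$) stays inside that clique and cannot introduce new edges. Your additional verification that every generator of $I_l$ involving $x_l$ must lie in $\K[X_l]$ (and hence in $J_l$) is a worthwhile sanity check that the clique decomposition really satisfies the hypothesis $K_{l+1}\subset\K[x_{l+1},\dots,x_{n-1}]$ of Theorem~\ref{thm:eliml}, though the paper does not spell this out in its proof of the proposition.
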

\begin{proof}
  Observe that we do not modify the generators of $K_{l+1}$, and thus the only part where we may alter the sparsity pattern is when we compute $\elim{l+1}{J_l}$ and $\Coeff{l+1}{J_l}$.
  However, the variables involved in $J_l$ are contained in the clique $X_l$ and thus, independent of which operations we apply to its generators, we will not alter the structure.
\end{proof}

After these comments, we solved the ambiguity problem of step~\ref{line:fakedecompse} in Theorem~\ref{thm:eliml}.
However, there is still an issue regarding the ``error ideal'' $W$ of Equation~\eqref{eq:intersectionWl}.
We recall that $W_{l+1}$ depends on the coefficient ideal of $J_l$.
Thus, $W_{l+1}$ does not only depend on the ideal $J_l$, but it depends on the specific set of generators that we are using.
In particular, some set of generators might lead to a larger/worse variety $\V(W_{l+1})$ than others.
This problem is inherent to the Closure theorem, and it is discussed in \cite[Chapter~3]{clo}.
It turns out that a lex Gr\"obner basis of $J_l$ is an optimal set of generators, as shown in~\cite{clo}.
Therefore, it is convenient to find this Gr\"obner basis before computing the coefficient ideal.

Algorithm~\ref{alg:eliml} presents the chordal elimination algorithm.
The output of the algorithm is the inner approximation $I_L$ to the $L$-th elimination ideal and the ideals $W_1,\ldots,W_L$, that satisfy~\eqref{eq:boundeliml}.
In the event that $\V(W_l) = \emptyset$ for all $l$, the elimination was correct.
This is the case that we focus on for the rest of the paper.

\begin{algorithm}
  \caption{Chordal Elimination to find the $L$-th Elimination Ideal}
  \label{alg:eliml}
  \begin{algorithmic}[1]
    \Require{An ideal $I$, given by generators with chordal graph $G$, and an integer $L$}
    \Ensure{Ideals $I_L$ and $W_1,\ldots,W_L$ approximating $\elim{L}{I}$ as in~\eqref{eq:boundeliml}}
    \Procedure{ChordElim}{$I,G,L$}
    \State $I_0 = I$
    \For {$l = 0:L-1$}
    \State get clique $X_l$ of $G$
    \State $J_l,K_{l+1} = $ \Call{SplitGens}{$I_l,X_l$}
    \State \Call{FindElim\&Coeff}{$J_l$}
    \State $I_{l+1} = \elim{l+1}{J_l} + K_{l+1} $
    \State $W_{l+1} = \Coeff{l+1}{J_l}+ K_{l+1}$
    \EndFor
    \State \Return $I_L, W_1,\ldots,W_L$
    \EndProcedure
    \vspace{5pt}
    \Procedure{SplitGens}{$I_l,X_l$}
    \Comment{Partition generators of $I_l$}
    \State $J_l = \langle f: f$ generator of $I_l$ and $f \in\K[X_l] \rangle$
    \State $K_{l+1} = \langle f: f$ generator of $I_l$ and $f \notin\K[X_l]\rangle$
    \EndProcedure
    \vspace{5pt}
    \Procedure{FindElim\&Coeff}{$J_l$}
    \Comment{Eliminate $x_l$ in the ring $\K[X_l]$}
    \State append to $J_l$ its lex Gr\"obner basis \label{line:appendgroebner}
    \State $\elim{l+1}{J_l} = \langle f: f $ generator of $J_l$ with no $x_l\rangle$
    \State $\Coeff{l+1}{J_l} = \langle$leading coefficient of $ f: f $ generator of $J_l\rangle$
    \EndProcedure
  \end{algorithmic}
\end{algorithm}

\begin{rem}
Observe that in line~\ref{line:appendgroebner} of Algorithm~\ref{alg:eliml}  we append a Gr\"obner basis to $J_l$, so that we do not remove the old generators.
There are two reasons to compute this lex Gr\"obner basis:
   it allows to find the $\elim{l+1}{J_l}$ easily and we obtain a tighter $W_{l+1}$ as discussed above.
However, we do not replace the old set of generators but instead we append to them this Gr\"obner basis.
We will explain the reason to do that in Section~\ref{s:eliminationtree}. 
\end{rem}

\subsection{Elimination tree}\label{s:eliminationtree}

We now introduce the concept of elimination tree, and show its connection with chordal elimination.
This concept will help us to analyze our methods.

\begin{defn}\label{defn:eliminationtree}
  Let $G$ be an ordered graph with vertex set $x_0>\cdots>x_{n-1}$.
  We associate to $G$ the following \emph{directed spanning tree} $T$ that we refer to as the \emph{elimination tree}:
  For each $x_l>x_{n-1}$ there is an arc from $x_l$ towards the largest  $x_p$ that is adjacent to $x_l$ and $x_p<x_l$.
 We will say that $x_p$ is \emph{the parent} of $x_l$ and $x_l$ is \emph{a descendant} of $x_p$.
 Note that $T$ is rooted at $x_{n-1}$.
\end{defn}

\begin{figure}[htb]
\centering
\includegraphics[scale=0.4]{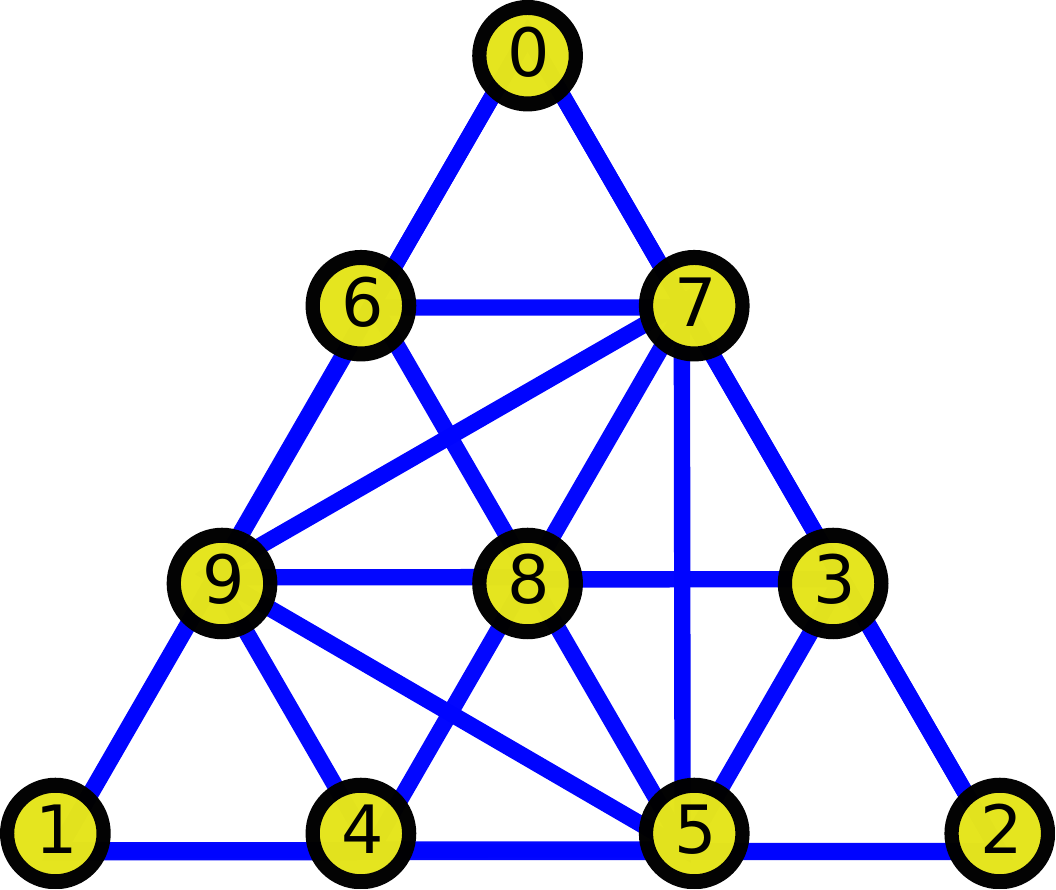}  \hspace*{10pt}
\includegraphics[scale=0.4]{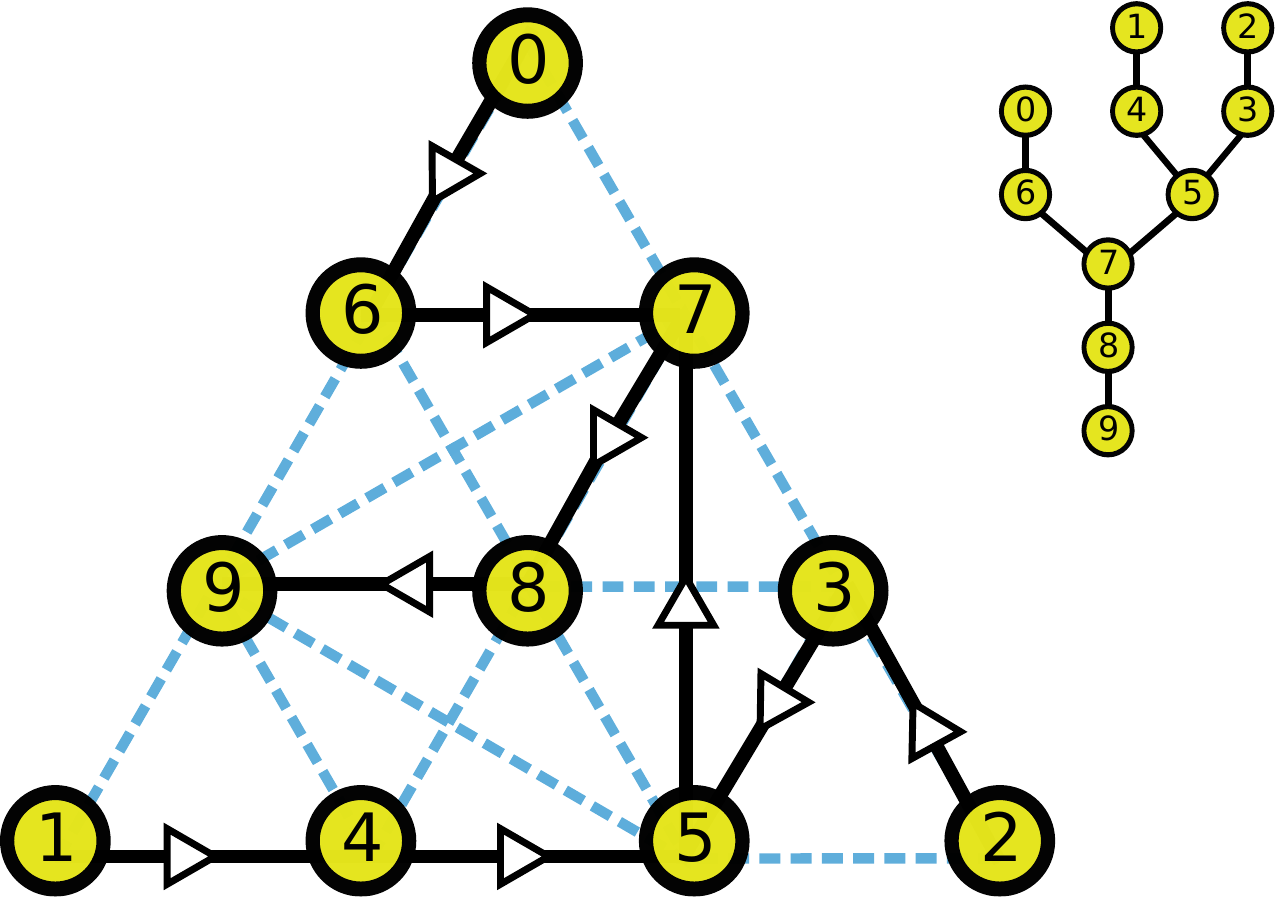}  
\caption[10-vertex chordal graph and its elimination tree]
{Chordal graph $G$ and its elimination tree $T$.}
\label{fig:graph10}
\end{figure}

Figure~\ref{fig:graph10} shows an example of the elimination tree of a given graph.
It is easy to see that eliminating a variable $x_l$ corresponds to pruning one of the leaves of the elimination tree. 
We now present a simple property of such tree.

\begin{lem}\label{thm:cliquecontainment}
  Let $G$ be a chordal graph, let $x_l$ be some vertex and let $x_p$ be its parent in the elimination tree $T$. 
  Then,
  \begin{align*}
    X_l \setminus \{x_l\}\subset X_p
  \end{align*}
  where $X_i$ is as in Equation~\eqref{eq:cliqueXl}.
\end{lem}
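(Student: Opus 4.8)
The plan is to lean on two facts already available in the excerpt. First, since the fixed order $x_0>\cdots>x_{n-1}$ is a perfect elimination ordering of $G$ (Definition~\ref{defn:perfectelimination}), the restriction $G|_{X_l}$ is a clique. Second, by Definition~\ref{defn:eliminationtree} the parent $x_p$ of $x_l$ is, by construction, the \emph{largest} vertex adjacent to $x_l$ with $x_p<x_l$; in the language of Equation~\eqref{eq:cliqueXl} this says exactly that $x_p$ is the maximum of the set $X_l\setminus\{x_l\}$. Granting these, the containment should follow with essentially no computation.

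First I would take an arbitrary $x_m\in X_l\setminus\{x_l\}$ and unwind Equation~\eqref{eq:cliqueXl}: this means $x_m$ is adjacent to $x_l$ and $x_m<x_l$. Because $x_p$ is the largest vertex with that property, we get $x_m\le x_p$. Then I would split into two cases. If $x_m=x_p$, there is nothing to prove, since $x_p\in X_p$ always. If $x_m<x_p$, I need to verify that $x_m$ and $x_p$ are adjacent; but both lie in $X_l$, and $G|_{X_l}$ is a clique, so they are. Combining adjacency with $x_m<x_p$ and again invoking Equation~\eqref{eq:cliqueXl} gives $x_m\in X_p$, which closes the argument.

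I do not expect a genuine obstacle; the only points requiring a moment's care are the degenerate cases. If $x_l=x_{n-1}$ it has no parent in $T$, and if $X_l=\{x_l\}$ then $X_l\setminus\{x_l\}=\emptyset$; in either situation the inclusion is vacuous. So $x_p$ is well defined precisely when $X_l\setminus\{x_l\}\neq\emptyset$, and in that case the case analysis above applies verbatim. It may also be worth remarking, as a byproduct, that the argument in fact shows $X_l\setminus\{x_l\}$ is a clique sitting inside $X_p$, which is the form in which this lemma is likely to be used later.
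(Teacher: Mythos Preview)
Your proof is correct and follows essentially the same approach as the paper's own proof. The paper phrases it more compactly by observing that $C:=X_l\setminus\{x_l\}$ is a clique whose largest element is $x_p$, and then invoking that $X_p$ is the unique largest clique with that property; your element-wise verification of the membership condition for $X_p$ is just an unwinding of that same step.
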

\begin{proof}
  Let $C =  X_l\setminus \{x_l\}$. 
  Note that $C$ is a clique that contains $x_p$. 
  Even more, $x_p$ is its largest variable because of the definition of $T$. 
  As $X_p$ is the unique largest clique satisfying such property, we must have $C\subset X_p$. 
\end{proof}

A consequence of the lemma above is the following relation:
\begin{align}\label{eq:gradeddectrivial}
  \elim{l+1}{I \cap \K[X_l]} \subset I\cap \K[X_p]
\end{align}
where $I \cap \K[X_l]$ is the set of all polynomials in $I$ that involve only variables in $X_l$.
The reason of the relation above is that
\begin{align*}
  \elim{l+1}{I \cap \K[X_l]} = (I \cap \K[X_l]) \cap \K[x_{l+1},\ldots,x_{n-1}] = I\cap \K[X_l\setminus \{x_l\}].
\end{align*}

There is a simple geometric interpretation of Equation~\eqref{eq:gradeddectrivial}.
The variety $\V(I\cap \K[X_l])$ can be interpreted as the set of partial solutions restricted to the set $X_l$.
Thus, Equation~\eqref{eq:gradeddectrivial} is telling us that any partial solution on $X_p$ extends to a partial solution on $X_l$ (the inclusion is reversed).
Even though this equation is very simple, this is a property that we would like to keep in chordal elimination. 

Clearly, we do not have a representation of the clique elimination ideal $I\cap \K[X_l]$.
However, the natural relaxation to consider is the ideal $J_l\subset \K[X_l]$ that we compute in Algorithm~\ref{alg:eliml}.
To preserve the property above (i.e., every partial solution of $X_p$ extends to $X_l$), we would like to have the following relation:
\begin{align}
  \elim{l+1}{J_l} \subset J_p.
  \label{eq:gradeddec}
\end{align}

It turns out that there is a very simple way to ensure this property: we preserve the old generators of the ideal during the elimination process.
This is precisely the reason why in line~\ref{line:appendgroebner} of Algorithm~\ref{alg:eliml} we \emph{append} a Gr\"obner basis to $J_l$.

We now prove that Equation~\eqref{eq:gradeddec} holds.
We need one lemma before.

\begin{lem}\label{thm:eliminiationcontainment}
  In Algorithm~\ref{alg:eliml}, let $f\in I_l$ be one of its generators. If $x_m$ is such that $x_m\leq x_l$ and  $f\in \K[X_m]$, then $f$ is a generator of $J_m$.
  In particular, this holds if $x_m$ is the largest variable in $f$.
\end{lem}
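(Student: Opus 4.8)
The plan is to induct on $l$, tracking a single generator $f$ as it flows through the recursion of Algorithm~\ref{alg:eliml}. The base case $l=0$ is immediate: a generator $f$ of $I_0 = I$ with $f\in\K[X_0]$ is, by the \textsc{SplitGens} procedure, placed in $J_0$, and this is the only case to check since $x_0$ is the largest variable overall. For the inductive step, suppose $f$ is a generator of $I_l$ and $x_m \le x_l$ with $f\in\K[X_m]$. The generators of $I_l$ are, by construction, either old generators carried over from $I_{l-1}$ (recall we \emph{append} rather than replace in line~\ref{line:appendgroebner}, and \textsc{SplitGens} only partitions, it does not modify), or new generators produced by \textsc{FindElim\&Coeff} at stage $l-1$, namely elements of $\elim{l}{J_{l-1}}$ obtained from the lex Gröbner basis of $J_{l-1}$. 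I would handle these two cases separately.

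In the first case, $f$ is already a generator of $I_{l-1}$; since $x_m \le x_l \le x_{l-1}$ and $f\in\K[X_m]$, the induction hypothesis (applied at level $l-1$) gives that $f$ is a generator of $J_m$, and we are done — note this is where carrying old generators along is essential. In the second case, $f$ is a generator newly introduced into $I_l$, so $f$ does not involve $x_l$ (it came from $\elim{l}{J_{l-1}}$), hence $x_m < x_l$, i.e. $x_m \le x_{l+1}$... more precisely $x_m$ is a variable strictly smaller than $x_l$. Now $f$ is a generator of $I_l$ with $f\in\K[X_m]$ and $x_m \le x_l$ but actually $x_m < x_l$; I want to conclude $f$ is a generator of $J_m$. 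Here I appeal to the structure of the recursion from stage $l$ downward to stage $m$: at stage $l$, \textsc{SplitGens} will put $f$ into $J_l$ if $f\in\K[X_l]$, otherwise into $K_{l+1}$; in either branch $f$ (or its copy) survives as a generator of $I_{l+1}$ when $x_m < x_l$ because elimination of $x_l$ does not touch a polynomial already free of $x_l$, and $K_{l+1}\subset I_{l+1}$. Iterating this observation down the chain $x_l > x_{l-1}' > \cdots > x_m$ (through the indices between $m$ and $l$), $f$ remains a generator of $I_{m}$, unmodified, and at stage $m$ the test ``$f\in\K[X_m]$'' succeeds by hypothesis, so \textsc{SplitGens} places $f$ in $J_m$.

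The cleanest way to organize this is to first prove the auxiliary claim: \emph{if $f$ is a generator of $I_l$ with $f\in\K[X_m]$ and $x_m < x_l$, then $f$ is a generator of $I_{l-1}$ or $f$ is a generator of $I_l$ that does not involve $x_l$ and persists as a generator of $I_{l-1}$}—essentially, descend one step at a time showing the generator set is only ever enlarged and that a polynomial not involving the eliminated variable is never removed. Then the lemma follows by peeling off one index at a time until we reach stage $m$, where membership in $\K[X_m]$ triggers assignment to $J_m$. The final sentence of the statement is then a special case: if $x_m$ is the largest variable appearing in $f$, then automatically $f\in\K[X_m]$ (since all variables of $f$ are $\le x_m$ and adjacent to $x_m$ in $G$ — here one uses that $f\in I_l$ has its graph a subgraph of $G$ by Proposition~\ref{thm:decompositionpreserve}, so the variables of $f$ form a clique contained in $X_m$) and $x_m \le x_l$. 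The main obstacle I anticipate is bookkeeping the precise sense in which ``$f$ is a generator of $I_l$'': since generators accumulate, one must be careful that the Gröbner basis computation in line~\ref{line:appendgroebner} does not discard $f$, which is exactly why the algorithm appends rather than replaces — this must be stated explicitly as the crux of the argument.
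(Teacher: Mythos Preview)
Your descending argument—show that a generator $f\in\K[X_m]$ of $I_l$ survives as a generator of $I_{l+1}$ (either via $K_{l+1}$, or via $\elim{l+1}{J_l}$ since $f$ cannot involve $x_l$ when $x_m<x_l$), and then iterate down to stage $m$ where \textsc{SplitGens} places it in $J_m$—is exactly the paper's proof. The paper simply packages this as: fix $x_m$, induct on $l$ with base case $l=m$, and in the inductive step push $f$ from $I_l$ to $I_{l+1}$ and invoke the hypothesis there.

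What you should drop is the outer upward induction on $l$ from $l=0$. It is redundant (the descending argument alone handles every case, including original generators of $I_0$), and as written it has a genuine gap: your base case $l=0$ checks only $m=0$, but the statement at $l=0$ must cover \emph{every} $m$ with $x_m\le x_0$, i.e.\ all $m$. Your claim ``this is the only case to check since $x_0$ is the largest variable overall'' is false. There are also two index slips: a new generator of $I_l$ coming from $\elim{l}{J_{l-1}}$ need not avoid $x_l$ (it avoids $x_{l-1}$), and in your auxiliary claim ``persists as a generator of $I_{l-1}$'' should read $I_{l+1}$. Once you strip out the outer induction and fix the direction, what remains is the paper's argument. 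Your handling of the ``in particular'' clause via Proposition~\ref{thm:decompositionpreserve} (the variables of $f$ form a clique in $G$, hence lie in $X_m$ when $x_m$ is largest) is fine and is essentially what the paper does as well.
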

\begin{proof}
  For a fixed $x_m$, we will show this by induction on $x_l$.

  The base case is $l = m$. 
  In such case, by construction of $J_m$ we have that $f\in J_m$.

  Assume now that the assertion holds for any $x_l'$ with  $x_m\leq x_{l'}<x_l$ and let $f$ be a generator of $I_l$.
  There are two cases: either $f\in J_{l}$ or $f\in K_{l+1}$.
  In the second case, $f$ is a generator of $I_{l+1}$ and using of the induction hypothesis we get $f\in J_m$.
  In the first case, as $f\in \K[X_m]$ then all variables of $f$ are less or equal to $x_m$, and thus strictly smaller than $x_l$.
  Following Algorithm~\ref{alg:eliml}, we see that $f$ is a generator of $ \elim{l+1}{J_l}$.
  Thus, $f$ is again a generator of $I_{l+1}$ and we conclude by induction.

  We now prove the second part, i.e., it holds if $x_m$ is  the largest variable.
  We just need to show that $f\in \K[X_m]$.
  Let $X_{lm}:=X_l\cap \{x_m,\ldots,x_{n-1}\}$, then $f\in \K[X_{lm}]$ as $x_m$ is the largest variable.
  Note that as $f \in \K[X_l]$ and $f$ involves $x_m$, then $x_m\in X_l$.
  Thus, $X_{lm}$ is a clique of $G|_{\{x_m,\ldots,x_{n-1}\}}$ and it contains $x_m$.
  However, $X_m$ is the unique largest clique that satisfies this property.
  Then $X_{lm}\subset X_m$ so that $f\in \K[X_{lm}]\subset \K[X_m]$.
\end{proof}

\begin{cor}
  Let $x_l$ be arbitrary and let $x_p$ be its parent in the elimination tree~$T$.
  Then Equation~\eqref{eq:gradeddec} holds for Algorithm~\ref{alg:eliml}.
\end{cor}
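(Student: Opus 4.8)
The plan is to combine Lemma~\ref{thm:eliminiationcontainment} with Lemma~\ref{thm:cliquecontainment} to verify the inclusion $\elim{l+1}{J_l}\subset J_p$. First I would recall what the generators of $\elim{l+1}{J_l}$ look like in Algorithm~\ref{alg:eliml}: after appending a lex Gr\"obner basis to $J_l$, the ideal $\elim{l+1}{J_l}$ is generated by those generators $f$ of (the enlarged) $J_l$ that do not involve $x_l$. Fix such a generator $f$. Since $f\in\K[X_l]$ by construction of $J_l$, and $f$ does not involve $x_l$, we have $f\in\K[X_l\setminus\{x_l\}]$.

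Next I would invoke Lemma~\ref{thm:cliquecontainment}, which gives $X_l\setminus\{x_l\}\subset X_p$, so that $f\in\K[X_p]$. Also, each such $f$ is by construction a generator of $I_{l+1}$ (it is placed in $\elim{l+1}{J_l}$ and $I_{l+1}=\elim{l+1}{J_l}+K_{l+1}$), hence a generator of every $I_m$ with $x_m\le x_{l+1}$ until it possibly gets absorbed; in particular it reaches the stage $l=p$ as a generator of $I_p$ (note $x_p<x_l$, so $x_p\le x_{l+1}$). Here I must be a little careful: Lemma~\ref{thm:eliminiationcontainment} is phrased for a generator $f$ of $I_l$ and $x_m\le x_l$ with $f\in\K[X_m]$, concluding $f$ is a generator of $J_m$. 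I would apply it with the roles ``$I_l$'' $:= I_{l+1}$, ``$x_l$'' $:= x_{l+1}$, ``$x_m$'' $:= x_p$: we have shown $f$ is a generator of $I_{l+1}$, that $x_p\le x_{l+1}$, and that $f\in\K[X_p]$, so the lemma yields that $f$ is a generator of $J_p$. Hence $f\in J_p$, and since this holds for every generator $f$ of $\elim{l+1}{J_l}$, we conclude $\elim{l+1}{J_l}\subset J_p$, which is Equation~\eqref{eq:gradeddec}.

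The main point requiring attention is the bookkeeping of which ideal plays the role of ``$I_l$'' in the hypothesis of Lemma~\ref{thm:eliminiationcontainment}: one wants to feed it a generator of some $I_{l'}$ with $x_p\le x_{l'}$, and the natural choice $l'=l+1$ works precisely because, after appending the Gr\"obner basis, the relevant $f$ is literally one of the generators carried forward into $I_{l+1}$. The inclusion $x_p\le x_{l+1}$ is immediate from $x_p<x_l$ together with the fact that indices are consecutive, so there is no edge case there. Everything else is the already-established combinatorics of the cliques $X_l$ along the elimination tree, so no new computation is needed.
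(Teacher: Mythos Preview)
Your proof is correct and follows essentially the same approach as the paper: take a generator $f$ of $\elim{l+1}{J_l}$, note that it is a generator of $I_{l+1}$ lying in $\K[X_l\setminus\{x_l\}]\subset\K[X_p]$ by Lemma~\ref{thm:cliquecontainment}, and then apply Lemma~\ref{thm:eliminiationcontainment} to conclude $f\in J_p$. The paper's proof is terser but identical in substance; your extra care in specifying which ideal plays the role of ``$I_l$'' in Lemma~\ref{thm:eliminiationcontainment} is fine, though the intermediate remark about $f$ being ``a generator of every $I_m$ until it possibly gets absorbed'' is unnecessary once you invoke the lemma directly.
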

\begin{proof}
  Let $f\in \elim{l+1}{J_l}$ be one of its generators.
  Then $f$ is also one of the generators of $I_{l+1}$, by construction.
  It is clear that the variables of $f$ are contained in $X_l\setminus \{x_l\}\subset X_p$, where we used Lemma~\ref{thm:cliquecontainment}.
  From Lemma~\ref{thm:eliminiationcontainment} we get that $f\in J_p$, concluding the proof.
\end{proof}

The reader may believe that preserving the old set of generators is not necessary.
The following example shows that it is necessary in order to have the relation in~\eqref{eq:gradeddec}.

\begin{exmp}
  Consider the ideal 
  $$I = \langle x_0-x_2,x_0-x_3,x_1-x_3,x_1-x_4,x_2-x_3,x_3-x_4,x_2^2 \rangle,$$
  whose associated graph consists of two triangles $\{x_0,x_2,x_3\}$ and $\{x_1,x_3,x_4\}$.
  Note that the parent of $x_0$ is $x_2$.
  If we preserve the old generators (as in Algorithm~\ref{alg:eliml}), we get $I_2 = \langle x_2-x_3,x_3-x_4,x_2^2,x_3^2,x_4^2\rangle$.
  If we do not preserve them, we get instead $\hat{I_2} = \langle x_2-x_3,x_3-x_4,x_4^2\rangle$.
  In the last case we have $\hat{J_2}=\langle x_2-x_3\rangle$ so that $x_2^2\notin \hat{J_2}$, even though $x_2^2 \in J_0$.
  Moreover, the ideal $J_0$ is zero dimensional, but $\hat{J_2}$ has positive dimension.
  Thus, Equation~\eqref{eq:gradeddec} does not hold.
\end{exmp}

\section{Successful elimination}\label{s:exactelim}

\begin{notation}
We will write  $I\equal J$ whenever we have $\V(I)=\V(J)$.
\end{notation}

In Section~\ref{s:chordelim} we showed an algorithm that gives us an approximate elimination ideal.
In this section we are interested in finding conditions under which such algorithm returns the actual elimination ideal.
We will say that chordal elimination, i.e., Algorithm~\ref{alg:eliml}, \emph{succeeds} if we have $\V(I_l) = \V(\elim{l}{I})$. 
Following the convention above, we write $I_l \equal \elim{l}{I}$.

\subsection{The domination condition}
 
Theorem~\ref{thm:eliml} gives us lower and upper bounds on the actual elimination ideals.
We use these bounds to obtain a condition that guarantees that chordal elimination succeeds.

\begin{defn}\label{defn:dominated_xi}
   We say that a polynomial $f$ is \emph{$x_i$-dominated} if its leading monomial has the form $x_i^{d}$ for some $d$.
   We say that an ideal $J$ is $x_i$-dominated if there is some $f\in J$ that is $x_i$-dominated. 
   Equivalently, $J$ is $x_i$-dominated if its initial ideal $\initial(J)$ contains a pure power of $x_i$.
\end{defn}

\begin{defn}[Domination condition]\label{defn:dominationcondition}
  Let $I$ be an ideal and use Algorithm~\ref{alg:eliml}.
  We say that the \emph{domination condition} holds if $J_l$ is $x_l$-dominated for each $l$. 
\end{defn}

The domination condition implies that chordal elimination succeeds, as shown now.

\begin{lem}[Domination implies Success]\label{thm:exactelim}
  If the domination condition holds then $I_l \equal \elim{l}{I}$ and the corresponding variety is $\pi_l(\V(I))$, for all $l$.
\end{lem}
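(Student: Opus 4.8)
The plan is to induct on $l$, using the bounds from Theorem~\ref{thm:eliml} together with the domination condition to squeeze the inner and outer approximations to a single variety. The base case $l=0$ is trivial since $I_0 = I$. For the inductive step, suppose we already know $I_l \equal \elim{l}{I}$ with $\V(I_l) = \pi_l(\V(I))$. We have $I_{l+1} = \elim{l+1}{J_l} + K_{l+1}$ and $W_{l+1} = \Coeff{l+1}{J_l} + K_{l+1}$, and Theorem~\ref{thm:eliml} gives $\V(\elim{l+1}{I}) \subset \V(I_{l+1})$ and $\V(I_{l+1}) - [\pi_{l+1}(\V(W_1))\cup\dots\cup\pi_{l+1}(\V(W_{l+1}))] \subset \pi_{l+1}(\V(I))$. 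So it suffices to show that the correction term is empty, i.e.\ that $\V(W_{l+1})$ — and more precisely its contribution $\pi_{l+1}(\V(W_{l+1}))$ — lies in $\pi_{l+1}(\V(I))$; combined with $\pi_{l+1}(\V(I)) \subset \V(\elim{l+1}{I}) \subset \V(I_{l+1})$ this forces all three sets to coincide, and then $\V(I_{l+1}) = \pi_{l+1}(\V(I))$ as desired.

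The heart of the matter is to show that the domination hypothesis on $J_l$ kills the coefficient-ideal obstruction. Since $J_l$ is $x_l$-dominated, some generator $f \in J_l$ has leading monomial a pure power $x_l^{d}$; because we work with a lex Gr\"obner basis (appended in line~\ref{line:appendgroebner}), we may take $f$ from that basis, so its leading coefficient with respect to $x_l$ is a nonzero constant. Hence the corresponding $u_t$ in the definition of $\Coeff{l+1}{J_l}$ is a unit, which makes $\Coeff{l+1}{J_l} = \langle 1 \rangle$, so $\V(W_{l+1}) = \V(\Coeff{l+1}{J_l} + K_{l+1}) = \emptyset$. Therefore the correction term $\pi_{l+1}(\V(W_1))\cup\dots\cup\pi_{l+1}(\V(W_{l+1}))$ — which by the domination condition applied at every earlier stage is a union of images of empty sets — is empty as well. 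This yields $\V(I_{l+1}) \subset \pi_{l+1}(\V(I))$, closing the squeeze.

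Finally, I would assemble the induction: at each step the emptiness of $\V(W_{j+1})$ for all $j \le l$ (guaranteed by the domination condition) makes the outer bound in \eqref{eq:eliml2} coincide with the inner bound in \eqref{eq:eliml1}, so $\V(I_l) = \V(\elim{l}{I}) = \pi_l(\V(I))$ for every $l$, which is exactly $I_l \equal \elim{l}{I}$ with variety $\pi_l(\V(I))$. The only subtle point — the step I expect to require the most care — is making sure the leading coefficient appearing in $\Coeff{l+1}{J_l}$ really is a constant and not merely that $x_l^d \in \initial(J_l)$: this is why having the lex Gr\"obner basis among the generators of $J_l$ (rather than an arbitrary generating set) is essential, and I would state that dependence explicitly. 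Everything else is bookkeeping with the inclusions from Theorem~\ref{thm:eliml}.
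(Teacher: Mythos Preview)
Your proposal is correct and follows essentially the same approach as the paper: show that the domination condition forces a lex Gr\"obner basis element of $J_l$ to have leading monomial $x_l^d$, hence its $x_l$-leading coefficient is a nonzero constant, so $1\in\Coeff{l+1}{J_l}\subset W_{l+1}$ and $\V(W_{l+1})=\emptyset$, after which the two bounds in Theorem~\ref{thm:eliml} collapse. The only difference is cosmetic: you wrap the argument in an induction on $l$, whereas the paper applies Theorem~\ref{thm:eliml} directly for all $l$ at once (since once every $\V(W_j)$ is empty the union in \eqref{eq:eliml2} vanishes outright), so the inductive framing is unnecessary but harmless.
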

\begin{proof}
  As $J_l$ is $x_l$-dominated, then its initial ideal $\initial(J_l)$ contains a pure power of $x_l$.
 Thus, there must be a $g$ that is part of the Gr\"obner basis of $J_l$ and is $x_l$-dominated.
 The coefficient $u_t$ that corresponds to such $g$ is $u_t=1$, and therefore $1\in W_l$ and $\V(W_l) = \emptyset$.
 Thus the two bounds in Theorem~\ref{thm:eliml} are the same and the result follows.
\end{proof}

We will now show some classes of ideals in which the domination condition holds.
Using the previous lemma, this guarantees that chordal elimination succeeds.

\begin{cor}\label{thm:exactzerodimensional}
  Let $I$ be an ideal  and assume  that for each $l$ such that $X_l$ is a maximal clique of $G$, the ideal $J_l\subset \K[X_l]$ is zero dimensional.
 Then the domination condition holds and chordal elimination succeeds.
\end{cor}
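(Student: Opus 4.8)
The plan is to reduce the corollary to the domination condition via Lemma~\ref{thm:exactelim}: it suffices to show that under the stated hypothesis, $J_l$ is $x_l$-dominated for \emph{every} $l$, not just those $l$ for which $X_l$ is maximal. So the proof splits naturally into two parts. First I would handle the case where $X_l$ is a maximal clique: here the hypothesis directly gives that $J_l \subset \K[X_l]$ is zero dimensional. A zero dimensional ideal in $\K[X_l]$ must, for each variable occurring in the ring, contain a polynomial whose leading term (under \emph{any} term order, in particular the lex order we fixed) is a pure power of that variable — otherwise the variety would be positive dimensional by the finiteness criterion (a zero dimensional ideal has, for each variable $x_i$, some univariate polynomial in $x_i$ in the ideal, hence after reduction an element whose leading monomial is a power of $x_i$). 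Applying this with $x_i = x_l$, since $x_l$ is the largest variable of $\K[X_l]$, shows $J_l$ is $x_l$-dominated.

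The second and more delicate part is the case where $X_l$ is \emph{not} maximal. Here I would use the elimination-tree machinery of Section~\ref{s:eliminationtree}. If $X_l$ is not maximal, then $X_l \subsetneq X_p$ where $x_p$ is... actually more carefully: when $X_l$ is not maximal, $X_l \setminus \{x_l\}$ together with $x_l$'s neighbors forces $X_l$ to be strictly contained in another clique $X_j$ with $x_j > x_l$. The key structural fact I want is that every $X_l$ is contained in some \emph{maximal} clique $X_k$, and that by Lemma~\ref{thm:eliminiationcontainment} the generators flow correctly: any generator $f$ of $I_k$ lying in $\K[X_l]$ is a generator of $J_l$. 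So if $J_k$ (for the maximal clique $X_k \supseteq X_l$) contains an $x_k$-dominated element coming ultimately from a zero dimensional structure, I need to trace how $x_l$-domination of $J_l$ follows. The cleanest route: show that the hypothesis "$J_l$ zero dimensional for maximal $X_l$" propagates downward — namely, using Equation~\eqref{eq:gradeddec}, $\elim{l+1}{J_l} \subset J_p$, combined with the fact that appending Gr\"obner bases preserves old generators, one shows inductively (from larger cliques to smaller) that every $J_l$ is zero dimensional in $\K[X_l]$. Indeed for the smallest cliques (leaves far from the root whose $X_l$ is maximal) this is the hypothesis; for a general $x_l$, the ideal $J_l$ contains all the generators that "passed through" from descendants together with the original generators supported on $X_l$, and one argues the dimension cannot go up.

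Once every $J_l$ is zero dimensional in $\K[X_l]$, the first part's argument applies uniformly: $J_l$ is $x_l$-dominated for all $l$, so the domination condition holds, and Lemma~\ref{thm:exactelim} finishes the proof, including the identification of the variety as $\pi_l(\V(I))$.

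The main obstacle I anticipate is the downward propagation of zero dimensionality: one must be careful that $J_l$, which is built from $K$-parts of descendant ideals plus original generators, really does inherit zero dimensionality — a priori $J_l$ could be "smaller" than the relevant clique elimination ideal $I \cap \K[X_l]$ and hence have larger variety. The reason this works is precisely the generator-preservation design of Algorithm~\ref{alg:eliml} (the \emph{append} in line~\ref{line:appendgroebner}), which guarantees $\elim{l+1}{J_l}\subset J_p$ and more generally that no generator is ever lost; so the dimension of $J_l$ is controlled by the dimension at the maximal clique above it. Making this induction precise — choosing the right order to traverse the elimination tree and correctly invoking Lemma~\ref{thm:eliminiationcontainment} — is the crux.
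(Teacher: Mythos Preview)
Your proposal overcomplicates matters, and the detour you take has a real gap. You correctly identify the two relevant tools: that every clique $X_m$ sits inside some maximal clique $X_l$ with $x_l \geq x_m$, and that Lemma~\ref{thm:eliminiationcontainment} transports generators. But you then miss the one-line conclusion and instead pivot to an induction along Equation~\eqref{eq:gradeddec} aimed at proving that \emph{every} $J_m$ is zero dimensional. That stronger claim is not needed, and your proposed induction does not establish it: from $\elim{l+1}{J_l}\subset J_p$ you only control the variables in $X_l\setminus\{x_l\}\subset X_p$, whereas $X_p$ can contain variables not seen by any single child, so ``the dimension cannot go up'' is not justified by the containment in~\eqref{eq:gradeddec} alone.

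The paper's proof is the direct route you nearly wrote down before pivoting. The key observation you are missing is this: since $J_l$ is zero dimensional in $\K[X_l]$ and $x_m\in X_l$, the ideal $J_l$ is $x_m$-dominated (not merely $x_l$-dominated). Hence the lex Gr\"obner basis of $J_l$ contains some $g$ with leading monomial a pure power of $x_m$; in particular $x_m$ is the largest variable of $g$. Now the second part of Lemma~\ref{thm:eliminiationcontainment} applies verbatim and gives $g\in J_m$, so $J_m$ is $x_m$-dominated. Since $x_m$ was arbitrary, the domination condition holds and Lemma~\ref{thm:exactelim} finishes. No induction, no propagation of zero dimensionality, no appeal to~\eqref{eq:gradeddec}.
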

\begin{proof}
  Let $x_m$ be arbitrary, and let $x_l\geq x_m$ be such that $X_m\subset X_l$ and $X_l$ is a maximal clique.
  As $J_l\subset \K[X_l]$ is zero dimensional, then it is $x_j$-dominated for all $x_j\in X_l$.
 Thus, there is a $g$ that is part of the Gr\"obner basis of $J_l$ and is $x_m$-dominated.
  From Lemma~\ref{thm:eliminiationcontainment} we obtain that $g\in J_m$, and thus the domination condition holds.
\end{proof}

\begin{cor}\label{thm:exactsimplicialeachl}
Let $I$ be an ideal and assume that for each $l$ there is a generator $f_l$ of $I$ that is $x_l$-dominated.
 Then the domination condition holds and chordal elimination succeeds.
\end{cor}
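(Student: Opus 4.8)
The plan is to show that the hypothesis---for each $l$ there is a generator $f_l$ of $I$ that is $x_l$-dominated---implies the domination condition of Definition~\ref{defn:dominationcondition}, at which point Lemma~\ref{thm:exactelim} finishes the proof. So the only real content is: for each $l$, the ideal $J_l$ produced by Algorithm~\ref{alg:eliml} is $x_l$-dominated.

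Fix $l$ and let $f_l$ be the given generator of $I = I_0$ that is $x_l$-dominated. First I would locate $f_l$ in the recursion. Since $f_l$ is $x_l$-dominated, its leading monomial is a pure power of $x_l$, so $x_l$ is the largest variable occurring in $f_l$; in particular $f_l \in \K[X_l]$ is impossible to fail since every variable of $f_l$ is $\le x_l$ and adjacent-or-equal to $x_l$ in $G$ (here one uses that $G$ is the sparsity graph of a generating set containing $f_l$, so the variables of $f_l$ form a clique, hence all lie in $X_l$ by the definition~\eqref{eq:cliqueXl} of $X_l$ as the largest clique containing $x_l$ among smaller-or-equal vertices). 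Then Lemma~\ref{thm:eliminiationcontainment}, applied with $x_m = x_l$ (the largest variable in $f_l$), tells us that $f_l$ is a generator of $J_l$. Since $f_l$ is $x_l$-dominated, $J_l$ contains an $x_l$-dominated polynomial, so $J_l$ is $x_l$-dominated.

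Having established that $J_l$ is $x_l$-dominated for every $l$, the domination condition holds by definition, and Lemma~\ref{thm:exactelim} gives $I_l \equal \elim{l}{I}$ with corresponding variety $\pi_l(\V(I))$ for all $l$; in other words chordal elimination succeeds.

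The only subtle point---and the one I would be most careful about---is the claim that $f_l \in \K[X_l]$, i.e., that all variables of $f_l$ actually lie in the clique $X_l$. This is where the perfect elimination ordering and the definition of $G(F)$ are essential: because $f_l$ is an element of the fixed generating set, its variables form a clique in $G$, and because $f_l$ is $x_l$-dominated its variables all have index $\ge l$ (i.e.\ are $\le x_l$ in the order) with $x_l$ among them; so this clique is a clique of $G|_{\{x_l,\ldots,x_{n-1}\}}$ containing $x_l$, hence contained in the unique largest such clique $X_l$. One should double-check that $f_l$ being a generator of $I$ (rather than of the possibly-different generating set $F$ fixed at the outset) is not an issue---but the theorem's hypothesis is naturally read as $f_l$ belonging to the generating set that defines $G$, which is exactly what makes the clique argument go through, and this matches the usage in the antecedent Corollary~\ref{thm:exactzerodimensional} and Lemma~\ref{thm:eliminiationcontainment}. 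Once that containment is granted, everything else is a direct invocation of the two cited lemmas.
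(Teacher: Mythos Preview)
Your proposal is correct and follows essentially the same route as the paper: apply Lemma~\ref{thm:eliminiationcontainment} to conclude $f_l \in J_l$, hence $J_l$ is $x_l$-dominated, and then invoke Lemma~\ref{thm:exactelim}. Your careful verification that $f_l \in \K[X_l]$ is in fact already packaged into the second clause of Lemma~\ref{thm:eliminiationcontainment} (``this holds if $x_m$ is the largest variable in $f$''), so the paper's proof is the one-line version of yours.
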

\begin{proof}
  It follows from Lemma~\ref{thm:eliminiationcontainment} that $f_l \in J_l$, so that $J_l$ is $x_l$-dominated and the domination condition holds.
\end{proof}

The previous corollary presents a first class of ideals for which we are guaranteed to have successful elimination.
 Note that when we solve equations over a finite field $\F_q$, usually we include equations of the form $x_l^q-x_l$, so the corollary holds.
 In particular, it holds for $0/1$ problems. 
 
\subsection{Simplicial equations}

The assumptions of Corollary~\ref{thm:exactsimplicialeachl} are too strong for many cases.
In particular, if $l=n-2$ the only way that such assumption holds is if there is a polynomial that only involves $x_{n-2},x_{n-1}$.
We will show now a bigger class of ideals for which the domination condition also holds and thus chordal elimination succeeds.
The following concept is the basis for this class.

 \begin{defn}\label{defn:simplicial}
  Let $f\in \K[x_0,\ldots,x_{n-1}]$ be such that for each variable $x_l$ of positive degree,  the monomial $m_l$ of $f$ with largest degree in $x_l$ is unique and has the form $m_l=x_l^{d_l}$ for some $d_l>0$.
 We say that $f$ is \emph{simplicial}.
\end{defn}

\begin{exmp}
  Consider the polynomials of Example~\ref{exmp:elimfails}: 
  $$f_1 = x_0x_1+1,\qquad f_2 = x_1+x_2,\qquad f_3 = x_1x_2.$$
  Then $f_2$ is simplicial, as for both $x_1,x_2$ the monomials of largest degree in these variables are pure powers.
  In general, linear equations are always simplicial.
  On the other hand, $f_1,f_3$ are not simplicial.
  This makes sense, as we will see that if all polynomials are simplicial then chordal elimination succeeds, which was not the case of Example~\ref{exmp:elimfails}.
  On the contrary, all the polynomials of Example~\ref{exmp:elimworks} are simplicial.
\end{exmp}

Note that the definition of simplicial is independent of the monomial ordering used, as opposed to $x_i$-domination.
The reason for the term simplicial is that the (scaled) standard simplex 
$$\Delta = \{x:x\geq0,\sum_{x_l\in X_f} x_l/d_l = |X_f|\}$$
where $X_f$ are the variables of $f$, is a face of the Newton polytope of $f$ and they are the same if $f$ is homogeneous.
 
We will make an additional genericity assumption on the simplicial polynomials.
Concretely, we assume that the coefficients of $m_l=x_l^{d_l}$ are generic, in a sense that will be clear in the next lemma.

\begin{lem}\label{thm:resultants}
  Let $q_1,q_2$ be generic simplicial polynomials.
 Let $X_1,X_2$ denote their sets of variables and let $x\in X_1\cap X_2$.
  Then $h = \mathrm{Res}_{x}(q_1,q_2)$ is generic simplicial and its set of variables is $X_1\cup X_2\setminus x$.
  \end{lem}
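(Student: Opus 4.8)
The plan is to evaluate $h$ as the Sylvester determinant of $q_1,q_2$ viewed as polynomials in $x$, and then to examine, one variable at a time, the top-degree part of $h$ in that variable. Since a resultant lies in the elimination ideal and the Sylvester entries are coefficients of $q_1,q_2$, we get $h\in\K[X_1\cup X_2\setminus\{x\}]$ for free; so it is enough to show that for every $v\in X_1\cup X_2\setminus\{x\}$ the part of $h$ of maximal degree in $v$ is a single monomial $c_v\,v^{N_v}$ with $N_v>0$ and $c_v\neq0$ a scalar. This simultaneously proves that $v$ really occurs in $h$, that $h$ is simplicial, and — because each $c_v$ will emerge as a product of the pure-power coefficients of $q_1$ and $q_2$ — that the simplicial polynomial $h$ is again generic.

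Write $q_i=\sum_k a^{(i)}_k x^k$ with $d_i=\deg_x q_i$; simpliciality in $x$ forces the leading coefficient $a^{(i)}_{d_i}$ to be the nonzero scalar coefficient of $x^{d_i}$. Take first a variable $v$ belonging to exactly one of the two sets, say $v\in X_1\setminus X_2$. Then $v$ occurs only among the $a^{(1)}_k$; simpliciality in $v$ places the pure power $v^e$, $e=\deg_v q_1$, inside $a^{(1)}_0$ as its $v$-leading monomial with nonzero scalar coefficient $c$, while $\deg_v a^{(1)}_k\le e-1$ for all $k\ge 1$. In the $(d_1+d_2)\times(d_1+d_2)$ Sylvester matrix the $d_2$ rows coming from $q_1$ each contribute at most $e$ to the $v$-degree, and exactly $e$ only if that row uses the entry $a^{(1)}_0$; using $a^{(1)}_0$ in all $d_2$ of them forces those rows onto the columns $d_1+1,\dots,d_1+d_2$, so the $d_1$ rows coming from $q_2$ occupy columns $1,\dots,d_1$, where they form an upper-triangular block with constant diagonal $a^{(2)}_{d_2}$. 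Hence the $v$-leading part of $h$ is $\pm\,c^{\,d_2}\,(a^{(2)}_{d_2})^{d_1}\,v^{d_2 e}$, a single pure power with nonzero scalar coefficient; the symmetry $\mathrm{Res}_x(q_1,q_2)=\pm\mathrm{Res}_x(q_2,q_1)$ takes care of $v\in X_2\setminus X_1$.

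The remaining case $v\in X_1\cap X_2\setminus\{x\}$ is the crux and the main obstacle. Here $v$ sits among both families of coefficients, and the argument above collapses: to reach $v$-degree $d_2 e + d_1 f$ (with $e=\deg_v q_1$, $f=\deg_v q_2$) one would have to use $a^{(1)}_0$ in all $d_2$ rows from $q_1$ and $a^{(2)}_0$ in all $d_1$ rows from $q_2$, but the column sets these selections demand, $\{d_1+1,\dots,d_1+d_2\}$ and $\{d_2+1,\dots,d_2+d_1\}$, overlap, so no transversal achieves it; the true maximal $v$-degree $N_v$ is smaller and must be pinned down by a finer analysis of which nonzero transversals of the Sylvester matrix are of maximal $v$-degree. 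The plan is to carry out that combinatorial analysis and then verify, using simpliciality of $q_1,q_2$ (each non-leading coefficient has strictly smaller $v$-degree, and the $v$-leading monomials of $a^{(1)}_0,a^{(2)}_0$ are the pure powers $v^e,v^f$) together with genericity of the pure-power coefficients, that all the contributions of top $v$-degree assemble into a single nonzero scalar times $v^{N_v}$. One convenient simplification is to specialize every variable of $X_1\cup X_2\setminus\{x,v\}$ to a generic scalar — legitimate because $\mathrm{Res}_x$ commutes with specializations preserving the nonzero leading $x$-coefficients — which makes $h$ univariate in $v$, so that its $v$-leading term is automatically a pure power; what remains is to show that this specialization does not drop $\deg_v$ and that no monomial of $h$ of $v$-degree $N_v$ involves another variable, and this is precisely where the genericity of the pure-power coefficients is indispensable. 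I expect this last point — ruling out cancellation and mixed monomials in the $v$-leading form — to be the hardest part of the argument.
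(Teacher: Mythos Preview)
Your Sylvester-transversal approach is exactly the paper's route, and for $v$ belonging to only one of $X_1,X_2$ your argument is correct and more carefully written than the paper's one-line version. You are also right that the case $v\in X_1\cap X_2\setminus\{x\}$ is the crux; the paper disposes of it in two sentences by naming the two diagonal transversals (yielding the pure powers $v^{d_1m_2}$ and $v^{d_2m_1}$, in your notation $v^{d_2e}$ and $v^{d_1f}$) and invoking genericity only when the two exponents coincide, without ever controlling the remaining transversals.

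Your unease is not a gap in your argument but in the statement itself: as written, the lemma is false. Take
\[
q_1=ax^{2}+xyz+by^{2}+cz^{2},\qquad q_2=dx+ey,
\]
with $a,b,c,d,e$ generic scalars. Both are simplicial: in $q_1$ the unique monomial of top degree in each of $x,y,z$ is the corresponding pure square, while $xyz$ has degree $1$ in every variable. Since $q_2$ is linear in $x$,
\[
h=\mathrm{Res}_x(q_1,q_2)=d^{2}\,q_1\!\bigl(-\tfrac{e}{d}y\bigr)=(ae^{2}+bd^{2})\,y^{2}-de\,y^{2}z+cd^{2}\,z^{2},
\]
whose $y$-leading part (degree $2$) consists of \emph{two} monomials, and no genericity of the pure-power coefficients removes the mixed term $-de\,y^{2}z$. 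Hence $h$ is not simplicial. The mixed monomial you anticipated really survives --- it arises from substituting $x\mapsto -\tfrac{e}{d}y$ into the three-variable term $xyz$ --- so your plan to ``rule out mixed monomials in the $v$-leading form'' cannot succeed in general. What the downstream arguments actually need is the weaker conclusion that successive resultants keep the relevant ideals $x_l$-dominated, and all of the paper's concrete applications avoid monomials supported on three or more variables; so the overall programme may well be repairable, but not through the lemma as stated.
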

  \begin{proof}
    Let $q_1,q_2$ be of degree $m_1,m_2$ as univariate polynomials in $x$.
 As $q_2$ is simplicial, for each $x_i\in X_2\setminus x$ the monomial with largest degree in $x_i$ has the form $x_i^{d_2}$.
 It is easy to see that the largest monomial of $h$, as a function of $x_i$, that comes from $q_2$ will be $x_i^{d_2m_1}$.
 Such monomial arises from the product of the main diagonal of the Sylvester matrix.
 In the same way, the largest monomial that comes from $q_1$ has the form $x_i^{d_1m_2}$.
  If $d_2m_1=d_1m_2$, the genericity guarantees that such monomials do not cancel each other out.
  Thus, the leading monomial of $h$ in $x_i$ has the form $x_i^{\max\{d_2m_1,d_1m_2\}}$ and then $h$ is simplicial.
 The coefficients of the extreme monomials are polynomials in the coefficients of $q_1,q_2$, so if they were not generic (they satisfy certain polynomial equation), then $q_1,q_2$ would not be generic either.
  \end{proof}

  Observe that in the lemma above we required the coefficients to be generic in order to avoid cancellations in the resultant.
 This is the only part where we need this assumption.

 We recall that elimination can be viewed as pruning the elimination tree $T$ of $G$ (Definition~\ref{defn:eliminationtree}).
 We attach each of the generators of $I$ to some node of $T$.
 More precisely, we attach a generator $f$ to the largest variable it contains, which we denote as $x(f)$.
 The following lemma tells us that if there are many simplicial polynomials attached to the subtree of $T$ rooted in $x_l$, then $J_l$ is $x_l$-dominated.

  \begin{lem}\label{thm:getsimplicial}
    Let $I=\langle f_1,\ldots,f_s\rangle$ and let $1\leq l<n$.
    Let $T_l$ be a subtree of $T$ with  $t$ vertices and minimal vertex $x_l$.
    Assume that there are $f_{i_1},\ldots,f_{i_{t}}$ generic simplicial with largest variable $x(f_{i_j})\in T_l$ for $1\leq j\leq t$.
    Then $J_l$ is $x_l$-dominated.
  \end{lem}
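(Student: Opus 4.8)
The plan is to prove this by induction on the number of vertices $t$ of the subtree $T_l$. The base case $t=1$ is immediate: then $T_l=\{x_l\}$, so there is a single generic simplicial polynomial $f_{i_1}$ with largest variable $x_l$. Being simplicial, its leading monomial (with respect to any term order, in particular lex) is a pure power $x_l^{d_l}$, so $f_{i_1}$ is $x_l$-dominated; by Lemma~\ref{thm:eliminiationcontainment} (applied with $x_m=x_l$, since $x_l$ is the largest variable of $f_{i_1}$), $f_{i_1}$ is a generator of $J_l$, hence $J_l$ is $x_l$-dominated.

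For the inductive step, I would look at the children of $x_l$ in the tree $T_l$. Let $x_{c_1},\ldots,x_{c_k}$ be the children of $x_l$ that lie in $T_l$, and let $T_{c_1},\ldots,T_{c_k}$ be the corresponding subtrees of $T_l$ rooted at them; together with $x_l$ itself these partition the vertex set of $T_l$. Distribute the $t$ simplicial polynomials accordingly: those with largest variable equal to $x_l$ (say there are $a_0\ge 0$ of them — note $a_0\ge1$ is \emph{not} guaranteed in general, which is a subtlety) and, for each $j$, the $a_j$ of them whose largest variable lies in $T_{c_j}$, with $a_0+a_1+\cdots+a_k=t$ and $|T_{c_j}|$ vertices in each subtree. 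Since $\sum_j |T_{c_j}| = t-1 = a_0 + \sum_j a_j - 1$, a counting argument (pigeonhole) shows that either $a_0\ge1$ and $a_j\ge|T_{c_j}|$ for every $j$, or — if some $a_j$ falls short — some other $a_{j'}$ exceeds $|T_{c_{j'}}|$, and an excess of simplicial polynomials in one subtree can be ``pushed down'' without harm since a weaker hypothesis still suffices. The cleanest route is to first argue that, after possibly discarding superfluous polynomials, we may assume $a_0\ge 1$ and $a_j=|T_{c_j}|$ for each $j$; then apply the induction hypothesis to each $T_{c_j}$ to conclude that $J_{c_j}$ is $x_{c_j}$-dominated.

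Now I would use the structure of Algorithm~\ref{alg:eliml}. By the Corollary following Lemma~\ref{thm:eliminiationcontainment}, $\elim{c_j+1}{J_{c_j}}\subset J_{x_l}$ — wait, more carefully: $x_l$ is the parent of each $x_{c_j}$, so Equation~\eqref{eq:gradeddec} gives $\elim{c_j+1}{J_{c_j}}\subset J_l$. Since $J_{c_j}$ is $x_{c_j}$-dominated, eliminating $x_{c_j}$ from it (via resultants / Gröbner basis) yields — here is where genericity and Lemma~\ref{thm:resultants} enter — a generic simplicial polynomial $g_j$ in the variables $X_{c_j}\setminus\{x_{c_j}\}\subset X_l$ whose largest variable is some vertex of $X_l$; collecting the contributions of the $a_0$ simplicial polynomials attached directly at $x_l$ together with these $g_1,\ldots,g_k$, I have inside $J_l$ a collection of generic simplicial polynomials all living in $\K[X_l]$ and, crucially, enough of them (at least as many as vertices of $X_l$ still carrying $x_l$, in a suitable sense) that successively taking resultants to eliminate the variables other than $x_l$ produces, by repeated application of Lemma~\ref{thm:resultants}, a generic simplicial polynomial whose only variable is $x_l$ — i.e. an $x_l$-dominated polynomial — which lies in (the ideal generated by) $J_l$ after appending its Gröbner basis. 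Hence $J_l$ is $x_l$-dominated.

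The main obstacle I anticipate is the bookkeeping in the inductive step: matching the count of simplicial polynomials to the branching structure of $T_l$, and in particular handling the case where no simplicial polynomial is attached directly at $x_l$ (so $a_0=0$) but an excess accumulates in one child subtree. One must track not just that each $J_{c_j}$ is $x_{c_j}$-dominated but that the eliminated polynomials $g_j$ genuinely remain \emph{generic} simplicial so that Lemma~\ref{thm:resultants} can be iterated without cancellation — this requires checking that the genericity of the original coefficients propagates through all the resultant computations, and that the largest variable of each intermediate polynomial stays inside $X_l$ (which follows from chordality and Lemma~\ref{thm:cliquecontainment}, but needs to be spelled out). I would isolate the pure pigeonhole/tree-counting fact as a separate sub-claim so the algebraic part of the argument stays clean.
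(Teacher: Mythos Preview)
Your approach differs from the paper's. The paper does not induct on $|T_l|$; instead it reduces to the case $T_l = T$ (equivalently $l=n-1$) by showing that every variable appearing in the relevant $f_{i_j}$'s but lying outside $T_l$ is strictly smaller than $x_l$, hence is never eliminated in the process that builds $J_l$. This reduction is a short topological argument about the elimination tree. The full-tree case is then delegated to a separate lemma (Lemma~\ref{thm:exactsquare}), which explicitly constructs, by iterated resultants, a family of $n-l$ generic simplicial polynomials in each $I_l$, one of which ultimately lands in $J_{n-1}$ and contains $x_{n-1}$.

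Your inductive scheme has a genuine gap beyond the bookkeeping you already flag. The induction hypothesis you carry is only that $J_{c_j}$ is $x_{c_j}$-dominated, i.e., that $J_{c_j}$ contains some polynomial with leading monomial $x_{c_j}^d$. From this alone you cannot conclude that eliminating $x_{c_j}$ produces a \emph{generic simplicial} polynomial $g_j$: Lemma~\ref{thm:resultants} requires two simplicial inputs, and an $x_{c_j}$-dominated polynomial need not be simplicial (its lower-order terms can be arbitrary). So the step ``$J_{c_j}$ is $x_{c_j}$-dominated $\Rightarrow$ elimination yields a generic simplicial $g_j$'' is unjustified, and without simpliciality you cannot iterate resultants inside $\K[X_l]$ to reach an $x_l$-dominated element. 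To make an induction on $|T_l|$ work you would have to strengthen the statement to produce a generic simplicial polynomial in $J_l$ containing $x_l$, not merely $x_l$-domination; but then, to propagate past each child, you need a second simplicial polynomial sharing the variable $x_{c_j}$ to take a resultant with, and the counting becomes exactly the delicate issue you anticipate. This is essentially why the paper's Lemma~\ref{thm:exactsquare} tracks a whole family of $n-l$ simplicial polynomials simultaneously rather than a single one.
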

  \begin{proof}
    We will show that we can find a simplicial polynomial $f_l\in J_l$ that contains $x_l$, which implies the desired result.
    Let's ignore all $f_t$ such that its largest variable is not in $T_l$.
    By  doing this, we get smaller ideals $J_l$, so it does not help to prove the statement.
    Let's also ignore all vertices which do not involve one of the remaining equations. 
    Let $S$ be the set of variables which are not in $T_l$. 
    As in any of the remaining equations the largest variable should be in $T_l$, then for any $x_i\in S$ there is some $x_j\in T_l$ with $x_j>x_i$.
    We will show that for any $x_i\in S$ we have $x_l>x_i$. 
    
    Assume by contradiction that it is not true, and let $x_i$ be the smallest counterexample.
    Let $x_p$ be the parent of $x_i$. Note that $x_p\notin S$ because of the minimality of $x_i$, and thus $x_p \in T_l$.
    As mentioned earlier, there is some $x_j\in T_l$ with $x_j>x_i$.
    As $x_j>x_i$ and $x_p$ is the parent of $x_i$, this means that $x_i$ is in the path of $T$ that joins $x_j$ and $x_p$.
    However $x_j,x_p\in T_l$ and $x_i\notin T_l$ so this contradicts that $T_l$ is connected.

    Thus, for any $x_i\in S$, we have that $x_i<x_l$. 
    This says that to obtain $J_l$ we don't need to eliminate any of the variables in $S$. 
    Therefore, we can ignore all variables in $S$.
    Thus, we can assume that $l=n-1$ and $T_l=T$.
    This reduces the problem the specific case considered in the following lemma.
  \end{proof}

  \begin{lem}\label{thm:exactsquare}
    Let $I=\langle f_1,\ldots,f_{n}\rangle$ such that $f_j$ is generic simplicial for all $j$.
 Then there is a simplicial polynomial $f\in I_{n-1}=J_{n-1}$.
  \end{lem}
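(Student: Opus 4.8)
The plan is to argue by strong induction on $n$, with Lemma~\ref{thm:resultants} doing the real work: resultants of generic simplicial polynomials are again generic simplicial, with predictable support. Before starting I would invoke the same reduction used inside the proof of Lemma~\ref{thm:getsimplicial}: we may assume every variable $x_i$ occurs in some generator (a variable that occurs in no generator makes the corresponding $J_l$ trivial and can just be deleted), and there is no harm in allowing $M\ge n$ rather than $M=n$ generic simplicial generators. So the statement I would actually induct on is: if $I=\langle q_1,\dots,q_M\rangle$ with $M\ge n$, every $q_j$ generic simplicial, and every variable appears in some $q_j$, then $I_{n-1}=J_{n-1}\subset\K[x_{n-1}]$ contains a simplicial polynomial. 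The base case $n=1$ is immediate: then $I_{n-1}=I$ is generated by nonconstant univariate polynomials, each of which is simplicial.

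For the inductive step I would eliminate the largest variable $x_0$, which is always a leaf of the elimination tree $T$. Since $x_0$ is maximal, the support of any generator containing $x_0$ is a clique containing $x_0$, hence is contained in $X_0$, so every generator involving $x_0$ lies in $J_0$. Let $g_1,\dots,g_e$ (with $e\ge 1$, as $x_0$ occurs somewhere) be these generators. If $e\ge 2$, I replace them by the $e-1$ resultants $\mathrm{Res}_{x_0}(g_1,g_i)$ for $2\le i\le e$: by Lemma~\ref{thm:resultants} each is generic simplicial, lies in $\elim{1}{J_0}\subset I_1$, and its support is the union of the supports of its two inputs minus $x_0$, so every variable of $\{x_1,\dots,x_{n-1}\}$ that was covered by some $g_i$ is still covered. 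Together with the $M-e$ generators free of $x_0$ (those in $K_1$ kept verbatim, those in $J_0$ again lying in $\elim{1}{J_0}$), this presents $I_1$ as an ideal containing $M-1\ge n-1$ generic simplicial polynomials in $x_1,\dots,x_{n-1}$ that cover every remaining variable, with graph inside $G|_{\{x_1,\dots,x_{n-1}\}}$, still chordal with the inherited perfect elimination ordering; the inductive hypothesis then finishes this case. If $e=1$, write $g=g_1$ and $J_0=\{g,h_1,\dots,h_c\}$ with the $h_i$ free of $x_0$. Simpliciality forces the top term of $g$ in $x_0$ to be a scalar times $x_0^d$, so $g$ is monic in $x_0$ over $\K[x_1,\dots,x_{n-1}]$; viewing $\K[x_0,\dots,x_{n-1}]/\langle g\rangle$ as a free module over $\K[x_1,\dots,x_{n-1}]$ with basis $1,x_0,\dots,x_0^{d-1}$ gives $\elim{1}{J_0}=\langle g,h_1,\dots,h_c\rangle\cap\K[x_1,\dots,x_{n-1}]=\langle h_1,\dots,h_c\rangle$, whence $I_1=\langle q_j:q_j\ne g\rangle$; after deleting any variable that appeared only in $g$ this is again an instance satisfying the hypotheses with one fewer variable, and the inductive hypothesis applies.

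The step I expect to be the real obstacle is reconciling this manipulation with the actual ideals produced by Algorithm~\ref{alg:eliml}: the splitting $I_l=J_l+K_{l+1}$ is defined through a chosen \emph{generating set}, not through the ideal alone, so one must check that the resultants and leftover generators genuinely enter, and are propagated correctly by, the algorithm's later steps, and that the extra polynomials in $I_l$ beyond those we track cannot cause $I_{n-1}$ to collapse. This is exactly where the decision to \emph{append} rather than replace the Gr\"obner basis in line~\ref{line:appendgroebner}, together with Lemma~\ref{thm:eliminiationcontainment} (a generator whose largest variable is $x_m$ is a generator of $J_m$), is needed, and where the coverage hypothesis should be maintained as an explicit invariant through the two cases above. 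Apart from this bookkeeping everything is formal, and genericity is used only via Lemma~\ref{thm:resultants}, to prevent cancellation of the extreme monomials of the resultants.
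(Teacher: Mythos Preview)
Your resultant-based strategy is exactly the one the paper uses, and you have correctly isolated the only real difficulty: the resultants you manufacture lie in the \emph{ideal} $\elim{1}{J_0}\subset I_1$, but they are not among the \emph{generators} of $I_1$ that Algorithm~\ref{alg:eliml} actually carries forward, so your inductive hypothesis---which is a statement about the algorithm run on a fixed generating set---cannot be invoked on them as written. (The worry about ``collapse'' is not the issue: extra generators in the algorithm's $J_{n-1}$ can only help. What is missing is the inclusion in the other direction, that your tracked polynomial lands in the algorithm's $J_{n-1}$ rather than merely in the ideal $I_{n-1}$.) Lemma~\ref{thm:eliminiationcontainment}, which you cite as the tool here, does not close this gap: it only moves \emph{generators} of $I_l$ into the right $J_m$, and resultants are not generators.

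The paper fixes this by changing both the induction variable and the invariant. It inducts on the elimination step $l$ rather than on $n$, and the invariant is: there are $n-l$ generic simplicial polynomials $f_1^l,\dots,f_{n-l}^l$ with $f_i^l\in J_j$ (ideal membership), where $x_j$ is the largest variable of $f_i^l$. The inductive step then needs only ideal-level facts: if $f_0^l,f_{i+1}^l\in J_l$ both have largest variable $x_l$, their resultant lies in $\elim{l+1}{J_l}$, and Equation~\eqref{eq:gradeddec} gives $\elim{l+1}{J_l}\subset J_p$ for $x_p$ the parent of $x_l$; since the resultant's variables lie in $X_l\setminus\{x_l\}\subset X_p$ (Lemma~\ref{thm:cliquecontainment}), one walks down the elimination tree, using $\elim{p+1}{J_p}\subset J_r$ repeatedly, until reaching $J_j$ for the resultant's own largest variable $x_j$. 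Lemma~\ref{thm:eliminiationcontainment} is used only at the base $l=0$ to seat the original generators. Your argument can be repaired in the same way---replace ``$q_i$ is a generator'' by ``$q_i\in J_{j(i)}$'' in the hypothesis and use this tree-walk after each resultant---after which your case split $e\ge2$ versus $e=1$ becomes unnecessary.
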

  \begin{proof}
    We will prove the more general result: for each $l$ there exist $f_1^l,f_2^l,\ldots,f_{n-l}^l\in I_l$ which are all simplicial and generic.
    Moreover, we will show that if $x_j$ denotes the largest variable of some $f_i^l$, then $f_i^l\in J_j$.
    Note that as $x_j\leq x_l$ then $J_j\subset I_j\subset I_l$.
 We will explicitly construct such polynomials.

    Such construction is very similar to the chordal elimination algorithm.
 The only difference is that instead of elimination ideals we use resultants.

    Initially, we assign $f^0_i = f_i$ for $1\leq i\leq n$.
 Inductively, we construct the next polynomials:
    \begin{align*}
      f^{l+1}_i = \begin{cases}
        \mathrm{Res}_{x_l}(f^l_0,f^l_{i+1}) &\mbox{ if $f^l_{i+1}$  involves $x_l$}\\
        f^{l}_{i+1} &\mbox{ if $f^l_{i+1}$ does not involve $x_l$}\\
      \end{cases}
    \end{align*}
    for $1\leq i \leq n-l$, where we assume that $f^l_0$ involves $x_l$, possibly after rearranging them.
    In the event that no $f^l_i$ involves $x_l$, then we can ignore such variable.
    Notice that Lemma~\ref{thm:resultants} tells us that $f^l_i$ are all generic and simplicial.

    We need to show that $f^l_i\in J_j$, where $x_j$ is the largest variable of $f^l_i$.
    We will prove this by induction on $l$.

    The base case is $l=0$, where $f^0_i=f_i$ are generators of $I$,
    and thus Lemma~\ref{thm:eliminiationcontainment} says that $f_i\in J_j$.
    
    Assume that the hypothesis holds for some $l$ and consider some $f:=f_i^{l+1}$. 
    Let $x_j$ be its largest variable.
    Consider first the case where $f=f^l_{i+1}$.
    By  the induction hypothesis, $f\in J_j$ and we are done.
  
    Now consider the case that $f = \mathrm{Res}_{x_l}(f_0^l,f_{i+1}^l)$.
    In this case the largest variable of both $f_0^l,f_{i+1}^l$ is $x_l$ and thus, using the induction hypothesis, both of them lie in $J_l$.
    Let $x_p$ be the parent of $x_l$.
  Using  Equation~\eqref{eq:gradeddec} we get $f\in \elim{l+1}{J_l}\subset J_p$.
  Let's see now that $x_j\leq x_p$.
  The reason is that $x_j\in X_p$, as $f\in \K[X_p]$ and $x_j$ is its largest variable.
  Thus we found an $x_p$ with $x_j\leq x_p<x_l$ and $f\in J_p$.
  If $x_j=x_p$, we are done.
  Otherwise, if $x_j<x_p$, let $x_r$ be the parent of $x_p$.
  As $f$ does not involve $x_p$, then $f\in \elim{p+1}{J_p}\subset J_r$.
  In the same way as before we get that $x_j\leq x_r<x_p$ and $f\in J_r$.
  Note that we can repeat this argument again, until we get that $f\in J_j$.
 This concludes the induction.
  \end{proof}

Lemma~\ref{thm:getsimplicial} can be used to show the domination condition and thus to certify that chordal elimination succeeds.
 In particular, we can do this in the special case that all polynomials are simplicial, as we show now.

  \begin{thm}\label{thm:exactsimplicial}
    Let $I=\langle f_1,\ldots,f_s\rangle$ be an ideal such that for each $1\leq i \leq s$, $f_i$ is  generic simplicial.
  Then chordal elimination succeeds.
  \end{thm}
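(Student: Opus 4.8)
The plan is to reduce the statement to the domination condition (Definition~\ref{defn:dominationcondition}) and then feed it, level by level, into Lemma~\ref{thm:getsimplicial}. By Lemma~\ref{thm:exactelim} it is enough to show that $J_l$ is $x_l$-dominated for each $l$, with one caveat to dispose of first: we may ignore any level $l$ at which $x_l$ occurs in no generator of $J_l$. In that case $K_{l+1}$ also omits $x_l$, so the whole ideal $I_l$ omits $x_l$; hence the update $I_{l+1}=\elim{l+1}{J_l}+K_{l+1}$ does not change the ideal, $\V(I_l)$ is a cylinder in the $x_l$-direction, and projecting the inductive identity $\V(I_l)=\V(\elim{l}{I})$ yields $\V(I_{l+1})=\V(\elim{l+1}{I})$ for free. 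So it remains to handle a level $l$ at which $x_l$ genuinely appears in $J_l$.

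For such an $l$ I want a subtree $T_l$ of the elimination tree $T$ whose minimal vertex is $x_l$ and to which at least $|T_l|$ of the generic simplicial generators are attached, in the sense that $x(f_i)\in T_l$; Lemma~\ref{thm:getsimplicial} then gives that $J_l$ is $x_l$-dominated. Observe that $x_l$ is never created during the elimination, so some original generator $f_i$ involves $x_l$, and since the variables of $f_i$ span a clique containing $x_l$ (and every edge of a chordal graph joins a vertex to an ancestor in its elimination tree) the largest variable $x(f_i)$ is $x_l$ or a proper descendant of $x_l$. If $x(f_i)=x_l$ we may take $T_l=\{x_l\}$ --- indeed $f_i\in J_l$ by Lemma~\ref{thm:eliminiationcontainment} and is already $x_l$-dominated, which is Corollary~\ref{thm:exactsimplicialeachl}. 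In general I would start from the subtree of all descendants of $x_l$ in $T$ and prune every leaf other than $x_l$ that is not the largest variable of any generator; each such pruning lowers $|T_l|$ by one without lowering $\#\{i:x(f_i)\in T_l\}$, and the observation just made prevents the pruning from terminating at a singleton with no attached generator. What remains is then the counting inequality $|T_l|\le\#\{i:x(f_i)\in T_l\}$ for the pruned subtree.

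I expect that inequality to be the main obstacle. I would attack it through the ``discard the useless variables and reduce to the square case'' mechanism already used in the proofs of Lemma~\ref{thm:getsimplicial} and Lemma~\ref{thm:exactsquare}: running the resultant analogue of chordal elimination of Lemma~\ref{thm:exactsquare} keeps every intermediate polynomial generic simplicial by Lemma~\ref{thm:resultants} and, by Equation~\eqref{eq:gradeddec} ($\elim{l+1}{J_l}\subset J_p$), places each of them in the ideal $J_j$ indexed by its own largest variable $x_j$. Tracking these polynomials along the subtree of descendants of $x_l$ should show that whenever $x_l$ survives into $J_l$ there are already at least as many generic simplicial generators funneling through that subtree as it has non-useless vertices --- exactly the inequality needed.

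Finally I would close the induction on $l$ proving $\V(I_l)=\V(\elim{l}{I})$: the base $l=0$ is trivial; at a level where $J_l$ is $x_l$-dominated we have $\V(W_{l+1})=\emptyset$ as in the proof of Lemma~\ref{thm:exactelim}, so Lemma~\ref{thm:elim} applied to $I_l=J_l+K_{l+1}$ gives $\V(I_{l+1})=\overline{\pi(\V(I_l))}$, which together with the inductive hypothesis equals $\V(\elim{l+1}{I})$; and at a level where $x_l$ is absent from $J_l$ the cylinder argument of the first paragraph does the same. Hence $I_l\equal\elim{l}{I}$ for all $l$, which is exactly what it means for chordal elimination to succeed. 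The only genuinely new ingredient beyond the lemmas already proved is the counting inequality of the third paragraph.
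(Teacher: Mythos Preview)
Your overall architecture is right and matches the paper's: split into the case where $x_l$ is absent from $J_l$ (your cylinder argument is fine; any generator of $I_l$ involving $x_l$ has all its variables in a clique below $x_l$, hence lies in $\K[X_l]$ and so in $J_l$) and the case where $x_l$ is present, which you want to feed into Lemma~\ref{thm:getsimplicial}. The genuine gap is exactly where you flag it yourself: the counting inequality $|T_l|\le\#\{i:x(f_i)\in T_l\}$ for your pruned subtree. Your pruning only guarantees that every \emph{leaf} of the pruned tree carries a generator; interior vertices may carry none, so after pruning the number of attached generators can still be as small as the number of leaves, far short of $|T_l|$. The observation that some $f_i$ with $x(f_i)$ a descendant of $x_l$ exists prevents the tree from collapsing to the empty set, but it does not bound the size of the surviving path from $x(f_i)$ down to $x_l$. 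Your sketch of ``tracking resultants along the subtree'' is not an argument: nothing you wrote shows why the number of generic simplicial polynomials funneling into $J_l$ must dominate the number of vertices.

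The paper closes this gap by proving the contrapositive instead of the direct inequality. It takes $T_l$ to be the \emph{full} subtree of descendants of $x_l$, and argues: if the counting $t_l\le\#\{i:x(f_i)\in T_l\}$ fails at some $x_l$, pick the largest such $x_l$. By maximality, each child subtree $T_{s_i}$ satisfies its counting with equality (and no generator has $x(f_j)=x_l$), so there are exactly $t_{s_i}$ generators attached to $T_{s_i}$; eliminating the $t_{s_i}$ variables of $T_{s_i}$ from these $t_{s_i}$ equations leaves nothing, hence $\elim{s_i+1}{J_{s_i}}=0$, and since every generator involving $x_l$ lives in some $T_{s_i}$, the variable $x_l$ does not appear in $I_l$ at all. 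That is precisely your ``cylinder'' case, so the level can be skipped. In other words, the paper shows that whenever $x_l$ genuinely appears in $J_l$ the counting inequality for the full descendant tree already holds, which is exactly what you need but did not prove.
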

  \begin{proof}
 For each $l$, let $T_l$ be the largest subtree of $T$ with minimal vertex $x_l$.
 Equivalently, $T_l$ consists of all the descendants of $x_l$.
 Let $t_l:=|T_l|$ and let $x(f_j)$ denote the largest variable of $f_j$.
 If for all $x_l$ there are at least $ t_l$ generators $f_j$ with $x(f_j)\in T_l$ then Lemma~\ref{thm:getsimplicial} implies the domination condition and we are done.
 Otherwise, let $x_l$ be the largest where this fails.
 The maximality of $x_l$ guarantees that elimination succeeds up to such point, i.e., $I_m=\elim{m}{I}$ for all $x_m\geq x_l$.
 We claim that no equation of $I_l$ involves $x_l$ and thus we can ignore it.
 Proving this claim will conclude the proof.

    If $x_l$ is a leaf of $T$, then $t_l=1$, which means that no generator of $I$ involves $x_l$.
 Otherwise, let $x_{s_1},\ldots,x_{s_r}$ be its children.
 Note that $T_l =\{x_l\} \cup T_{s_1}\cup \ldots \cup T_{s_r}$.
  We know that there are at least $t_{s_i}$ generators with $x(f_j)\in T_{s_i}$ for each $s_i$, and such bound has to be exact as $x_l$ does not have such property.
 Thus for each $s_i$ there are exactly $t_{s_i}$ generators with  $x(f_j)\in T_{s_i}$, and there is no generator with  $x(f_j)=x_l$.
 Then, for each $s_i$, when we eliminate all the $t_{s_i}$ variables in $T_{s_i}$ in the corresponding $t_{s_i}$ equations we must get the zero ideal, i.e., $\elim{s_i+1}{J_{s_i}}=0$.
 On the other hand, as there is no generator with $x(f_j)=x_l$, then all generators that involve $x_l$ are in some $T_{s_i}$.
 But we observed that the $l$-th elimination ideal in each $T_{s_i}$ is zero,  
 so that $I_l$ does not involve $x_l$, as we wanted.
\end{proof}

\section{Elimination ideals of cliques}\label{s:cliqueselim}

\begin{notation}
We will write  $I\equal J$ whenever we have $\V(I)=\V(J)$.
\end{notation}

Algorithm~\ref{alg:eliml} allows us to compute (or bound) the elimination ideals $I\cap\K[x_l,\ldots,x_{n-1}]$.
In this section we will show that once we compute such ideals, we can also compute many other elimination ideals.
 In particular, we will compute the elimination ideals of the maximal cliques of $G$.
 
 We recall the definition of the cliques $X_l$ from Equation~\eqref{eq:cliqueXl}.
 Let $H_l:=I\cap \K[X_l]$ be the corresponding elimination ideal.
 As any clique is contained in some $X_l$, we can restrict our attention to computing $H_l$.
 In particular, all maximal cliques of the graph are of the form $X_l$ for some $l$.

 The motivation behind these clique elimination ideals is to find  sparse generators of the ideal that are the closest to a Gr\"obner basis.
 Lex Gr\"obner bases can be very large, and thus finding a sparse approximation to them might be much faster as will be seen in Section~\ref{s:applications}.
 We attempt to find such ``optimal'' sparse representation by using chordal elimination.

 Specifically, let $gb_{H_l}$ denote a lex Gr\"obner basis of each $H_l$. 
 We argue that the concatenation  $\bigcup_l gb_{H_l}$ constitutes such closest sparse representation.
 In particular, the following proposition says that if there exists a lex Gr\"obner basis of $I$ that preserves the structure, then  $\bigcup_l gb_{H_l}$ is also one.

 \begin{prop}~\label{thm:gbIimpliesgbcliques}
   Let $I$ be an ideal with graph $G$ and let $gb$ be a lex Gr\"obner basis. 
   Let $H_l$ denote the clique elimination ideals, and let $gb_{H_l}$ be the corresponding lex Gr\"obner bases.
   If $gb$ preserves the graph structure, i.e., $G(gb)\subset G$, then $\bigcup_l gb_{H_l}$ is a lex Gr\"obner basis of $I$.
 \end{prop}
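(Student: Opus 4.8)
The plan is to verify the two defining properties of a Gröbner basis for the set $\bigcup_l gb_{H_l}$: that it generates $I$, and that its leading terms generate $\initial(I)$. Since $gb$ is assumed to be a lex Gröbner basis of $I$ with $G(gb)\subset G$, every element $g\in gb$ involves only variables forming a clique of $G$, hence is contained in $\K[X_l]$ for some $l$ (every clique lies inside some $X_l$). Therefore $g\in I\cap\K[X_l] = H_l$. This immediately shows $gb\subset\bigcup_l H_l\subset I$, so the union $\bigcup_l gb_{H_l}$ generates an ideal containing $gb$, hence all of $I$; and it is clearly contained in $I$ since each $H_l\subset I$. So the union generates $I$.

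For the Gröbner property, I would argue that the leading terms of $\bigcup_l gb_{H_l}$ already generate $\initial(I)$. Take any $f\in I$ and consider its leading monomial $\lm(f)$. Because $gb$ is a Gröbner basis, $\lm(f)$ is divisible by $\lm(g)$ for some $g\in gb$. By the previous paragraph, $g\in H_l$ for the appropriate $l$, so $g$ lies in the ideal $H_l$ for which $gb_{H_l}$ is a lex Gröbner basis; hence $\lm(g)$ is divisible by $\lm(h)$ for some $h\in gb_{H_l}$. Chaining the divisibilities, $\lm(f)$ is divisible by $\lm(h)$ with $h\in\bigcup_l gb_{H_l}$. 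Since $f\in I$ was arbitrary, $\langle \lm(h): h\in\bigcup_l gb_{H_l}\rangle = \initial(I)$, and combined with the fact that the set lies in $I$ and generates it, this is exactly the statement that $\bigcup_l gb_{H_l}$ is a lex Gröbner basis of $I$.

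The one point that requires a little care — and the only place I expect any friction — is the very first claim: that $G(gb)\subset G$ forces each $g\in gb$ to lie in $\K[X_l]$ for some single $l$. This uses that the set of variables occurring in $g$ is a clique of $G(gb)$, hence of $G$ (since we form a clique on the variables of each generator), and that in a chordal graph with the given perfect elimination ordering every clique is contained in some $X_l$ — namely, if $x_l$ is the largest-indexed, i.e. $\K$-smallest... actually the \emph{largest} vertex (smallest index) of the clique $C$, then $C\subset X_l$ by Definition~\ref{defn:perfectelimination} and the maximality of $X_l$ among cliques with top vertex $x_l$. I would state this clique-containment fact explicitly (it is essentially the observation already made after Definition~\ref{defn:perfectelimination} that there are at most $n$ maximal cliques, each contained in some $X_l$), and then the rest of the argument is the routine two-step divisibility chase above.
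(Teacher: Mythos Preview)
Your proposal is correct and follows essentially the same approach as the paper's proof: both hinge on the observation that each $g\in gb$ has variables forming a clique of $G$, hence lies in some $\K[X_l]$ and therefore in $H_l$, from which the initial-ideal containment $\initial(I)\subset\initial(\bigcup_l gb_{H_l})$ follows. The paper's version is a bit more streamlined---it skips the separate verification that $\bigcup_l gb_{H_l}$ generates $I$ (this is automatic once the leading-term condition and containment in $I$ are established) and reduces directly to monomials $m=\lm(p)$ with $p\in gb$ rather than passing through an arbitrary $f\in I$---but the substance is identical, including the clique-containment fact you flagged.
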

 \begin{proof}
   It is clear that $gb_{H_l}\subset H_l \subset I$.
   Let $m\in \initial(I)$ be some monomial, we just need to show that $m\in \initial(\bigcup_l g_{H_l})$.
   As $\initial(I)=\initial(gb)$, we can restrict $m$ to be the leading monomial $m=\lm(p)$ of some $p\in gb$.
   By the assumption on $gb$, the variables of $p$ are in some clique $X_l$ of $G$.
   Thus, $p\in H_l$ so that $m=\lm(p)\in \initial(H_l)=\initial(gb_l)$.
   This concludes the proof.
 \end{proof}

Before computing $H_l$, we will show how to obtain elimination ideals of simpler sets.
These sets are determined by the elimination tree of the graph, and we will find the corresponding elimination ideals in Section~\ref{s:lowersets}.
After that we will come back to computing the clique elimination ideals in Section~\ref{s:elimcliques}.
Finally, we will elaborate more on the relationship between lex Gr\"obner bases and clique elimination ideals in Section~\ref{s:lexgroebner}.

\subsection{Elimination ideals of lower sets}\label{s:lowersets}

We will show now how to find elimination ideals of some simple sets of the graph, which depend on the elimination tree.
 To do so, we recall that in chordal elimination we decompose $I_l = J_l + K_{l+1}$ which allows us to compute next $I_{l+1} = \elim{l+1}{J_l} + K_{l+1}$.
Observe that 
\begin{align*}
  I_{l} &= J_l + K_{l+1} \\
  &= {J_l}+ \elim{l+1}{J_l} + K_{l+1} \\
  &= {J_l}+ I_{l+1} \\
  &= {J_l}+ J_{l+1} + K_{l+2} \\
  &= {J_l}+ {J}_{l+1} +\elim{l+2}{J_{l+1}} + K_{l+2}.
\end{align*}
Continuing this way we conclude:
\begin{align}\label{eq:decomposeJ}
  I_{l}= {J_l}+ {J}_{l+1} +\cdots +{J}_{n-1}.
\end{align}
We will obtain a similar summation formula for other elimination ideals apart from $I_l$.
 
 Consider again the elimination tree $T$.
 We present another characterization of it.

\begin{prop}\label{thm:dagreduction}
  Consider the directed acyclic graph (DAG) obtained by orienting the edges of $G$ with the order of its vertices.
 Then the elimination tree $T$ corresponds to the transitive reduction of such DAG.
 Equivalently, $T$ is the Hasse diagram of the poset associated to the DAG.
\end{prop}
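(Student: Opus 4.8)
The plan is to show the two directions of the equivalence between the elimination tree $T$ (as given in Definition~\ref{defn:eliminationtree}) and the transitive reduction of the DAG $D$ obtained by orienting each edge of $G$ from its larger endpoint to its smaller one. Recall the transitive reduction of a DAG is the unique minimal subgraph with the same reachability relation, and for a DAG this is exactly the Hasse diagram of the reachability poset; so it suffices to prove $T$ equals the transitive reduction of $D$, and the second sentence of the proposition follows for free.

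First I would check that every arc of $T$ is an arc of $D$ that is \emph{not} implied by transitivity, i.e., that it belongs to the transitive reduction. If $x_l \to x_p$ is an arc of $T$, then by definition $x_p$ is adjacent to $x_l$ in $G$ with $x_p < x_l$, so $x_l \to x_p \in D$. Suppose this arc were redundant: then there is a directed path $x_l \to x_q \to \cdots \to x_p$ in $D$ of length $\geq 2$, so in particular $x_l > x_q > x_p$ and $x_q$ is adjacent to $x_l$ in $G$. But then $x_q \in X_l \setminus \{x_l\}$, and since $G|_{X_l}$ is a clique (perfect elimination ordering), $x_q$ and $x_p$ are adjacent in $G$ — wait, this needs $x_p \in X_l$, which holds because $x_p$ is adjacent to $x_l$. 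So $X_l \setminus \{x_l\}$ is a clique containing both $x_q$ and $x_p$, hence $x_q \sim x_p$; but that does not immediately contradict anything. The real point is that $x_q$ is adjacent to $x_l$ with $x_p < x_q < x_l$, contradicting that $x_p$ is the \emph{largest} vertex adjacent to $x_l$ that is smaller than $x_l$ — which is precisely how $T$ chose the parent. So the arc $x_l \to x_p$ is not redundant, and $T$ is a subgraph of the transitive reduction of $D$.

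Next I would show the reverse containment, equivalently that $T$ and $D$ have the same reachability relation — since the transitive reduction is the \emph{unique} minimal subgraph with the reachability of $D$, and $T$ is already a subgraph of it, showing $T$ generates all of $D$'s reachability forces equality. Here the key step is: if $x_i \sim x_j$ in $G$ with $x_j < x_i$, then $x_j$ is reachable from $x_i$ in $T$. Let $x_p$ be the parent of $x_i$; if $x_p = x_j$ we are done, otherwise $x_j < x_p < x_i$ and, since $X_i$ is a clique containing $x_i, x_j, x_p$, we get $x_j \sim x_p$ in $G$ with $x_j < x_p$; now induct on the vertex (which has strictly decreased from $x_i$ to $x_p$) to reach $x_j$ from $x_p$ in $T$, then prepend the arc $x_i \to x_p$. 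This shows every arc of $D$ lies in the reachability closure of $T$; composing along paths, the reachability relations of $T$ and $D$ coincide.

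The main obstacle is the middle step — verifying that the parent arc of $T$ is never transitively redundant — because it is exactly the place where chordality (the perfect elimination ordering, via $G|_{X_l}$ being a clique) is essential: without chordality the "largest smaller neighbor" need not see the other smaller neighbors, and the argument collapses. Everything else is bookkeeping about transitive reductions of DAGs plus a straightforward downward induction on vertices. I would state the clique fact as a one-line invocation of Definition~\ref{defn:perfectelimination} and keep the induction explicit but brief.
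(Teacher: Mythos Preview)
Your proposal is correct and follows essentially the same route as the paper: the reachability step (every arc of $D$ is realized by a path in $T$ via descent through parents using the clique property of $X_i$) is exactly the paper's argument, which packages the clique fact as Lemma~\ref{thm:cliquecontainment}. The only cosmetic difference is in the non-redundancy step, where the paper dispatches it in one line by observing that $T$ is a tree and hence already reduced, whereas you argue directly from the maximality of the parent; both are fine.
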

\begin{proof}
  As $T$ is a tree, it is reduced, and thus we just need to show that any arc from the DAG corresponds to a path of $T$.
  Let  $x_i\to x_j$ be an arc in the DAG, and observe that being an arc is equivalent to $x_j\in X_i$.
  Let $x_p$ be the parent of $x_i$.
  Then Lemma~\ref{thm:cliquecontainment} implies $x_j\in X_p$, and thus $x_p\to x_j$ is in the DAG.
  Similarly if $x_r$ is the parent of $x_p$ then $x_r\to x_j$  is another arc.
  By continuing this way we find a path $x_i,x_p,x_r,\ldots$ in $T$ that connects $x_i\to x_j$, proving that $T$ is indeed the transitive reduction.
\end{proof}

\begin{defn}
We say a set of variables $\Lambda$ is a \emph{lower set} if $T|_\Lambda$ is also a tree rooted in $x_{n-1}$.
Equivalently, $\Lambda$ is a lower set of the poset associated to the DAG of Proposition~\ref{thm:dagreduction}.
\end{defn}

Observe that $\{x_l,x_{l+1},\ldots,x_{n-1}\}$ is a lower set, as when we remove $x_0,x_1,\ldots$ we are pruning some leaf of $T$.
 The following lemma gives a simple property of these lower sets.

 \begin{lem}\label{thm:dagbranch}
  If $X$ is a set of variables such that $G|_X$ is a clique, then $T|_X$ is contained in some branch of $T$.
  In particular, if $x_l>x_m$ are adjacent, then any lower set containing $x_l$ must also contain $x_m$.
 \end{lem}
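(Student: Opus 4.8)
The plan is to prove the first assertion by showing that, writing the elements of $X$ in decreasing order $x_{i_1}>x_{i_2}>\cdots>x_{i_k}$, every $x_{i_j}$ with $j\geq 2$ is an ancestor of $x_{i_1}$ in $T$ (i.e., lies on the unique directed path in $T$ from $x_{i_1}$ to the root $x_{n-1}$). Since ``a branch of $T$'' means exactly the set consisting of a vertex together with all its ancestors, this places all of $X$ on the branch through $x_{i_1}$, and hence $T|_X$ inside it.

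The quickest route uses Proposition~\ref{thm:dagreduction}. Each consecutive pair $x_{i_j},x_{i_{j+1}}$ is an edge of $G$ with $x_{i_j}>x_{i_{j+1}}$, hence an arc of the DAG obtained by orienting $G$ by the vertex order; since $T$ is the transitive reduction of that DAG, this arc corresponds to a directed path in $T$ from $x_{i_j}$ down to $x_{i_{j+1}}$. Concatenating these paths for $j=1,\ldots,k-1$ produces a single descending path in $T$ from $x_{i_1}$ through all of $X$ to $x_{i_k}$, and prolonging it from $x_{i_k}$ to $x_{n-1}$ gives the branch through $x_{i_1}$; it contains $X$, so $T|_X$ is contained in one branch. (If one prefers an argument not routed through the poset picture, the same fact follows by induction on the top element $x_{i_1}$: letting $x_p$ be the parent of $x_{i_1}$, every element of $X\setminus\{x_{i_1}\}$ is a neighbour of $x_{i_1}$ smaller than it, so $X\setminus\{x_{i_1}\}\subseteq X_{i_1}\setminus\{x_{i_1}\}\subseteq X_p$ by Lemma~\ref{thm:cliquecontainment}; then $Y:=(X\setminus\{x_{i_1}\})\cup\{x_p\}$ is a clique whose top element $x_p$ is strictly below $x_{i_1}$, so induction gives that $X\setminus\{x_{i_1}\}$ consists of ancestors of $x_p$, hence of $x_{i_1}$. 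The base case is $|X|\leq 1$.)

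For the second assertion, apply the first part to the clique $X=\{x_l,x_m\}$ (equivalently, the single-arc case of Proposition~\ref{thm:dagreduction}): $x_m$ is an ancestor of $x_l$, i.e., $x_m$ lies on the unique path of $T$ from $x_l$ to the root $x_{n-1}$. Now if $\Lambda$ is a lower set containing $x_l$, then $T|_\Lambda$ is a tree rooted at $x_{n-1}$; being connected and containing both $x_l$ and $x_{n-1}$, it must contain every vertex on the $x_l$-to-root path of $T$, and in particular $x_m\in\Lambda$.

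I expect no serious obstacle here: the only care needed is the bookkeeping around ``parent $=$ largest smaller neighbour'' and the observation that the concatenated paths in $T$ descend monotonically toward the root without backtracking, which is automatic once phrased in terms of ancestors; in the inductive alternative one must also not forget the base case and the strict decrease $x_p<x_{i_1}$ that guarantees termination.
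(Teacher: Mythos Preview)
Your proof is correct and follows essentially the same approach as the paper: the paper observes that the DAG induces a poset in which the clique $X$ is totally ordered, hence is a chain in the Hasse diagram $T$; your argument via Proposition~\ref{thm:dagreduction} and concatenated paths is exactly this, spelled out in more detail. Your treatment of the second part (that a lower set containing $x_l$ must contain the whole $x_l$-to-root path, hence $x_m$) fills in what the paper leaves implicit, and the alternative inductive route via Lemma~\ref{thm:cliquecontainment} is a nice self-contained variant that avoids appealing to the transitive-reduction characterization.
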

 \begin{proof}
  For the first part, note that the DAG induces a poset on the vertices, and restricted to $X$ we get a linear order. 
  Thus, in the Hasse diagram $X$ must be part of a chain (branch).
  The second part follows by considering the clique $X=\{x_l,x_m\}$ and using the previous result.
 \end{proof}

 The next lemma tells us how to obtain the elimination ideals of any lower set.

\begin{lem}\label{thm:elimlower}
  Let $I$ be an ideal, let $V=\V(I)$ and assume that the domination condition holds for chordal elimination. 
 Let $\Lambda\subset \{x_0,\ldots,x_{n-1}\}$ be a lower set.
 Then,
  \begin{align*}
    I \cap \K[\Lambda] \equal  \sum_{x_i\in \Lambda} {J}_i  
  \end{align*}
  and the corresponding variety is $\pi_{\Lambda}(V)$, where $\pi_\Lambda: \K^n \to \K^\Lambda$ is the projection onto $\Lambda$. 
\end{lem}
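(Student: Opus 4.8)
The plan is to generalize the summation formula~\eqref{eq:decomposeJ} from the special lower sets $\{x_l,\ldots,x_{n-1}\}$ to an arbitrary lower set $\Lambda$, and then to identify the variety of $\sum_{x_i\in\Lambda} J_i$ with the projection $\pi_\Lambda(V)$. First I would set $\Lambda^c = \{x_0,\ldots,x_{n-1}\}\setminus\Lambda$, and observe that since $\Lambda$ is a lower set, its complement is an ``upper set'': it is closed under taking descendants in $T$, hence if $x_i\in\Lambda^c$ then every variable in $X_i\setminus\{x_i\}$ that is the parent (and all further ancestors) still lies wherever it lies, but more importantly, by Lemma~\ref{thm:dagbranch}, if $x_i\in\Lambda^c$ is adjacent to some $x_m<x_i$ then $x_m$ need not be in $\Lambda$; the key point I actually want is the contrapositive direction of Lemma~\ref{thm:dagbranch}: the generators of $J_i$ for $x_i\in\Lambda^c$ all involve the variable $x_i\notin\Lambda$, so they contribute nothing to $\K[\Lambda]$.

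The core of the argument is the inclusion $\sum_{x_i\in\Lambda} J_i \subset I\cap\K[\Lambda]$ together with the reverse inclusion of varieties. For the first, note $J_i\subset I_i\subset I$, and each $J_i$ with $x_i\in\Lambda$ has all its variables in $X_i$; since $x_i$ is the largest variable of any generator of $J_i$ that involves $x_i$, and $\Lambda$ is a lower set, Lemma~\ref{thm:dagbranch} applied to the clique $X_i$ gives $X_i\subset\Lambda$ (every variable of $X_i$ is $\leq x_i$ and adjacent to it, hence forced into $\Lambda$). Therefore $J_i\subset\K[\Lambda]$, giving $\sum_{x_i\in\Lambda}J_i\subset I\cap\K[\Lambda]$, hence $\V(I\cap\K[\Lambda])\subset\V(\sum_{x_i\in\Lambda}J_i)$. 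For the variety statement, I would argue that under the domination condition Lemma~\ref{thm:exactelim} gives that at each stage the elimination is exact with variety $\pi_l(V)$; I then want to peel off the variables of $\Lambda^c$ in decreasing order of index (largest first among $\Lambda^c$, which is legitimate since $\Lambda^c$ is an upper set so its largest element is a leaf of $T|_{\Lambda^c\cup\text{(already removed)}}$ at each step) exactly as in the derivation of~\eqref{eq:decomposeJ}, showing $\sum_{x_i\in\Lambda}J_i = \elim{}{}$-type successive eliminations that, because the domination condition persists (each $J_i$ is still $x_i$-dominated), are all exact. Concretely I expect $\V(\sum_{x_i\in\Lambda}J_i) = \pi_\Lambda(\V(I))$ to follow by the same Closure-Theorem / Lemma~\ref{thm:elim} bookkeeping used in Theorem~\ref{thm:eliml}, now applied to eliminating the set $\Lambda^c$ rather than a prefix.

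The main obstacle I anticipate is the ordering issue: the derivation of~\eqref{eq:decomposeJ} crucially uses that we eliminate variables one at a time in the global order $x_0,x_1,\ldots$, and that at step $l$ the decomposition $I_l = J_l + K_{l+1}$ is available. For a general lower set, I must eliminate the variables of $\Lambda^c$, which are interleaved (in the global index order) with variables of $\Lambda$, yet I can only legitimately use the $J_i$'s and $K_i$'s produced by Algorithm~\ref{alg:eliml} in its fixed order. The resolution will be to show that eliminating $x_i$ for $x_i\in\Lambda$ ``does not interfere'' — i.e. that $\elim{i+1}{J_i}$ for $x_i\in\Lambda$ stays inside $\K[\Lambda]$ (by the $X_i\subset\Lambda$ fact above) and hence can be carried along harmlessly — so that the net effect of running the whole algorithm and then restricting to $\K[\Lambda]$ equals the net effect of only eliminating $\Lambda^c$. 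Making this ``commuting'' precise, using Lemma~\ref{thm:cliquecontainment}, Lemma~\ref{thm:dagbranch}, and Equation~\eqref{eq:gradeddec} to track that the relevant generators land in the right $J_p$'s, is where the real work lies; the variety identification with $\pi_\Lambda(V)$ is then a formal consequence of the domination condition via Lemma~\ref{thm:exactelim}.
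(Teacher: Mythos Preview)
Your plan has the right ingredients --- the containment $\sum_{x_i\in\Lambda}J_i\subset I\cap\K[\Lambda]$ via the observation $X_i\subset\Lambda$ for $x_i\in\Lambda$ (which does follow from Lemma~\ref{thm:dagbranch}) is correct, and you correctly identify Equation~\eqref{eq:gradeddec} and domination as the tools that should drive the argument. But the ``commuting'' step you flag as ``where the real work lies'' is indeed the gap, and your sketch does not close it. The difficulty is that ``eliminate $x_m$ from an ideal in $\K[\Lambda']$'' is only well-posed once $x_m$ is the \emph{largest} variable in the ambient term order; otherwise Lemma~\ref{thm:elim} and the Closure Theorem do not directly apply. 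Your idea of running the full algorithm in the global order and then arguing that the $J_i$ for $x_i\in\Lambda^c$ are harmlessly absorbed is not obviously the same as eliminating the variables of $\Lambda^c$, and making it so requires precisely a reordering you have not supplied. (A side remark: your claim that the generators of $J_i$ for $x_i\in\Lambda^c$ all involve $x_i$ is false --- the appended Gr\"obner basis elements in $\elim{i+1}{J_i}$ need not --- though you seem to abandon that line anyway.)

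The paper handles this with a clean downward induction on $|\Lambda|$, holding the largest element $x_l$ of $\Lambda$ fixed. The base case is the maximal such lower set $\Lambda=\{x_l,\ldots,x_{n-1}\}$, where Equation~\eqref{eq:decomposeJ} together with the domination condition gives $\V(J_\Lambda)=\V(I_l)=\pi_l(V)$ immediately. For the inductive step, one picks $x_m\notin\Lambda$ with $x_m<x_l$ that is a leaf of $T|_{\Lambda'}$ for $\Lambda'=\Lambda\cup\{x_m\}$, applies the hypothesis to $\Lambda'$, and then eliminates $x_m$ by \emph{changing the term order} to $x_m>x_l>x_{l+1}>\cdots>x_{n-1}$. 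This reorder leaves the Gr\"obner basis of $J_m$ untouched (all its variables lie in $X_m$, where $x_m$ was already largest), so $x_m$-domination persists, Lemma~\ref{thm:elim} applies to $J_{\Lambda'}=J_m+J_\Lambda$, and Equation~\eqref{eq:gradeddec} gives $\elim{m+1}{J_m}\subset J_p\subset J_\Lambda$ (the parent $x_p$ of $x_m$ lies in $\Lambda$ by construction), whence $\elim{m+1}{J_{\Lambda'}}\equal J_\Lambda$. That one-variable-at-a-time reorder is the device your proposal is missing.
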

\begin{proof}
  See Appendix~\ref{s:proofscliqueselim}.
\end{proof}

\subsection{Cliques elimination algorithm}\label{s:elimcliques}

Lemma~\ref{thm:elimlower} tells us that we can very easily obtain the elimination ideal of any lower set.
We return now to the problem of computing the elimination ideals of the cliques $X_l$, which we denoted as $H_l$.
Before showing how to get them, we need a simple lemma.

\begin{lem}\label{thm:vertexorder}
  Let $G$ be a chordal graph and let $X$ be a clique of $G$.
 Then there is a perfect elimination ordering $v_0,\ldots,v_{n-1}$ of $G$ such that  the last vertices of the ordering correspond to $X$, i.e., $X = \{v_{n-1},v_{n-2},\ldots,v_{n-|X|}\}$ .
\end{lem}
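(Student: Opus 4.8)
The plan is to read off the desired ordering directly from Maximum Cardinality Search (Algorithm~\ref{alg:mcs}), whose specification already permits an initial clique as input and guarantees that it returns a reversed perfect elimination ordering. First I would run \textsc{MCS}$(G,\mathit{start})$ with $\mathit{start}=(w_0,w_1,\ldots,w_{|X|-1})$ an arbitrary enumeration of the vertices of $X$; this is a legitimate initial clique precisely because $X$ is a clique of $G$. The call returns a sequence $\sigma=(w_0,\ldots,w_{n-1})$ extending $\mathit{start}$, and by the stated correctness of MCS its reversal is a perfect elimination ordering of $G$.

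It then remains only to reconcile indices with the paper's convention, in which $v_0>v_1>\cdots>v_{n-1}$ means $v_0$ is eliminated first. Setting $v_j:=w_{n-1-j}$ for $0\le j\le n-1$ yields the reversed-$\sigma$ perfect elimination ordering, and its last $|X|$ vertices are $v_{n-1},v_{n-2},\ldots,v_{n-|X|}=w_0,w_1,\ldots,w_{|X|-1}$, i.e.\ exactly the vertices of $X$, which is the assertion.

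The only delicate point is the fact that MCS still produces a reversed perfect elimination ordering when primed with a nonempty starting clique rather than a single vertex; the excerpt records this in the statement of Algorithm~\ref{alg:mcs}, so it may be invoked as is, and I expect no substantive obstacle beyond this. If one preferred to avoid it, a self-contained alternative is induction on $|V(G)|$: when $G$ is complete every vertex ordering is a perfect elimination ordering, so simply place $X$ last; otherwise Dirac's classical theorem gives two non-adjacent simplicial vertices, at least one of which, say $u$, must lie outside the clique $X$ (a clique cannot contain two non-adjacent vertices). The induced subgraph $G-u$ is chordal and still has $X$ as a clique, so by the induction hypothesis it admits a perfect elimination ordering whose last $|X|$ vertices are $X$; prepending $u$ as the largest vertex is valid because $u$ is simplicial in $G$ and, for every other vertex $v$, the set $X_v$ involves only vertices of $G-u$, so the perfect-elimination condition is unchanged.
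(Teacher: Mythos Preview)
Your primary argument is exactly the paper's own proof: run MCS with $X$ as the starting clique and reverse the output to obtain a perfect elimination ordering ending in $X$. The paper states this in two sentences; you simply spell out the index bookkeeping and add a second, self-contained inductive proof via Dirac's two-simplicial-vertices theorem, which is correct but not needed here.
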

\begin{proof}
  We can apply Maximum Cardinality Search (Algorithm~\ref{alg:mcs}) to the graph, choosing at the beginning all the vertices of clique $X$.
 As the graph is chordal, this gives a reversed perfect elimination ordering.
\end{proof}

\begin{thm}\label{thm:maxcliqueselim}
  Let $I$ be a zero dimensional ideal with chordal graph $G$.
  Assume that the domination condition holds for chordal elimination.
 Then we can further compute ideals $H_l\in \K[X_l]$ such that $H_l\equal I\cap \K[X_l]$, preserving the structure.
\end{thm}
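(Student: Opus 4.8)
The plan is to leverage Lemma~\ref{thm:vertexorder}, which lets us re-order the vertices so that the clique $X_l$ sits at the end of a perfect elimination ordering. Concretely, given a maximal clique $X = X_l$, apply Maximum Cardinality Search seeded with all vertices of $X$ to obtain a new perfect elimination ordering $v_0 > v_1 > \cdots > v_{n-1}$ in which $X = \{v_{n-|X|}, \ldots, v_{n-1}\}$. With respect to this new order, $X$ is a lower set of the associated elimination tree: indeed $\{v_{n-|X|},\ldots,v_{n-1}\}$ is always a lower set (Lemma~\ref{thm:dagbranch} or the remark preceding it), and it equals $X$ by construction. Now run chordal elimination (Algorithm~\ref{alg:eliml}) with respect to this new ordering to produce the ideals $J'_0, \ldots, J'_{n-1}$.

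Next I would invoke Lemma~\ref{thm:elimlower}, which says that for any lower set $\Lambda$, $I\cap\K[\Lambda] \equal \sum_{v_i\in\Lambda} J'_i$, provided the domination condition holds for the new run of chordal elimination. Applying this with $\Lambda = X$ gives the desired ideal $H_l := \sum_{v_i\in X} J'_i$, which satisfies $H_l \equal I\cap\K[X_l]$ and is supported on the variables of $X_l$, i.e.\ preserves the structure. The output of this step is exactly one clique elimination ideal; running the procedure once per maximal clique $X_l$ (there are at most $n$ of them) produces all the $H_l$.

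The main obstacle is verifying the hypothesis of Lemma~\ref{thm:elimlower}, namely that the domination condition holds for the \emph{re-ordered} run of chordal elimination, given only that it held for the original ordering. This needs an argument that domination is, in a suitable sense, intrinsic: since $I$ is zero dimensional, each $J_l$ being $x_l$-dominated in the original order forces the relevant localized/eliminated ideals to contain pure powers of every variable in their clique, and by Lemma~\ref{thm:eliminiationcontainment} (applied in the new order) the witnessing generators propagate into the appropriate $J'_i$. More carefully, zero-dimensionality of $I$ together with the success of the original elimination (Lemma~\ref{thm:exactelim}) means that for each maximal clique, the clique ideal $I\cap\K[X_l]$ is itself zero dimensional, hence $x_j$-dominated for every $x_j$ in the clique; this property is ordering-independent, so Corollary~\ref{thm:exactzerodimensional} applies to the new run and delivers the domination condition there as well. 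I would spell out this reduction to Corollary~\ref{thm:exactzerodimensional} as the crux of the proof.

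A secondary point to check is that running chordal elimination in the new order still operates within the same chordal graph $G$ — this is immediate because MCS seeded at a clique of a chordal graph returns a (reversed) perfect elimination ordering of the \emph{same} graph, so Proposition~\ref{thm:decompositionpreserve} guarantees all $J'_i$ stay supported on sub-cliques of $G$. With these two verifications in place, the statement follows by combining Lemma~\ref{thm:vertexorder}, Corollary~\ref{thm:exactzerodimensional}, and Lemma~\ref{thm:elimlower} as above.
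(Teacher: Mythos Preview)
Your plan has the right shape—re-order so that $X_l$ sits at the end, then use Lemma~\ref{thm:elimlower}—but the justification of the domination condition for the re-ordered run does not go through as written. You invoke Corollary~\ref{thm:exactzerodimensional}, whose hypothesis is that the \emph{algorithmic} ideal $J'_m\subset\K[X'_m]$ is zero dimensional for each maximal clique $X'_m$ in the new ordering. What you actually establish is that the \emph{true} clique elimination ideal $I\cap\K[X'_m]$ is zero dimensional. These are different objects: by construction $J'_m\subset I\cap\K[X'_m]$, and containment in a zero-dimensional ideal says nothing about the dimension of the subideal. Lemma~\ref{thm:eliminiationcontainment} only pushes \emph{generators that are present in the current run} into $J'_m$; it cannot import the witnesses of $x_j$-domination that live in $I\cap\K[X'_m]$ but are not among the generators the algorithm sees. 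So the appeal to Corollary~\ref{thm:exactzerodimensional} is circular: it needs exactly the zero-dimensionality of $J'_m$ that you are trying to deduce.

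This is precisely the obstacle the paper's proof is engineered to overcome. The paper proceeds by induction on $l$, works on a lower set $\Lambda$ rather than all of $G$, and—crucially—\emph{appends} the already-computed ideals $H_r$ (for $x_r\in\Lambda\setminus\{x_l\}$) to the generator list of $I_\Lambda$ before running the re-ordered elimination. Because each $H_r$ is zero dimensional by induction and its generators now sit in the input, Lemma~\ref{thm:eliminiationcontainment} places them inside the corresponding $J^\sigma_m$, forcing $J^\sigma_m$ to be zero dimensional and making Corollary~\ref{thm:exactzerodimensional} legitimately applicable. Without this appending step (or some substitute that injects zero-dimensional generators into every maximal $J^\sigma_m$), the domination condition for the new ordering is not a consequence of the hypotheses.
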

\begin{proof}
  We will further prove that the corresponding variety is $\pi_{X_l}(V)$, where $V = \V(I)$ and $\pi_{X_l}: \K^n\to \K^{X_l}$ is the projection onto $X_l$.
  We proceed by induction on $l$.

  The base case is $l= n-1$.
  As chordal elimination is successful then $I_{n-1}\equal \elim{n-1}{I}$ and the variety is $\pi_{n-1}(V)$, so we can set $H_{n-1}= I_{n-1}$.

  Assume that we found $H_m$ for all $x_m<x_l$.
 Let $\Lambda$ be a lower set with largest element $x_l$.
 By Lemma~\ref{thm:elimlower}, we can compute an ideal $I_\Lambda$ with 
   $\V(I_\Lambda)=\pi_{\Lambda}(V)$.
 Note that $X_l\subset \Lambda$ because of Lemma~\ref{thm:dagbranch}.
 Thus, we should use as $H_l$ the ideal $I_\Lambda\cap \K[X_l]$.
 Naturally, we will use chordal elimination to approximate this ideal.
 For a reason that will be clear later, we modify $I_\Lambda$, appending to it the ideals $H_r$ for all $x_r\in \Lambda\setminus\{x_l\}$.
 Observe that this does not change the variety.
 
 Consider the induced graph $G|_\Lambda$, which is also chordal as $G$ is chordal.
 Thus, Lemma~\ref{thm:vertexorder} implies that there is a perfect elimination ordering $\sigma$ of $G|_\Lambda$ where the last clique is $X_l$.
 We can now use Algorithm~\ref{alg:eliml} in the ideal $I_\Lambda$ using such ordering of the variables to find an ideal $H_l$ that approximates $I_\Lambda\cap \K[X_l]$.
 We will show now that this elimination is successful and thus $H_l \equal I_\Lambda\cap \K[X_l]$.
 
 Let $X_j^\sigma\subset G|_\Lambda$ denote the cliques as defined in~\eqref{eq:cliqueXl} but using the new ordering $\sigma$ in $G|_\Lambda$.
 Similarly, let $I_j^\sigma=J_j^\sigma+K_{j+1}^\sigma$ denote the clique decompositions used in chordal elimination with such ordering.
 Let $x_m$ be one variable that we need to eliminate to obtain $H_l$, i.e., $x_m\in \Lambda\setminus X_l$.
 Let's assume that $x_m$ is such that $X_m^\sigma$ is a maximal clique of $G|_\Lambda$.
 As the maximal cliques do not depend on the ordering, it means that $X_m^\sigma=X_{r}$ for some $x_r<x_l$, and thus we already found an $H_r$ with $H_r\equal I\cap \K[X_m^\sigma]$.
 Observe that $H_r \subset J_m^\sigma$ by recalling that we appended $H_r$ to $ I_\Lambda$ and using Lemma~\ref{thm:eliminiationcontainment}.
 As $H_r$  is zero dimensional, then $J_m^\sigma$ is also zero dimensional  for all such $x_m$.
 Therefore, Corollary~\ref{thm:exactzerodimensional} says that the domination condition holds and chordal elimination succeeds.

 Finally, let's prove that $H_l \equal I\cap \K[X_l]$.
 Observe that as the domination condition holds in the elimination above (to get $H_l$), then
 \begin{align*}
   \V(H_l) = \V(I_\Lambda\cap \K[X_l]) = \pi_{X_l}(\V(I_\Lambda)).
 \end{align*}
 As $\V(I_\Lambda)=\pi_{\Lambda}(V)$,
 we obtain that $\V(H_l) = \pi_{X_l}(V)$.
 On the other hand, we also have $H_l \subset I\cap\K[X_l]$, so that
 \begin{align*}
   \V(H_l)\supset \V(I\cap\K[X_l]) \supset \pi_{X_l}(V).
 \end{align*}
 Therefore, the three terms above must be equal.
\end{proof}

Observe that the above proof hints to an algorithm to compute $H_l$.
However, the proof depends on the choice of some lower set $\Lambda$ for each $x_l$.
To avoid eliminations we want to use a lower set $\Lambda$ as small as possible.
By making a good choice we can greatly simplify the procedure and we get, after some observations made in Corollary~\ref{thm:elimcliques}, the Algorithm~\ref{alg:elimcliques}.
Note that this procedure recursively computes the clique elimination ideals: 
for a given node $x_l$ it only requires ${J_l}$ and the clique elimination ideal of its parent $x_p$.

\begin{algorithm}
  \caption{Compute Elimination Ideals of Cliques}
  \label{alg:elimcliques}
  \begin{algorithmic}[1]
    \Require{An ideal $I$, given by generators with chordal graph $G$}
    \Ensure{Ideals $H_l$ such that  $H_l\equal I\cap \K[X_l]$}
    \Procedure{CliquesElim}{$I,G$}
    \State get cliques ${X}_0,\ldots,{X}_{n-1} $ of $G$
    \State get ${J}_0,\ldots,{J}_{n-1} $ from \Call{ChordElim}{$I,G$}
    \State $H_{n-1} = {J}_{n-1}$
    \For {$l = n-2:0$}
    \State $x_p = $ parent of $x_l$
    \State $C = X_p\cup \{x_l\}$ 
    \State $I_C = H_p + {J}_l$
    \State $\mathit{order} = $ \Call{MCS}{$G|_{C},\mathit{start}=X_l$}
    \State $H_l = $ \Call{ChordElim}{$I_C^{\mathit{order}},G|_C^{\mathit{order}}$}
    \EndFor
    \State \Return $H_0,\ldots,H_{n-1}$
    \EndProcedure
  \end{algorithmic}
\end{algorithm}

\begin{cor}\label{thm:elimcliques}
  Let $I$ be a zero dimensional ideal with chordal graph $G$.
  Assume that the domination condition holds for chordal elimination.
  Then Algorithm~\ref{alg:elimcliques} correctly computes the clique elimination ideals, i.e., $H_l \equal I\cap\K[X_l]$.
\end{cor}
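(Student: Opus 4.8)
The plan is to show that Algorithm~\ref{alg:elimcliques} is simply a streamlined instance of the construction in the proof of Theorem~\ref{thm:maxcliqueselim}, so that correctness follows from that theorem once we verify that the specific lower set and the specific starting ideal used in the algorithm are admissible. I would proceed by induction on $l$ (descending from $n-1$ to $0$), exactly mirroring the inductive structure of Theorem~\ref{thm:maxcliqueselim}. The base case $l=n-1$ is immediate: chordal elimination on $I$ succeeds under the domination hypothesis (Lemma~\ref{thm:exactelim}), so $I_{n-1}=J_{n-1}\equal \elim{n-1}{I}=I\cap\K[X_{n-1}]$, which is what the algorithm sets as $H_{n-1}$.

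For the inductive step, fix $x_l$ with parent $x_p$ in the elimination tree $T$, and assume $H_m\equal I\cap\K[X_m]$ has been correctly computed for all $x_m<x_l$ (in particular for $x_p$). The key observation is that the proof of Theorem~\ref{thm:maxcliqueselim} works for \emph{any} lower set $\Lambda$ with largest element $x_l$; the algorithm makes the minimal choice $\Lambda = X_p\cup\{x_l\}$. First I would check this is indeed a lower set containing $x_l$: by Lemma~\ref{thm:cliquecontainment} we have $X_l\setminus\{x_l\}\subset X_p$, so $X_p\cup\{x_l\}$ is "closed downward" in the sense needed — any variable adjacent-and-below a member is again a member, because $X_p$ itself is a clique and Lemma~\ref{thm:dagbranch} forces such variables into $X_p$. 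Hence $T|_{X_p\cup\{x_l\}}$ is a subtree rooted at $x_{n-1}$... one must be slightly careful here, since $X_p\cup\{x_l\}$ need not contain $x_{n-1}$; what is really needed is that it is a lower set of the poset restricted appropriately, and the proof of Theorem~\ref{thm:maxcliqueselim} only uses $X_l\subset\Lambda$ (Lemma~\ref{thm:dagbranch}) plus the fact that eliminating variables of $\Lambda\setminus X_l$ with an MCS ordering keeping $X_l$ last is successful. So the real content to reproduce is: set $I_C := H_p + J_l$ (this matches the modified $I_\Lambda$ of the theorem's proof, namely $\bigl(\sum_{x_i\in\Lambda}J_i\bigr)$ with the already-computed clique ideals appended — and since $\Lambda=X_p\cup\{x_l\}$, the variables of $\Lambda$ other than $x_l$ all lie in $X_p$, so $\sum_{x_i\in\Lambda\setminus\{x_l\}}J_i$ together with $H_p$ has the same variety, $\pi_\Lambda(V)$, by Lemma~\ref{thm:elimlower}). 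Then run \Call{ChordElim}{} on $I_C$ with the MCS ordering that puts $X_l$ last.

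To finish, I would verify that this inner chordal elimination succeeds, by the same argument as in Theorem~\ref{thm:maxcliqueselim}: every variable $x_m\in\Lambda\setminus X_l$ to be eliminated has its clique $X_m^\sigma$ equal to some maximal clique $X_r$ of $G|_\Lambda$ with $x_r<x_l$, hence already computed; since we appended $H_r$ (here, $H_p$ and the pieces of $J_l$) we get $H_r\subset J_m^\sigma$ via Lemma~\ref{thm:eliminiationcontainment}, and $H_r$ being zero-dimensional (as $I$ is zero-dimensional) forces $J_m^\sigma$ zero-dimensional, so Corollary~\ref{thm:exactzerodimensional} applies and the domination condition holds. Therefore $H_l\equal I_C\cap\K[X_l] = \pi_{X_l}(\pi_\Lambda(V)) = \pi_{X_l}(V)$, and combined with the inclusion $H_l\subset I\cap\K[X_l]$ (every generator produced lies in $I$ and uses only variables of $X_l$) we get $\V(H_l)=\pi_{X_l}(V)=\V(I\cap\K[X_l])$, i.e. $H_l\equal I\cap\K[X_l]$.

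\textbf{Main obstacle.} The routine parts are the variety bookkeeping; the delicate point is justifying that the \emph{small} lower set $\Lambda=X_p\cup\{x_l\}$ legitimately replaces the possibly-larger $\Lambda$ of Theorem~\ref{thm:maxcliqueselim}, i.e. that $H_p+J_l$ really has variety $\pi_\Lambda(V)$ and that the MCS ordering on $G|_\Lambda$ keeping $X_l$ last exists and makes every intermediate $J_m^\sigma$ zero-dimensional. This amounts to checking that $\Lambda$ is a lower set in the precise sense used (so Lemma~\ref{thm:elimlower} and Lemma~\ref{thm:dagbranch} apply) and that the maximal cliques of $G|_\Lambda$ other than $X_l$ are among the $X_r$, $x_r<x_l$ — which follows since $\Lambda\setminus\{x_l\}\subset X_p$ and any clique of $G|_{X_p}$ sits inside $X_p=X_p^{\text{original}}$, hence inside some already-processed $X_r$. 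Once that is pinned down, the statement drops out of Theorem~\ref{thm:maxcliqueselim}.
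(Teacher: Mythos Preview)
Your overall strategy---inducting on $l$ and invoking Theorem~\ref{thm:maxcliqueselim}---is sound, but there is a genuine gap at the point where you take $\Lambda = C := X_p\cup\{x_l\}$ as your lower set. This $C$ is \emph{not} a lower set in the paper's sense: a lower set must restrict to a subtree of $T$ rooted at $x_{n-1}$, and $C$ typically omits $x_{n-1}$. More concretely, for $x_i\in X_p$ with $x_i<x_p$, the smaller neighbours of $x_i$ (hence the clique $X_i$) need not lie in $C$; thus $\sum_{x_i\in C}J_i$ is not even an ideal of $\K[C]$, and Lemma~\ref{thm:elimlower} cannot be invoked to conclude $\V(H_p+J_l)=\pi_C(V)$. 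You notice the difficulty but do not resolve it: the assertion that ``$\sum_{x_i\in\Lambda\setminus\{x_l\}}J_i$ together with $H_p$ has the same variety $\pi_\Lambda(V)$ by Lemma~\ref{thm:elimlower}'' is precisely the unjustified step, and the rest of your argument (success of the inner elimination, the final variety bookkeeping) hinges on it.

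The paper closes this gap by a different route: it takes $\Lambda=P_l$, the directed path in $T$ from $x_l$ to the root $x_{n-1}$, which \emph{is} a lower set, so Lemma~\ref{thm:elimlower} legitimately gives $\V(I_{P_l})=\pi_{P_l}(V)$. The substantive work is then a PEO-extension argument: a perfect elimination ordering $\sigma_p$ of $G|_{P_p}$ ending in $X_p$ can be extended to a PEO $\sigma_l=(\sigma_p\setminus X_p)+\sigma_C$ of $G|_{P_l}$ ending in $X_l$, where $\sigma_C$ is a PEO of $G|_C$. This extension is exactly what lets one \emph{reuse} the eliminations already performed for $H_p$: the first segment of $\sigma_l$ reproduces those computations, and the residual chordal elimination lives entirely on $C$ with starting ideal $H_p+J_l$. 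Without this PEO-extension argument (or an independent direct proof that $\V(H_p+J_l)=\pi_C(V)$, which you have not supplied), the correctness of the recursive step in Algorithm~\ref{alg:elimcliques} is not established.
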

\begin{proof}
  We refer to the proof of Theorem~\ref{thm:maxcliqueselim}.
  For a given $x_l$, let $x_p$ be its parent and let $P_l$ denote the directed path in $T$ from $x_l$ to the root $x_{n-1}$.
  It is easy to see that $P_l$ is a lower set, and that $P_l = P_p \cup \{x_l\}$.
  We will see that Algorithm~\ref{alg:elimcliques} corresponds to selecting the lower set $\Lambda$ to be this $P_l$ and reusing the eliminations performed to get $H_p$ when we compute $H_l$.

  In the proof of Theorem~\ref{thm:maxcliqueselim}, to get $H_l$ we need a perfect elimination ordering (PEO) $\sigma_l$ of $G|_\Lambda$ that ends in $X_l$.
  This order $\sigma_l$ determines the eliminations performed in $I_\Lambda$.
  Let $\sigma_p$ be a PEO of $G|_{P_p}$, whose last vertices are $X_p$.
  Let's see that we can extend $\sigma_p$ to obtain the PEO $\sigma_l$ of $G|_{P_l}$.
  Let $C:= X_p\cup \{x_l\}$ and observe that $X_l\subset C$ due to Lemma~\ref{thm:cliquecontainment}, and thus $P_l = P_p \cup C$.
  Let $\sigma_C$ be a PEO of $G|_C$ whose last vertices are $X_p$ (using Lemma\ref{thm:vertexorder}).
  We will argue that the following ordering works:
  $$\sigma_l:= (\sigma_p \setminus X_p) + \sigma_C.$$

  By construction, the last vertices of $\sigma_l$ are $X_l$, so we just need to show that it is indeed a PEO of $G|_{P_l}$. 
  Let $v \in P_l$, and let $X_v^{\sigma_l}$ be the vertices adjacent to it that follow $v$ in $\sigma_l$.
  We need to show that $X_v^{\sigma_l}$ is a clique.
  There are two cases: $v \in C$ or $v\notin C$.
  If $v\in C$, then  $X_v^{\sigma_l}$ is the same as with $\sigma_C$, so that it is a clique because $\sigma_C$ is a PEO. 
  If $v\notin C$, we will see that $X_v^{\sigma_l}$ is the same as with $\sigma_p$, and thus it is a clique.
  Consider the partition $X_v^{\sigma_l} = (X_v^{\sigma_l} \setminus X_p) \cup (X_v^{\sigma_l} \cap X_p)$, and note the part that is not in $X_p$ only depends on $\sigma_p$.
  The part in $X_p$ is just $\adj(v) \cap X_p$, i.e., its neighbors in $X_p$, given that we put $\sigma_C$ to the end of $\sigma_l$.
  Observe that the same happens for $\sigma_p$, i.e., $X_v^{\sigma_p}\cap X_p=\adj(v) \cap X_p$, by construction of $\sigma_p$.
  Thus $X_v^{\sigma_l}  = X_v^{\sigma_p}$, as wanted.

  The argument above shows that given any PEO of $P_p$ and any PEO of $C$, we can combine them into a PEO of $P_l$. 
  This implies that the eliminations performed to obtain $H_p$ can be reused to obtain  $H_l$, and the remaining eliminations correspond to $G|_C$.
  Thus, we can obtain this clique elimination ideals recursively, as it is done in Algorithm~\ref{alg:elimcliques}.
\end{proof}

Computing a Gr\"obner basis for all maximal cliques in the graph might be useful as it decomposes the system of equations into simpler ones.
 We can extract the solutions of the system by solving the subsystems in each clique independently and ``glueing'' them.
 We elaborate on this now.

 \begin{lem}\label{thm:cliquesvariety}
  Let $I$ be an ideal and let $H_j=I\cap \K[X_j]$ be the cliques elimination ideals.
 Then,
 $$I= H_0 + H_{1} + \cdots + H_{n-1}.$$
\end{lem}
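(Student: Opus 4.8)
The plan is to prove the two inclusions separately, the nontrivial direction being $I \subseteq H_0 + \cdots + H_{n-1}$.

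The inclusion $H_0 + \cdots + H_{n-1} \subseteq I$ is immediate: by definition $H_j = I \cap \K[X_j] \subseteq I$, so any finite sum of these ideals is again contained in $I$. Note that, unlike the results of Section~\ref{s:cliqueselim} preceding it, this direction needs nothing beyond the definition of the $H_j$.

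For the reverse inclusion I would work directly with the fixed generating set $F$ of $I$ (rather than a Gr\"obner basis, which by Example~\ref{exmp:destroychordal} may fail to be supported on cliques) and show that every $f \in F$ already lies in a single $H_l$. Fix $f \in F$ and let $x_l$ be the largest variable occurring in $f$; let $X_f$ denote the set of variables of $f$. By the construction of the sparsity graph $G = G(F)$, the set $X_f$ is a clique of $G$. Since $x_l$ is the largest variable of $f$, every other vertex $x_m \in X_f$ satisfies $x_m < x_l$ and is adjacent to $x_l$ (as $X_f$ is a clique), so $X_f \setminus \{x_l\} \subseteq \{x_m : x_m \text{ adjacent to } x_l,\ x_m < x_l\}$, and therefore $X_f \subseteq X_l$ by the definition of $X_l$ in~\eqref{eq:cliqueXl}. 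Hence $f \in \K[X_l]$, and since $f \in I$ we get $f \in I \cap \K[X_l] = H_l$. It follows that $F \subseteq \bigcup_{l} H_l \subseteq H_0 + \cdots + H_{n-1}$, and passing to the ideal generated by $F$ gives $I \subseteq H_0 + \cdots + H_{n-1}$. Combining the two inclusions yields the claimed equality.

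I do not expect a serious obstacle here: the statement is a purely structural consequence of $G(F)$ being chordal with the given perfect elimination ordering, and in particular it requires neither the domination condition nor zero-dimensionality of $I$. The only point that warrants a moment's care is the containment $X_f \subseteq X_l$, i.e., that $X_l$ is genuinely the largest clique of $G|_{\{x_l,\ldots,x_{n-1}\}}$ containing $x_l$; this is exactly the perfect elimination ordering property recalled in Section~\ref{s:preliminaries} (and already used in the proof of Lemma~\ref{thm:eliminiationcontainment}), so no new work is needed.
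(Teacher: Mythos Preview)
Your proof is correct and follows essentially the same approach as the paper: both directions are argued exactly as you do, showing $H_j\subset I$ for one inclusion and, for the other, that each generator $f\in F$ has its variables contained in some clique $X_j$ so that $f\in H_j$. Your write-up is in fact more explicit than the paper's (which simply asserts ``the variables of $f$ must be contained in some $X_j$'' without spelling out that one may take $x_l$ to be the largest variable of $f$ and use the definition of $X_l$), but the underlying argument is identical.
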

\begin{proof}
  As $H_j\subset I$ for any $x_j$, then $H_0+\cdots+H_{n-1} \subset I_l$.
 On the other hand, let $f\in I$ be one of its generators.
 By definition of $G$, the variables of $f$ must be contained in some $X_j$, so we have $f\in H_j$.
 This implies $I \subset H_0 +  \cdots + H_{n-1}.$
\end{proof}

 Lemma~\ref{thm:cliquesvariety} gives us a strategy to solve zero dimensional ideals.
 Note that $H_j$ is also zero dimensional. 
 Thus, we can compute the elimination ideals of the maximal cliques, we solve each $H_j$ independently, and finally we can merge the solutions.
 We illustrate that now.

 \begin{exmp}\label{exmp:3colorings}
Let $G$ be the blue/solid graph in Figure~\ref{fig:graph10notchordal}, and let $I$ be given by:
  \begin{align*}
 x_i^3 - 1&=0, &0\leq i\leq 8\\
 x_9 - 1 &= 0\\
 x_i^{2}+x_ix_j+x_j^{2}&=0,&(i,j) \mbox{ blue/solid edge}
  \end{align*}
  Note that the graph associated to the above equations is precisely $G$.
  However, to use chordal elimination we need to consider the chordal completion $\overline{G}$, which includes the three green/dashed edges of Figure~\ref{fig:graph10notchordal}.
  In such completion, we identify seven maximal cliques:
  \begin{align*}
    X_0 = \{x_0,&x_6,x_7\},
    X_1 = \{x_1,x_4,x_9\},
    X_2 = \{x_2,x_3,x_5\}\\
    X_3 &= \{x_3,x_5,x_7,x_8\},
    X_4 = \{x_4,x_5,x_8,x_9\}\\
    X_5 &= \{x_5,x_7,x_8,x_9\},
    X_6 = \{x_6,x_7,x_8,x_9\}.
  \end{align*}
  With Algorithm~\ref{alg:elimcliques} we can find the associated elimination ideals. Some of them are:
      \begin{align*}
        H_0 &= \langle x_{0} + x_{6} + 1, x_{6}^{2} + x_{6} + 1, x_{7} - 1\rangle \\
        H_5 &= \langle x_{5} - 1, x_{7} - 1, x_{8}^{2} + x_{8} + 1, x_{9} - 1\rangle \\
        H_6 &= \langle x_{6} + x_{8} + 1, x_{7} - 1, x_{8}^{2} + x_{8} + 1, x_{9} - 1\rangle.
      \end{align*}
  Denoting $\zeta = e^{2\pi i/3}$, the corresponding varieties are:
      \begin{align*}
        H_0 &: \{x_0,x_6,x_7\}  \,\,\;\;\;\; \to \left\{ \zeta,  \zeta^2, 1\right\}, \;\;\left\{ \zeta^2, \zeta, 1\right\}\\
        H_5 &: \{x_5,x_7,x_8,x_9\}\to \left\{ 1, 1,  \zeta,  1\right\},  \;  \left\{ 1, 1, \zeta^2,  1\right\} \\
        H_6 &: \{x_6,x_7,x_8,x_9\}\to \left\{ \zeta^2,  1,  \zeta, 1\right\}, \left\{\zeta, 1,\zeta^2, 1\right\}.
      \end{align*}
  There are only two solutions to the whole system, one of them corresponds to the values on the left and the other to the values on the right.
 \end{exmp}

From the example above we can see that to obtain a solution of $I$ we have to match solutions from different cliques $H_l$.
We can do this matching iteratively following the elimination tree.
Any partial solution is guaranteed to extend as the elimination was successful.
Let's see now an example were this matching gets a bit more complex.

\begin{exmp}
Consider again the blue/solid graph in Figure~\ref{fig:graph10notchordal}, and
   let $I$ be given by:
  \begin{align*}
 x_i^4 - 1&=0, &0\leq i\leq 8\\
 x_9 - 1 &= 0\\
 x_i^{3}+x_i^2x_j+x_ix_j^2+x_j^{3}&=0,&(i,j) \mbox{ blue/solid edge}
  \end{align*}
  The graph (and cliques) is the same as in Example~\ref{exmp:3colorings}, but the variety this time is larger.
  This time we have $|\V(H_0)|=18$, $|\V(H_5)|=27$, $|\V(H_6)|=12$.
  This numbers are still small.
  However, when we merge all partial solutions we obtain $|\V(I)|=528$.
\end{exmp}

\subsection{Lex Gr\"obner bases and chordal elimination}\label{s:lexgroebner}

 To finalize this section, we will show the relationship between lex Gr\"obner bases of $I$ and lex Gr\"obner bases of the clique elimination ideals $H_l$.
 We will see that both of them share many structural properties.
 This justifies our claim that these polynomials are the closest sparse representation of $I$ to a lex Gr\"obner basis.
 In some cases, the concatenation of the clique Gr\"obner bases might already be a lex Gr\"obner basis of $I$.
 This was already seen in Proposition~\ref{thm:gbIimpliesgbcliques}, and we will see now another situation where this holds.
 In other cases, a lex Gr\"obner basis can be much larger than the concatenation of the clique Gr\"obner bases.
 As we can find $H_l$ while preserving sparsity, we can outperform standard Gr\"obner bases algorithms in many cases, as will be seen in Section~\ref{s:applications}.

 We focus on radical zero dimensional ideals $I$.
 Note that this radicality assumption is not restrictive, as we have always been concerned with $\V(I)$, and we can compute $\sqrt{H_l}$ for each $l$.
 We recall now that in many cases (e.g., generic coordinates) a radical zero dimensional has a very special type of Gr\"obner bases. 
 We say that $I$ is in \emph{shape position} 
 if the reduced lex Gr\"obner basis has the structure:
 \begin{align*}
   x_0 - g_0(x_{n-1}),\,
   x_1 - g_1(x_{n-1}),\ldots,\,
   x_{n-2} - g_{n-2}(x_{n-1}),\,
   g_{n-1}(x_{n-1}).
 \end{align*}
 We will prove later the following result for ideals in shape position.
  
 \begin{prop}\label{thm:shapeposition}
   Let $I$ be a radical zero dimensional ideal in shape position. 
   Let $gb_{H_l}$ be a lex Gr\"obner basis of $H_l$.
   Then $\bigcup_{l}gb_{H_l}$ is a lex Gr\"obner basis of $I$.
 \end{prop}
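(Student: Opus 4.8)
The plan is to show both that $\bigcup_l gb_{H_l} \subset I$ and that the leading monomials of $\bigcup_l gb_{H_l}$ generate $\initial(I)$; since $I$ is radical zero-dimensional in shape position, a careful analysis of which cliques contain which variables will let me locate the needed leading terms. First I would recall that for an ideal in shape position the reduced lex Gr\"obner basis has leading monomials $x_0, x_1, \ldots, x_{n-2}$ and $x_{n-1}^{d}$ where $d = |\V(I)|$; in particular $\initial(I) = \langle x_0, x_1, \ldots, x_{n-2}, x_{n-1}^d\rangle$. Since each $gb_{H_l} \subset H_l \subset I$, containment is immediate, so the whole content is showing that each of these $n$ monomial generators of $\initial(I)$ lies in $\initial\bigl(\bigcup_l gb_{H_l}\bigr)$.

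The key step is to identify, for each variable $x_i$ with $i < n-1$, a clique $X_l$ such that $H_l$ is $x_i$-dominated with the pure power being $x_i$ itself. The natural candidate is to take $l = i$, i.e., the clique $X_i$ containing $x_i$ as its largest element. I would argue as follows: by Lemma~\ref{thm:elim} / Theorem~\ref{thm:eliml} together with the domination condition (which holds here — I should note that a radical zero-dimensional ideal satisfies the hypotheses, e.g.\ via Corollary~\ref{thm:exactzerodimensional} since each $J_l$ is zero dimensional), chordal elimination succeeds, so $\V(H_i) = \pi_{X_i}(\V(I))$. Now because $I$ is in shape position, the projection $\pi_{\{x_i\}}$ of $\V(I)$ onto the single coordinate $x_i$ is the image of the finite set of points $\{(g_0(t),\ldots,g_{n-2}(t),t) : g_{n-1}(t)=0\}$ under the map sending the tuple to its $x_i$-entry, and the minimal polynomial of $x_i$ on $\V(I)$ — which is $x_i - g_i(x_{n-1})$ reduced modulo $g_{n-1}$, a genuine univariate polynomial in $x_i$ of degree equal to the number of distinct values of $g_i$ on the roots of $g_{n-1}$ — lies in $I \cap \K[x_i] \subset I \cap \K[X_i] = H_i$. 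Hence $H_i$ is $x_i$-dominated: its lex Gr\"obner basis contains a polynomial whose leading monomial is a pure power of $x_i$. This shows some power $x_i^{e_i} \in \initial(\bigcup_l gb_{H_l})$. For $x_{n-1}$ the same argument applied to $H_{n-1}$ (which equals $I \cap \K[X_{n-1}] \supset I \cap \K[x_{n-1}] = \langle g_{n-1}\rangle$ up to radical) gives a pure power of $x_{n-1}$ in the leading-term ideal.

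It remains to upgrade "some power $x_i^{e_i}$" to "$x_i$ itself" (for $i<n-1$). Here I would use that $I$ is radical and in shape position: the ideal $I \cap \K[X_i]$ is itself radical and zero-dimensional with the property that $x_i - g_i(x_{n-1})$ forces $x_i$ to take as many values as $x_{n-1}$ does unless $g_i$ is constant; more directly, since $x_0 - g_0(x_{n-1}) \in I$ expresses $x_0$ as a function of $x_{n-1}$, and so on, the reduced lex Gr\"obner basis of $H_i = I\cap\K[X_i]$ in the variables of $X_i$ ordered compatibly must contain $x_j - (\text{poly in smaller vars of } X_i)$ for the largest variable $x_j = x_i$ of $X_i$ — I expect this to follow by restricting the shape-position structure of $I$ to the subring $\K[X_i]$, observing that elimination of the larger variables $x_0,\ldots,x_{i-1}$ from the shape-position basis produces a basis for $H_i$ in which $x_i$ still appears linearly as its own leading term. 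Assembling: each $x_i$ for $i<n-1$ appears as a leading monomial among $\bigcup_l gb_{H_l}$, and a pure power $x_{n-1}^{d}$ appears, so $\langle x_0,\ldots,x_{n-2},x_{n-1}^d\rangle \subset \initial(\bigcup_l gb_{H_l}) \subset \initial(I) = \langle x_0,\ldots,x_{n-2},x_{n-1}^d\rangle$, forcing equality and hence $\bigcup_l gb_{H_l}$ is a lex Gr\"obner basis of $I$.

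The main obstacle I anticipate is the last step: cleanly proving that $x_i$ (the linear leading term, not merely some power) survives into the reduced Gr\"obner basis of $H_i = I \cap \K[X_i]$. This requires relating the shape-position structure of the full ideal to the elimination ideal on the clique variables; the subtlety is that eliminating $x_0,\ldots,x_{i-1}$ could in principle raise the degree in $x_i$, and one must use that in shape position those larger variables are each expressed as a polynomial in $x_{n-1}$ alone — so substituting back keeps everything controlled by the single univariate polynomial $g_{n-1}$ and never inflates the $x_i$-degree beyond linear in the "shape" part. A careful bookkeeping argument, possibly phrased via the bijection $\V(I) \to \V(\langle g_{n-1}\rangle)$, $t \mapsto$ the unique point over $t$, should close this gap.
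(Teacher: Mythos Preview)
Your overall strategy---show $\bigcup_l gb_{H_l}\subset I$ and then match initial ideals, using that $\initial(I)=\langle x_0,\ldots,x_{n-2},x_{n-1}^d\rangle$---is exactly the paper's. The difference is entirely in the step you flag as the ``main obstacle'': showing that $x_i$ itself (not just some power $x_i^{e_i}$) lies in $\initial(H_i)$ for $i<n-1$.

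Your proposed resolution has a genuine gap. You write that ``elimination of the larger variables $x_0,\ldots,x_{i-1}$ from the shape-position basis produces a basis for $H_i$ in which $x_i$ still appears linearly.'' But eliminating $x_0,\ldots,x_{i-1}$ from the shape-position basis yields $\{x_i-g_i(x_{n-1}),\ldots,x_{n-2}-g_{n-2}(x_{n-1}),g_{n-1}(x_{n-1})\}$, which is a basis for $I_i=\elim{i}{I}$, \emph{not} for $H_i=I\cap\K[X_i]$. To get $H_i$ you must further intersect with $\K[X_i]$, i.e., eliminate all $x_j$ with $j>i$ and $x_j\notin X_i$. If in particular $x_{n-1}\notin X_i$, the polynomial $x_i-g_i(x_{n-1})$ is not in $\K[X_i]$ at all, and there is no a~priori reason the $x_i$-degree stays at~$1$ after this second round of elimination. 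Your ``substituting back keeps everything controlled'' sketch addresses only the first elimination, not the second.

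The paper closes this gap by invoking Corollary~\ref{thm:initialidealstructure}: for radical zero-dimensional $I$, one has $x_l^d\in\initial(I)$ if and only if $x_l^d\in\initial(H_l)$. Applying this with $d=1$ for each $l<n-1$ (and with $d=|\V(I)|$ for $l=n-1$) immediately gives the required leading monomials in $\bigcup_l gb_{H_l}$, and the proof is then one line. That corollary in turn rests on Theorem~\ref{thm:preservestructure}, whose proof (in the appendix) uses Gao's fiber-counting characterization of lex Gr\"obner degrees together with the decomposition $I_l\equal H_l+I_{l+1}$ from Lemma~\ref{thm:cliquesvarietyIl}. This is the ``careful bookkeeping'' you anticipate; it is not elementary, and you should either invoke Corollary~\ref{thm:initialidealstructure} directly or reproduce the fiber-size argument that underlies it.
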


 If the ideal is not in shape position, then the concatenation of such smaller Gr\"obner bases might not be a Gr\"obner basis for $I$.
 Indeed, in many cases any Gr\"obner basis for $I$ is extremely large, while the concatenated polynomials $gb_{H_l}$ are relatively small as they preserve the structure.
 This  will be seen in the application studied of Section~\ref{s:coloringsequations}, where we will show how much simpler can $\bigcup_l gb_{H_l}$ be compared to a full Gr\"obner basis.

 Even when the ideal is not in shape position, the concatenated polynomials already have some of the structure of a lex Gr\"obner basis of $I$, as we will show.
 Therefore, it is usually simpler to find such Gr\"obner basis starting from such concatenated polynomials.
 In fact, in Section~\ref{s:coloringsequations} we show that by doing this we can compute a lex Gr\"obner basis faster than a degrevlex Gr\"obner basis.

 \begin{thm}\label{thm:preservestructure}
  Let $I$ be a radical zero dimensional ideal.
  For each $x_l$ let $gb_{I_l}$ and $gb_{H_l}$ be minimal lex Gr\"obner bases for the elimination ideals $I_l=\elim{l}{I}$ and $H_l=I\cap\K[X_l]$.
  Denoting $\deg$ to the degree, the following sets are equal:
  \begin{align*}
  D_{I_l}&=\{\deg_{x_l}(p): p\in gb_{I_l}\}\\
  D_{H_l}&=\{\deg_{x_l}(p): p\in gb_{H_l}\}.
  \end{align*}
\end{thm}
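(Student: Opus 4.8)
The plan is to show that for a radical zero-dimensional ideal $I$, the set of $x_l$-degrees appearing in a minimal lex Gröbner basis is an invariant of the image variety $\pi_l(\V(I))$, or more precisely of the lex staircase of the elimination ideal, and that this staircase is ``the same'' whether we look at $I_l=\elim{l}{I}$ or at $H_l=I\cap\K[X_l]$, because the only variables larger than $x_l$ that actually appear in $\initial(I_l)$ at the $x_l$-layer are the neighbors of $x_l$, i.e. the variables of $X_l$. The key structural fact I would establish first is: for a minimal lex Gröbner basis $gb$ of an ideal $J\subset\K[x_l,\dots,x_{n-1}]$, the multiset $\{\deg_{x_l}(p):p\in gb\}$ is determined by $\initial(J)$ --- indeed $\deg_{x_l}(p)=\deg_{x_l}(\lm(p))$, and the leading monomials of a minimal Gröbner basis are exactly the minimal generators of the monomial ideal $\initial(J)$. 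So $D_{I_l}$ is read off from the minimal generators of $\initial(\elim{l}{I})$ and $D_{H_l}$ from those of $\initial(I\cap\K[X_l])$.

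Next I would relate these two monomial ideals. Since $X_l\subset\{x_l,\dots,x_{n-1}\}$, we have $I\cap\K[X_l]=\elim{l}{I}\cap\K[X_l]$, so $\initial(H_l)$ is obtained from $\initial(I_l)$ by intersecting with $\K[X_l]$ (for lex, intersection of a monomial ideal with a subring-on-a-subset-of-variables commutes with taking initial ideals, since lex restricted to $\K[X_l]$ is again lex). The crucial geometric input is Proposition~\ref{thm:decompositionpreserve} together with the chordality of $G$: the variables strictly larger than $x_l$ that are adjacent (in $G$) to $x_l$ are exactly $X_l\setminus\{x_l\}$, and a perfect elimination ordering guarantees $X_l$ is a clique that is ``closed'' in the relevant sense. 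Because $I$ has graph $G$ and we have the successful-elimination machinery, a minimal generator of $\initial(I_l)$ of the form $x_l^d\cdot(\text{monomial in }x_{l+1},\dots)$ can only involve variables of $X_l$: this should follow from the fact that a minimal lex Gröbner basis element of $I_l$ with leading term divisible by $x_l^d$ lies in $H_l$ (up to the sparsity-preservation argument of Lemma~\ref{thm:eliminiationcontainment} / Proposition~\ref{thm:decompositionpreserve}), so its $x_l$-degree is already witnessed inside $H_l$. Conversely every generator of $\initial(H_l)$ persists as (a divisor of) a generator of $\initial(I_l)$ since $H_l\subset I_l$. Matching these two directions gives $D_{I_l}=D_{H_l}$.

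The main obstacle I anticipate is the direction $D_{I_l}\subset D_{H_l}$: one must rule out the possibility that $\initial(I_l)$ has a minimal generator $x_l^d\,\mu$ with $\mu$ a monomial involving some variable \emph{outside} $X_l$, which would contribute a $\deg_{x_l}$-value (namely $d$) not realized by any generator of $\initial(H_l)$. Handling this requires using the radical zero-dimensional hypothesis (so that $\V(I)$ is finite and fibers behave well), the successful-elimination / domination framework that guarantees $\V(I_l)=\pi_l(\V(I))$, and the sparsity-preservation of Algorithm~\ref{alg:eliml} to argue that no such ``long'' minimal generator is needed --- informally, the value $d$ is already forced by the projection of $\V(I)$ onto the coordinates $X_l$, which only sees neighbors of $x_l$. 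I would make this precise by taking the reduced lex Gröbner basis of $I_l$, isolating its elements whose leading term is a pure-power-times-something in $x_l$, and showing via Lemma~\ref{thm:eliminiationcontainment} that their leading monomials, when reduced, lie in $\K[X_l]$; equivalently, that the minimal generators of $\initial(I_l)$ involving $x_l$ all lie in $\K[X_l]$. If a fully clean argument proves elusive, a fallback is to invoke Theorem~\ref{thm:maxcliqueselim}/Corollary~\ref{thm:elimcliques} to get $H_l\equal I\cap\K[X_l]$ with the right variety, then use a dimension count on quotient rings $\K[X_l]/\initial(H_l)$ versus the $X_l$-slice of $\K[x_l,\dots]/\initial(I_l)$ to force the staircases, hence the degree sets, to coincide.
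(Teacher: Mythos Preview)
Your route through initial ideals has a genuine gap. The assertion that $\initial(H_l)=\initial(I_l)\cap\K[X_l]$ is \emph{not} a consequence of ``lex restricted to $\K[X_l]$ is again lex'': that reasoning only works when the retained variables form a terminal segment $\{x_l,\dots,x_{n-1}\}$, which is exactly the classical elimination property of lex. For a clique $X_l$ that omits some intermediate variables the equality can fail even under the hypotheses of the theorem. Concretely, take $G$ the path $x_0\text{--}x_1\text{--}x_2$ so that $X_0=\{x_0,x_1\}$, and let $I=\langle x_0^2-x_0x_1-x_0,\;x_1^2-x_1,\;x_1-x_2,\;x_2^2-x_2\rangle$ (radical, zero dimensional, with graph $G$). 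Then $\initial(I_0)=\langle x_0^2,x_1,x_2^2\rangle$ while $\initial(H_0)=\langle x_0^2,x_1^2\rangle$, so $x_1\in\initial(I_0)\cap\K[X_0]$ but $x_1\notin\initial(H_0)$. Your appeal to Lemma~\ref{thm:eliminiationcontainment} and Proposition~\ref{thm:decompositionpreserve} does not rescue this: those results control the ideals $J_l$ produced by the \emph{algorithm}, not the true elimination ideals $\elim{l}{I}$, and say nothing about where the leading monomials of a reduced lex Gr\"obner basis of $\elim{l}{I}$ must live. Note that in the example the degree sets $D_{I_0}=D_{H_0}=\{0,2\}$ do coincide, so the theorem holds; it is your intermediate step that breaks.

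The paper's proof takes an entirely different, geometric route. It invokes a result of Gao (Theorem~\ref{thm:lexgrobstructure}) stating that for a radical zero-dimensional ideal the set of $x_l$-degrees in a minimal lex Gr\"obner basis equals the set of fiber cardinalities $\{\,|\pi^{-1}(z)\cap V|:z\in\pi(V)\,\}$ for the projection $\pi$ eliminating $x_l$. It then shows these fiber-size sets coincide for $I_l$ and $H_l$ using the decomposition $\V(I_l)=\V(H_l)\cap\V(I_{l+1})$ coming from $I_l\equal H_l+I_{l+1}$ (Lemma~\ref{thm:cliquesvarietyIl}): since $I_{l+1}$ does not involve $x_l$, the $x_l$-fiber in $\V(I_l)$ over a point $(z_H,z_I)$ bijects with the $x_l$-fiber in $\V(H_l)$ over $z_H$. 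Your fallback ``dimension count'' gestures in the right direction but, as stated, does not isolate this single-variable fiber information, which is precisely what the degree set encodes.
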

\begin{proof}
  See Appendix~\ref{s:proofscliqueselim}.
\end{proof}

\begin{cor}\label{thm:initialidealstructure}
  Let $I$ be a radical zero dimensional ideal, then
  for each $x_l$ we have that $x_l^{d}\in \initial(I)$ if and only if $x_l^{d}\in \initial(H_l)$, using lex ordering.
\end{cor}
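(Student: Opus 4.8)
The plan is to deduce Corollary~\ref{thm:initialidealstructure} directly from Theorem~\ref{thm:preservestructure}. The statement $x_l^d \in \initial(I)$ concerns the lex initial ideal of $I$; since we use lex order with $x_0 > \cdots > x_{n-1}$ and $x_l$ is larger than all of $x_{l+1},\ldots,x_{n-1}$, a pure power $x_l^d$ lies in $\initial(I)$ exactly when it is the leading monomial of some element $p$ of a (minimal) lex Gr\"obner basis $gb_{I_l}$ of the elimination ideal $I_l = \elim{l}{I}$ — indeed, an element of $I$ with leading term a pure power of $x_l$ involves no larger variable, hence lies in $\K[x_l,\ldots,x_{n-1}]$, so in $I_l$; and conversely. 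The key observation is that for such a $p$, having leading monomial $x_l^d$ is equivalent to $\deg_{x_l}(p) = d$ and $d$ being the \emph{maximal} element of the degree set $D_{I_l}$, because among elements of $gb_{I_l}$ the one whose $x_l$-degree is largest is forced (after reduction) to have leading monomial a pure power $x_l^{d}$ — this is precisely the $x_l$-dominated generator, and its exponent is $\max D_{I_l}$.

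First I would make precise the claim $x_l^d \in \initial(I_l) \iff d = \max D_{I_l}$. For the forward direction: if $x_l^d \in \initial(I_l)$, pick a minimal Gr\"obner basis element $p$ with $\lm(p) = x_l^d$; then $\deg_{x_l}(p) = d \in D_{I_l}$, and since $x_l$ is the largest variable appearing, every other generator $q$ has $\lm(q)$ dividing into monomials of degree at most $\deg_{x_l}(q)$ in $x_l$, and $\lm(p) = x_l^d$ cannot be divisible by any $\lm(q)$ with a higher $x_l$-power (minimality), giving $d \geq \deg_{x_l}(q)$ for all $q$, so $d = \max D_{I_l}$. For the converse: the largest element $d = \max D_{I_l}$ is attained by some $p \in gb_{I_l}$; I claim $\lm(p) = x_l^d$. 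If not, $\lm(p)$ would be $x_l^{d}\,m$ with $m \neq 1$ a monomial in smaller variables, or $\lm(p)$ would have $x_l$-degree strictly less than $d$ — but the latter is impossible since leading monomials dominate in the term order restricted to the top variable; for the former, one uses that $I_l$ is zero-dimensional (as $I$ is), so $\initial(I_l)$ contains a pure power of $x_l$, say $x_l^e$, and minimality forces $e = d$, hence there is a basis element with leading monomial exactly $x_l^d$, and it is the unique one of top $x_l$-degree.

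Next I would run the identical argument for $H_l = I \cap \K[X_l]$ in place of $I_l$, noting $H_l$ is zero-dimensional (a quotient/elimination of the zero-dimensional $I$), so $x_l^d \in \initial(H_l) \iff d = \max D_{H_l}$ by the same reasoning, with $x_l$ the largest variable of $X_l$ under the induced order. Finally, Theorem~\ref{thm:preservestructure} gives $D_{I_l} = D_{H_l}$ as sets, hence $\max D_{I_l} = \max D_{H_l}$, and combining the two equivalences yields $x_l^d \in \initial(I) \iff x_l^d \in \initial(I_l) \iff d = \max D_{I_l} = \max D_{H_l} \iff x_l^d \in \initial(H_l)$, where the first step uses that a pure power of $x_l$ lies in $\initial(I)$ iff it lies in $\initial(\elim{l}{I})$ since such a monomial forces membership in $\K[x_l,\ldots,x_{n-1}]$.

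The main obstacle I anticipate is the clean bookkeeping in the equivalence ``$x_l^d \in \initial(I_l) \iff d = \max D_{I_l}$'': one must argue carefully that no Gr\"obner basis element has $x_l$-degree exceeding $d$ and that the top-$x_l$-degree element really has a pure power as its leading monomial, which relies on zero-dimensionality so that $\initial(I_l)$ is guaranteed to contain \emph{some} pure power $x_l^e$; the minimality of the Gr\"obner basis then pins down $e$. Once this local lemma is in hand for both $I_l$ and $H_l$, the corollary is an immediate consequence of Theorem~\ref{thm:preservestructure}.
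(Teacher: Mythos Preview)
Your approach is essentially the paper's: identify the unique pure-power exponent for $x_l$ in a minimal lex Gr\"obner basis of $I_l$ (resp.\ $H_l$) as $\max D_{I_l}$ (resp.\ $\max D_{H_l}$), then invoke Theorem~\ref{thm:preservestructure} to equate the two degree sets and hence their maxima. One small correction that does not affect the argument: the equivalence you write, ``$x_l^d \in \initial(I_l) \iff d = \max D_{I_l}$'', should read $d \geq \max D_{I_l}$, and likewise your earlier claim that $x_l^d \in \initial(I)$ exactly when $x_l^d$ \emph{is} a leading monomial of some minimal Gr\"obner basis element should be ``is divisible by'' such a leading monomial; since you make the same slip symmetrically on both the $I_l$ and $H_l$ sides, the final biconditional still goes through once $\max D_{I_l} = \max D_{H_l}$.
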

\begin{proof}
  Let $gb_{I_l}, gb_{H_l}$ be minimal lex Gr\"obner bases of $I_l,H_l$.
  As $I$ is zero dimensional then there are $d_l,d_H$ such that $x_l^{d_l}$ is the leading monomial of some polynomial in $gb_{I_l}$ and $x_l^{d_H}$ is the leading monomial of some polynomial in $gb_{H_l}$.
  All we need to show is that $d_l = d_H$.
  This follows by noting that $d_l = \max\{D_{I_l}\}$ and  $d_H = \max\{D_{H_l}\}$, following the notation from Theorem~\ref{thm:preservestructure}.
\end{proof}

 \begin{proof}[Proof of Proposition~\ref{thm:shapeposition}]
   As $I$ is in shape position, then its initial ideal has the form 
   $$\initial(I)= \langle x_0,x_1,\ldots,x_{n-2},x_{n-1}^d\rangle$$
   for some $d$.
   For each $x_l>x_{n-1}$, Corollary~\ref{thm:initialidealstructure} implies that $gb_{H_l}$ contains some $f_l$ with leading monomial $x_l$.
   For $x_{n-1}$, the corollary says that there is a $f_{n-1}\in gb_{H_{n-1}}$ with leading monomial $x_{n-1}^d$.
   Then 
   $\initial(I)=\langle \lm(f_0),\ldots,\lm(f_{n-1})\rangle$
   and as $f_l\in H_l\subset I$, these polynomials form a Gr\"obner basis of $I$.
 \end{proof}

\section{Complexity analysis}\label{s:finitefield}

Solving systems of polynomials in the general case is hard even for small treewidth, as it was shown in Example~\ref{exmp:treewidth2hard}.
Therefore, we need some additional assumptions to ensure tractable computation.
In this section we study the complexity of chordal elimination for a special type of ideals where we can prove such tractability.

Chordal elimination shares the same limitations of other elimination methods.
In particular, for zero dimensional ideals its complexity is intrinsically related to the size of the projection $|\pi_l(\V(I))|$.
Thus, we will make certain assumptions on the ideal that allow us to bound the size of this projection.
The following concept will be the key.

\begin{defn}[$q$-domination]
  We say that a polynomial $f$ is \emph{$(x_i,q)$-dominated} if its leading monomial has the form $x_i^{d}$ for some $d\leq q$.
  Let $I = \langle f_1,\ldots,f_s\rangle$, we say that $I$ is \emph{$q$-dominated} if for each $x_i$ there is a generator $f_j$ that is $(x_i,q)$-dominated.
\end{defn}

We will assume that $I$ satisfies this $q$-dominated condition.
Observe that Corollary~\ref{thm:exactsimplicialeachl} holds, and thus chordal elimination succeeds.
Note that this condition also implies that $I$ is zero dimensional.

It should be mentioned that the $q$-dominated condition applies to finite fields.
Let $\F_q$ denote the finite field of size~$q$.
If we are interested in solving a system of equations in $\F_q$ (as opposed to its algebraic closure) we can add the equations $x_i^q-x_i$.
Even more, by adding such equations we obtain the radical ideal $\I(\V_{\F_q}(I))$ \cite{gao2009counting}.

We need to know the complexity of computing a lex Gr\"obner basis.
To simplify the analysis, we assume from now that the generators of the ideal have been \emph{preprocessed} to avoid redundancies.
Specifically, we make the assumption that the polynomials have been pseudo reduced so that no two of them have the same leading monomial and no monomial is divisible by $x_i^{q+1}$.
Note that the latter assumption can be made because the ideal is $q$-dominated.
These conditions allow us to bound the number of polynomials.

\begin{lem}\label{thm:preprocessbasis}
  Let $I=\langle f_1,\ldots,f_s\rangle$ be a preprocessed $q$-dominated ideal. Then $s = O(q^n)$.
\end{lem}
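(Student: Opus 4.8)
The plan is to count the possible leading monomials, since the preprocessing hypotheses tie the number of generators directly to the number of distinct leading monomials. First I would observe that, by assumption, the generators have been pseudo reduced so that no two of them share the same leading monomial; hence $s$ is at most the number of distinct monomials that can occur as $\lm(f_j)$. Next I would use the second preprocessing hypothesis: every monomial appearing in any $f_j$ — in particular every leading monomial — is not divisible by $x_i^{q+1}$ for any $i$. This means each leading monomial has the form $x_0^{e_0}x_1^{e_1}\cdots x_{n-1}^{e_{n-1}}$ with $0\le e_i\le q$ for all $i$.

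The counting step is then immediate: there are exactly $(q+1)^n$ such monomials, so $s\le (q+1)^n = O(q^n)$. I would phrase this as: the map sending each generator to its leading monomial is injective (by the no-repeated-leading-monomial hypothesis) into the set $\{x_0^{e_0}\cdots x_{n-1}^{e_{n-1}} : 0\le e_i\le q\}$, which has cardinality $(q+1)^n$; therefore $s\le (q+1)^n$, and since $q\ge 1$ we have $(q+1)^n\le (2q)^n = 2^n q^n$, so $s = O(q^n)$ (treating $n$, or $2^n$, as the implied constant, consistent with the paper's convention of measuring complexity with $n$ fixed or the treewidth bounded).

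There is essentially no hard obstacle here; the only point requiring a word of care is making sure the $q$-domination hypothesis is actually what licenses the ``no monomial divisible by $x_i^{q+1}$'' reduction — the excerpt already states this is why the preprocessing assumption ``can be made,'' so I would simply cite that. One might also want to note explicitly that pseudo reduction does not change the ideal nor the associated graph, but for the statement as given (a bound on $s$) this is not needed. So the whole argument is: injectivity of $f_j\mapsto\lm(f_j)$ plus the box constraint $e_i\le q$ on exponents gives $s\le(q+1)^n=O(q^n)$.
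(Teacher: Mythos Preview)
Your proof is correct and follows essentially the same approach as the paper: injectivity of $f_j\mapsto\lm(f_j)$ into the box of monomials with bounded exponents. The paper's count is marginally sharper---it first sets aside the $n$ dominating generators $g_i$ and then argues the remaining leading monomials have each exponent strictly below $q$, obtaining $s\le q^n+n$ rather than your $(q+1)^n$---but both bounds give $O(q^n)$.
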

\begin{proof}
  As $I$ is $q$-dominated, for each $0\leq i<n$ there is a generator $g_i$ with leading monomial $x_i^{d_i}$ with $d_i\leq q$.
  The leading monomial of all generators, other than the $g_i$'s, are not divisible by $x_i^q$.
  There are only $q^n$ monomials with degrees less that $q$ in any variable. 
  As the leading monomials of the generators are different, the result follows.
\end{proof}

The complexity of computing a Gr\"obner basis for a zero-dimensional ideal is known to be single exponential in $n$ \cite{lakshman1990complexity}. This motivates the following definition.
\begin{defn}
Let $\alpha$ be the \emph{smallest constant} such that the complexity of computing a Gr\"obner basis is $\widetilde{O}(q^{\alpha n})$ for any (preprocessed) $q$-dominated ideal. 
Here $\widetilde{O}$ ignores polynomial factors in $n$.
\end{defn}

A rough estimate of $\alpha $ is stated next. The proof in~\cite{gao2009counting} is for the case of $\F_q$, but the only property that they use is that the ideal is $q$-dominated.

\begin{prop}[\cite{gao2009counting}]\label{thm:complexitylexgroebner}
  Buchberger's algorithm  in a $q$-dominated ideal requires $O(q^{6n})$ field operations. 
\end{prop}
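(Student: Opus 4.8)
The plan is to bound the cost of Buchberger's algorithm in terms of the number of generators (controlled by Lemma~\ref{thm:preprocessbasis}), the number of S-polynomials processed, and the cost of each reduction, using the $q$-dominated hypothesis to keep every intermediate polynomial ``small'' (no exponent exceeds $q$ in any variable). First I would invoke Lemma~\ref{thm:preprocessbasis}: a preprocessed $q$-dominated ideal has $s = O(q^n)$ generators, and moreover every polynomial we ever encounter can be kept pseudo-reduced so that no monomial is divisible by $x_i^{q+1}$; hence every polynomial has at most $(q+1)^n = O(q^n)$ terms. The key structural observation is that throughout the run of Buchberger's algorithm, whenever an S-polynomial or a reduced remainder is computed, we may immediately pseudo-reduce it modulo the equations $x_i^{q+1} = x_i \cdot x_i^q$ implied by the $q$-domination (or, more cleanly, work in the quotient by the leading terms $x_i^{d_i}$ of the dominating generators $g_i$), so the monomial support stays inside the box $\{0,\dots,q\}^n$. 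This caps the number of distinct leading monomials that can ever appear as leading terms of basis elements by $O(q^n)$, so the algorithm terminates after adding at most $O(q^n)$ new basis elements.

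Next I would count S-pairs and reductions. At any stage the partial basis has $O(q^n)$ elements, so there are $O(q^{2n})$ S-pairs to consider in total over the whole computation (re-pairing after each insertion contributes another factor that is polynomial in the basis size, absorbed into $\widetilde O$ — but to stay safely inside $O(q^{6n})$ I would just bound the number of (pair, insertion) events by $O(q^{2n}) \cdot O(q^{n}) = O(q^{3n})$ crudely). Each S-polynomial reduction is a sequence of monomial cancellations; each reduction step cancels the current leading monomial against one of the $O(q^n)$ basis elements, and since leading monomials strictly decrease in the term order and live in the box $\{0,\dots,q\}^n$, a single reduction performs at most $O(q^n)$ steps, each step costing $O(q^n)$ field operations (scaling and subtracting two polynomials of $O(q^n)$ terms), plus an $O(q^n)$ search for a reducer. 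So one reduction costs $O(q^{2n})$ field operations. Multiplying: $O(q^{3n}) \cdot O(q^{2n}) = O(q^{5n})$; the remaining factor of $q^n$ comes from the cost of maintaining and searching the pair queue and the basis (or simply from a looser bound on the number of reduction steps), giving the claimed $O(q^{6n})$.

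The main obstacle — and the place where the $6$ (rather than some smaller constant) comes from — is getting an honest bound on the number of reduction steps and on the size of intermediate coefficients without appealing to sophisticated degree bounds (Lakshman's or the Bézout-type bounds). The $q$-domination hypothesis is exactly what rescues this: it replaces a delicate degree-of-regularity analysis by the trivial observation that \emph{every} monomial that can appear has degree at most $q$ in each variable, so the ``polynomial arithmetic'' is over a fixed finite-dimensional vector space of dimension $O(q^n)$, and one can afford to be wasteful. I would therefore follow the argument of~\cite{gao2009counting} essentially verbatim, remarking — as the proposition's statement already does — that their proof for $\F_q$ uses only the $q$-dominated property and nothing specific to the ground field, so it transfers directly. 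The write-up would be short: cite Lemma~\ref{thm:preprocessbasis} for the generator count, note the box bound $\{0,\dots,q\}^n$ on monomial supports, and then reproduce the $O(q^{6n})$ tally of S-pairs times reduction-steps times per-step cost as above, attributing it to~\cite{gao2009counting}.
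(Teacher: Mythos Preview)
The paper does not prove this proposition: it is stated as a cited result from~\cite{gao2009counting}, accompanied only by the remark that ``the proof in~\cite{gao2009counting} is for the case of $\F_q$, but the only property that they use is that the ideal is $q$-dominated.'' Your proposal ultimately lands in the same place (cite the reference and note that the argument uses only $q$-domination), so in that sense you match the paper exactly.

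The sketch you supply on top of this is extra and broadly reasonable --- the box bound $\{0,\dots,q\}^n$ on monomial supports, the $O(q^n)$ cap on distinct leading terms, and the resulting polynomial-in-$q^n$ counts for S-pairs and reduction steps are the right ingredients. The accounting, however, is loose: you reach $O(q^{5n})$ and then wave in an extra $q^n$ factor without a concrete source, and the $O(q^{3n})$ bound on ``(pair, insertion) events'' is not clearly justified either. For the purposes of this paper none of that matters, since the proposition is simply imported; if you actually wanted to reproduce the bound, you would need to tighten the bookkeeping rather than gesture at where the missing factor comes from.
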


We should mention that the complexity of Gr\"obner bases has been actively studied and different estimates are available.
For instance, Faug\`ere et al.~\cite{faugere2013polynomial} show that for generic ideals the complexity is $\widetilde{O}(D^\omega)$, where $D$ is the number of solutions and $2<\omega<3$ is the exponent of matrix multiplication. 
Thus, if we only considered generic polynomials we could interpret such condition as saying that $\alpha \leq \omega$.
However, even if the generators of $I$ are generic, our intermediate calculations are not generic and thus we cannot make such an assumption.

Nevertheless, to obtain good bounds for chordal elimination we need a slightly stronger condition than $I$ being $q$-dominated.
Let $X_1,\ldots,X_r$ denote the \emph{maximal cliques} of the graph $G$, and let
\begin{align}\label{eq:fakeHj}
\hat{H_j} = \langle f: f\mbox{ generator of } I,\, f \in\K[X_j] \rangle.
\end{align}
Note that $\hat{H_j}\subset I\cap \K[X_j]$.
We assume that each (maximal) $\hat{H_j}$ is $q$-dominated.
Note that such condition is also satisfied in the case of finite fields.
The following lemma shows the reason why we need this assumption.

\begin{lem}\label{thm:qdominated}
  Let $I$ be such that for each maximal clique $X_j$ the ideal $\hat{H_j}$ (as in~\eqref{eq:fakeHj}) is $q$-dominated.
  Then in Algorithm~\ref{alg:eliml} we have that $J_l$ is $q$-dominated for any $x_l$.
\end{lem}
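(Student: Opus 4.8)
The plan is to unwind the definition: $J_l$ is $q$-dominated means that for every variable $x_i\in X_l$ there is a generator of $J_l$ that is $(x_i,q)$-dominated, so I would fix an arbitrary $x_l$ and an arbitrary $x_i\in X_l$ and exhibit such a generator. Since every clique of a finite graph sits inside a maximal one, I would first choose a maximal clique $X_j$ of $G$ with $X_l\subseteq X_j$, so in particular $x_i\in X_j$. Recalling from~\eqref{eq:fakeHj} that the generators of $\hat H_j$ are exactly the generators of $I$ lying in $\K[X_j]$, and using the hypothesis that $\hat H_j$ is $q$-dominated, I get a generator $f$ of $I$ with $f\in\K[X_j]$ whose leading monomial is $x_i^d$ for some $d\leq q$; this $f$ is $(x_i,q)$-dominated. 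Because the term order is lex with $x_0>\cdots>x_{n-1}$, the fact that $x_i^d$ is the largest monomial of $f$ forces every variable of $f$ to be $\leq x_i\leq x_l$.

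The crux of the argument is to upgrade ``$f\in\K[X_j]$'' to ``$f\in\K[X_l]$''. All variables of $f$ lie in $X_j$, and $x_l\in X_j$ with $G|_{X_j}$ a clique, so $x_l$ is adjacent to (or equal to) each variable of $f$. Hence $\{x_l\}\cup\{\text{variables of }f\}$ is a clique of $G|_{\{x_l,\dots,x_{n-1}\}}$ whose largest vertex is $x_l$; since $X_l$ is the largest such clique, this set is contained in $X_l$, i.e.\ $f\in\K[X_l]$.

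To finish, I would apply Lemma~\ref{thm:eliminiationcontainment}: $f$ is a generator of $I=I_0$, we have $x_l\leq x_0$, and $f\in\K[X_l]$, so $f$ is a generator of $J_l$. As $f$ is $(x_i,q)$-dominated, $J_l$ is $(x_i,q)$-dominated, and letting $x_i$ range over $X_l$ shows $J_l$ is $q$-dominated. The main (in fact only) obstacle is the second paragraph, namely arguing that the dominating generator produced from the large maximal clique $X_j$ actually has all of its variables inside the smaller clique $X_l$; this is where chordality enters, through the clique property of $X_j$ together with the characterization of $X_l$ as the largest clique of $G|_{\{x_l,\dots,x_{n-1}\}}$ containing $x_l$. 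Everything else is bookkeeping with results already in the paper, and no induction appears to be needed.
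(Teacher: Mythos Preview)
Your proof is correct and follows the same overall skeleton as the paper: pick a maximal clique $X_j\supseteq X_l$, extract from the $q$-domination hypothesis on $\hat H_j$ a generator $f$ of $I$ that is $(x_i,q)$-dominated, argue that $f\in\K[X_l]$, and then invoke Lemma~\ref{thm:eliminiationcontainment} to conclude that $f$ is a generator of $J_l$.

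The one point where you diverge is the justification of $f\in\K[X_l]$. The paper proves the inclusion $X_j\setminus\{x_{l+1},\dots,x_j\}\subset X_l$ by walking down the elimination tree from $x_j$ to $x_l$ and applying Lemma~\ref{thm:cliquecontainment} repeatedly. You instead observe directly that the variables of $f$, together with $x_l$, form a clique of $G|_{\{x_l,\dots,x_{n-1}\}}$ containing $x_l$, and therefore lie in $X_l$ by its very definition. This is a genuine simplification: it avoids the elimination tree entirely and uses only the characterization of $X_l$ as the unique largest clique of $G|_{\{x_l,\dots,x_{n-1}\}}$ containing $x_l$ (the same observation the paper exploits in the second part of the proof of Lemma~\ref{thm:eliminiationcontainment}). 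The paper's route has the mild advantage of making the role of the tree structure explicit, but your argument is shorter and requires no iteration.
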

\begin{proof}
  See Appendix~\ref{s:proofsfinitefield}.
\end{proof}

It should be mentioned that whenever we have a zero dimensional ideal $I$ such that each $\hat{H_j}$ is also zero dimensional, then the same results apply by letting $q$ be the largest degree in a Gr\"obner basis of any $\hat{H_j}$.

We derive now complexity bounds, in terms of field operations, for chordal elimination under the assumptions of Lemma~\ref{thm:qdominated}.
We use the following \emph{parameters}:
$n$ is the number of variables, $s$ is the number of equations, $\kappa$ is the clique number (or treewidth), i.e., the size of the largest clique of $G$.

\begin{thm}\label{thm:complexityeliml}
  Let $I$ be such that each (maximal) $\hat{H_j}$ is $q$-dominated.
  In Algorithm~\ref{alg:eliml}, the complexity of computing $I_{l}$ is $\widetilde{O}(s+lq^{\alpha\kappa})$.
  We can find all elimination ideals  in $\widetilde{O}(nq^{\alpha\kappa})$.
Here $\widetilde{O}$ ignores polynomial factors in $\kappa$.
\end{thm}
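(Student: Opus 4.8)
The plan is to analyze Algorithm~\ref{alg:eliml} step by step, bounding the cost of the single iteration that produces $I_{l+1}$ from $I_l$, and then summing over iterations. The key structural fact I would use is Proposition~\ref{thm:decompositionpreserve} together with Lemma~\ref{thm:qdominated}: at stage $l$ the ideal $J_l$ lives in the ring $\K[X_l]$, which has at most $\kappa$ variables, and $J_l$ is $q$-dominated. After preprocessing (pseudo-reduction, as in Lemma~\ref{thm:preprocessbasis}), the number of generators of $J_l$ is $O(q^\kappa)$, so the Gr\"obner basis computation on line~\ref{line:appendgroebner}, performed entirely in the small ring $\K[X_l]$, costs $\widetilde{O}(q^{\alpha\kappa})$ field operations by the definition of $\alpha$ (with the polynomial-in-$\kappa$ factors absorbed into $\widetilde{O}$). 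Extracting $\elim{l+1}{J_l}$ and $\Coeff{l+1}{J_l}$ from this basis is just a pass over the generators and is dominated by that cost.

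First I would handle the initial preprocessing: reading and pseudo-reducing the input $s$ generators costs $\widetilde{O}(s)$, which accounts for the additive $s$ term. Then I would argue that each of the $l$ iterations of the main \texttt{for} loop costs $\widetilde{O}(q^{\alpha\kappa})$: the \textsc{SplitGens} step is a linear scan of the current generator list; the expensive part is \textsc{FindElim\&Coeff}, which as noted is a Gr\"obner basis computation confined to $\K[X_l]$ on $O(q^\kappa)$ generators, hence $\widetilde{O}(q^{\alpha\kappa})$; forming $I_{l+1}$ and $W_{l+1}$ is concatenation of generator lists. One subtlety I must address is that across iterations the generator list of $I_l$ could grow (we \emph{append} Gr\"obner bases rather than replace), so I need to observe that every generator of every $J_l$ still lies in some $\K[X_m]$ and after pseudo-reduction the total number of surviving generators in each small ring stays $O(q^\kappa)$ — essentially re-invoking the argument of Lemma~\ref{thm:preprocessbasis} locally in $\K[X_l]$, using that $J_l$ is $q$-dominated by Lemma~\ref{thm:qdominated}. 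Summing the per-iteration bound over $l$ iterations gives $\widetilde{O}(s + l\,q^{\alpha\kappa})$ for computing $I_l$. Taking $l = n-1$ (or $L=n$) yields $\widetilde{O}(s + n\,q^{\alpha\kappa})$ for all elimination ideals, and since $s = O(q^n)$ is typically dominated — more to the point, $s \le n\,q^\kappa$ after preprocessing when each $\hat H_j$ is $q$-dominated — this simplifies to $\widetilde{O}(n\,q^{\alpha\kappa})$.

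The main obstacle I anticipate is the bookkeeping around the growing generator set and making precise that the "effective" size of $J_l$ at every stage is $O(q^\kappa)$ rather than something that accumulates over iterations; this is where the $q$-dominated hypothesis on the $\hat H_j$ (via Lemma~\ref{thm:qdominated}) does the real work, since it guarantees a pure power of each variable is always available to pseudo-reduce against. A secondary point to be careful about is that $\alpha$ is defined for a single preprocessed $q$-dominated ideal, so I should confirm each $J_l$ genuinely meets that hypothesis before invoking the $\widetilde{O}(q^{\alpha\kappa})$ bound — which is exactly the content of Lemma~\ref{thm:qdominated}. Everything else is routine accounting of linear-time list operations, which I would not spell out in detail.
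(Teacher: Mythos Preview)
Your approach is essentially the paper's: invoke Lemma~\ref{thm:qdominated} so that each $J_l\subset\K[X_l]$ is $q$-dominated, apply the definition of $\alpha$ to bound each Gr\"obner basis by $\widetilde{O}(q^{\alpha\kappa})$, sum over $l$ iterations, and at the end use Lemma~\ref{thm:preprocessbasis} on each maximal clique to absorb $s=O(nq^\kappa)$ into $\widetilde{O}(nq^{\alpha\kappa})$.

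The one place where you are looser than the paper is the cost of \textsc{SplitGens}. You say it is ``a linear scan of the current generator list'' and fold it into the per-iteration $\widetilde{O}(q^{\alpha\kappa})$, but the current list of generators of $I_l$ has size $\Theta(s+lq^{\kappa})$ (the original $s$ plus $O(q^{\kappa})$ new ones from each previous Gr\"obner basis), so a naive scan at every step gives $\sum_{l}\Theta(s+lq^{\kappa})=\Theta(ls+l^{2}q^{\kappa})$, which does \emph{not} match the stated $\widetilde{O}(s+lq^{\alpha\kappa})$ (you lose a factor of $l$ on the $s$ term). The paper avoids this by a one-time routing trick: for each polynomial (original or newly produced) compute once the largest $x_m$ with $f\in\K[X_m]$ and send it directly to that $J_m$, so the total splitting cost is $\widetilde{O}(s+\sum_l s_l)$ with $s_l=O(q^{\kappa})$ the number of new generators of $\elim{l+1}{J_l}$. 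You correctly flag this bookkeeping as ``the main obstacle,'' but your proposed fix---pseudo-reducing so each small ring holds $O(q^{\kappa})$ generators---controls the size of each $J_l$, not the cost of repeatedly scanning all of $I_l$; to match the claimed bound you need the routing argument (or an equivalent amortization) rather than a per-iteration linear scan.
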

\begin{proof}
  In each iteration there are essentially only two relevant operations: decomposing $I_l = J_l+K_{l+1}$, and finding a Gr\"obner basis for $J_l$.

  For each $x_l$, Lemma~\ref{thm:qdominated} tells us that $J_l$ is $q$-dominated.
  Thus, we can compute a lex Gr\"obner basis of $J_l$ in $\widetilde{O}(q^{\alpha\kappa})$.
  Here we assume that the initial $s$ equations were preprocessed, and note that the following equations are also preprocessed as they are obtained from minimal Gr\"obner bases.
  To obtain $I_l$ we compute at most $l$ Gr\"obner bases, which we do in $\widetilde{O}(lq^{\alpha\kappa})$.

  It just remains to bound the time of decomposing $I_l=J_l+K_{l+1}$.
  Note that if we do this decomposition in a naive way we will need  $\Theta(ls)$ operations.
  But we can improve such bound easily.
  For instance, assume that in the first iteration we compute for every generator $f_j$ the largest $x_l$ such that $f_j\in J_l$.
  Thus $f_j$ will be assigned to $K_{m+1}$ for all $x_m>x_l$, and then it will be assigned to $J_l$.
  We can do this computation in $\widetilde{O}(s)$.
  We can repeat the same process for all polynomials $p$ that we get throughout the algorithm.
  Let $s_l$ be the number of generators of $\elim{l+1}{J_l}$.
  Then we can do all decompositions in $\widetilde{O}(s+s_0+s_1+\ldots+s_{l-1})$.
  We just need to bound $s_l$.

  It follows from Lemma~\ref{thm:preprocessbasis} that for each clique $X_l$, the size of any minimal Gr\"obner basis of arbitrary polynomials in $X_l$ is at most $q^\kappa+\kappa$.
  As the number of generators of $\elim{l+1}{J_l}$ is bounded by the size of the Gr\"obner basis of $J_l\subset \K[X_l]$,
  then $s_l = \widetilde{O}(q^\kappa)$. 
  Thus, we can do all decompositions in $\widetilde{O}(s+lq^{\kappa})$.

  Thus, the total cost to compute $I_l$ is
  \begin{align*}
    \widetilde{O}(s + lq^{\kappa}+ lq^{\alpha\kappa}) = \widetilde{O}(s + lq^{\alpha\kappa}).
  \end{align*}
  In particular, we can compute $I_{n-1}$ in $\widetilde{O}(s + nq^{\alpha\kappa})$.
  Note that as each of the original $s$ equations is in some $X_l$, then Lemma~\ref{thm:preprocessbasis} implies that $s = O(nq^{\kappa})$.
  Thus, we can find all elimination ideals in $\widetilde{O}(nq^{\alpha\kappa})$.
\end{proof}
\begin{rem}
  Note that to compute the bound $W$ of~\eqref{eq:intersectionWl} we need to use chordal elimination $l$ times, so the complexity is $O(ls+l^2q^{\alpha\kappa})$. 
\end{rem}

\begin{cor}\label{thm:complexityvariety}
  Let $I$ be such that each (maximal) $\hat{H_j}$ is $q$-dominated.
  The complexity of Algorithm~\ref{alg:elimcliques} is $\widetilde{O}(nq^{\alpha\kappa})$.
  Thus, we can also describe $\V(I)$ in $\widetilde{O}(nq^{\alpha\kappa})$.
Here $\widetilde{O}$ ignores polynomial factors in $\kappa$.
\end{cor}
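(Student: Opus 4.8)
The plan is to charge the cost of Algorithm~\ref{alg:elimcliques} to three parts: the initial call \textsc{ChordElim}$(I,G)$ that produces $J_0,\dots,J_{n-1}$, the $n-1$ iterations of the main loop, and the final description of $\V(I)$ in terms of the $H_l$. For the first part, Theorem~\ref{thm:complexityeliml} applies verbatim and gives $\widetilde{O}(nq^{\alpha\kappa})$; moreover Lemma~\ref{thm:qdominated} shows every $J_l$ produced is $q$-dominated, and Lemma~\ref{thm:preprocessbasis}, applied in the clique $X_l$ (which has at most $\kappa$ variables), shows that after preprocessing each $J_l$ has $\widetilde{O}(q^{\kappa})$ generators, all supported on $X_l$.

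For a single iteration, with $x_l$ having parent $x_p$, the subproblem is the ideal $I_C=H_p+J_l$ over the graph $G|_C$, where $C=X_p\cup\{x_l\}$. By Lemma~\ref{thm:cliquecontainment} we have $X_l\setminus\{x_l\}\subset X_p$, hence $X_p\cup X_l=C$ and $|C|\le\kappa+1$; and every clique of $G|_C$ is a clique of $G$, so the clique number of $G|_C$ is at most $\kappa$. The key point is an induction over the iterations establishing that each parent ideal $H_p$ is $q$-dominated, has $\widetilde{O}(q^{\kappa})$ generators, and has all generators supported on $X_p$. The base case is $H_{n-1}=J_{n-1}$, covered by the previous paragraph. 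For the inductive step, the call that produced $H_p$ ran \textsc{ChordElim} until only the last clique (namely $X_p$) remained, so its output is, up to Lemma~\ref{thm:eliminiationcontainment}, precisely the ideal ``$J$'' attached by chordal elimination to $X_p$ in that subproblem; Lemma~\ref{thm:qdominated}, applied to that subproblem (whose hypotheses hold by the same induction), makes it $q$-dominated, and Lemma~\ref{thm:preprocessbasis} bounds its generator count by $\widetilde{O}(q^{\kappa})$. Granting this, the maximal cliques of $G|_C$ lie among $X_p$, $X_l$, and $C$, and each corresponding generator-ideal $\hat{H}$ contains $H_p$ or $J_l$; since $H_p$ is $q$-dominated on $X_p$, $J_l$ on $X_l$, and $X_p\cup X_l=C$, the subproblem $I_C$ over $G|_C$ satisfies the hypothesis of Theorem~\ref{thm:complexityeliml}. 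It has $O(\kappa)$ variables, clique number at most $\kappa$, and $\widetilde{O}(q^{\kappa})$ preprocessed generators, so that theorem bounds this \textsc{ChordElim} call by $\widetilde{O}(q^{\kappa}+q^{\alpha\kappa})=\widetilde{O}(q^{\alpha\kappa})$ (using $\alpha\ge1$ and absorbing $\mathrm{poly}(\kappa)$ factors). Forming $C$ and $I_C$, running \textsc{MCS} on $G|_C$, and re-preprocessing $I_C$ are all $\mathrm{poly}(\kappa)\cdot\widetilde{O}(q^{\kappa})$, hence dominated. Summing over the $n-1$ iterations and adding the initial call yields $\widetilde{O}(nq^{\alpha\kappa})$.

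Finally, Corollary~\ref{thm:elimcliques} guarantees that the returned ideals satisfy $H_l\equal I\cap\K[X_l]$, so Lemma~\ref{thm:cliquesvariety} gives $\V(I)=\V(H_0+H_1+\dots+H_{n-1})$; together with the elimination tree, which records how a partial solution on each clique glues to its parent, the family $\{H_l\}$ is the promised description of $\V(I)$, produced in $\widetilde{O}(nq^{\alpha\kappa})$.

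The step I expect to be the main obstacle is the induction asserting that every parent ideal $H_p$ stays $q$-dominated and keeps only $\widetilde{O}(q^{\kappa})$ generators, all supported on $X_p$, after the recursive eliminations and re-preprocessing: this is exactly what keeps each recursive \textsc{ChordElim} call inside the regime in which Theorem~\ref{thm:complexityeliml} is available, and it is the only place where the argument is not pure bookkeeping. A secondary point requiring care is verifying that combining $H_p$ with $J_l$ and re-preprocessing does not inflate the generator count, so that the ``$s$'' term in Theorem~\ref{thm:complexityeliml} for the subproblem stays $\widetilde{O}(q^{\kappa})$.
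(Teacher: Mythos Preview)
Your proposal is correct and follows essentially the same approach as the paper: bound the initial \textsc{ChordElim} by Theorem~\ref{thm:complexityeliml}, then bound each of the $n-1$ recursive calls on $I_C$ (with $|C|\le\kappa+1$) by $\widetilde{O}(q^{\alpha\kappa})$, and appeal to Lemma~\ref{thm:cliquesvariety} for the description of $\V(I)$. In fact you are more careful than the paper, which simply asserts the $\widetilde{O}(\kappa q^{\alpha\kappa})$ bound for each subproblem without checking that $I_C$ inherits the $q$-dominated hypothesis of Theorem~\ref{thm:complexityeliml}; your induction on the $H_p$'s (together with the observation that the maximal cliques of $G|_C$ are among $X_p$ and $X_l$) is exactly what fills that gap.
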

\begin{proof}
  The first part of the algorithm is chordal elimination, which we can do in $O(nq^{\alpha\kappa})$, as shown above.
  Observe also that Maximum Cardinality Search runs in linear time, so we can ignore it.
  The only missing part is to compute the elimination ideas of $I_C$, where $C=X_p\cup\{x_l\}$. 
  As $|C|\leq \kappa+1$, then the cost of chordal elimination is $\widetilde{O}(\kappa q^{\alpha\kappa})=\widetilde{O}(q^{\alpha\kappa})$.
  Thus the complexity of Algorithm~\ref{alg:elimcliques} is still $\widetilde{O}(nq^{\alpha\kappa})$.

  We now prove the second part. 
  As mentioned before, the Corollary~\ref{thm:exactsimplicialeachl} applies for $q$-dominated ideals, so all eliminations are successful.
  From Lemma~\ref{thm:cliquesvariety} and the following remarks we know that the elimination ideals $H_l$, found with Algorithm~\ref{alg:elimcliques}, give a natural description of $\V(I)$.
\end{proof}

The bounds above tell us that for a fixed $\kappa$, we can find all clique elimination ideals, and thus describe the variety, in $O(n)$.
This is reminiscent to many graph problems (e.g., Hamiltonian circuit, vertex colorings, vertex cover) which are NP-hard in general, but are linear for fixed treewidth~\cite{bodlaender2008combinatorial}.
Similar results hold for some types of constraint satisfaction problems~\cite{Dechter2003}.
These type of problems are said to be fixed parameter tractable (FPT) with treewidth as the parameter.

Our methods provide an algebraic solution to some classical graph problems.
In particular, we show now an application of the bounds above for finding graph colorings.
It is known that the coloring problem can be solved in linear time for bounded treewidth~\cite{bodlaender2008combinatorial}.
We can prove the same result by encoding colorings into polynomials.

\begin{cor}\label{thm:complexitycolorings}
  Let $G$ be a graph and $\bar{G}$ a chordal completion with largest clique of size~$\kappa$.
  We can describe all $q$-colorings of $G$ in $\widetilde{O}(nq^{\alpha\kappa})$.
\end{cor}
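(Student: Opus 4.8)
The plan is to encode the $q$-colorings of $G$ as the variety of an explicit polynomial ideal and then invoke Corollary~\ref{thm:complexityvariety}. Work over $\K=\C$, identify the $q$ colors with the $q$-th roots of unity, and let $I$ be the \emph{coloring ideal}
\begin{align*}
  I = \big\langle\, x_i^q - 1 \ (0\le i < n),\quad \textstyle\sum_{k=0}^{q-1} x_i^k x_j^{q-1-k}\ ((i,j)\in E(G))\,\big\rangle .
\end{align*}
Since $\sum_{k=0}^{q-1}x_i^k x_j^{q-1-k} = (x_i^q - x_j^q)/(x_i - x_j)$ vanishes on a pair of $q$-th roots of unity exactly when the two roots are distinct, $\V(I)$ is precisely the set of proper $q$-colorings of $G$, so describing $\V(I)$ describes all $q$-colorings (this is exactly the setup of Example~\ref{exmp:3colorings}).

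Next I would check that $I$ meets the hypotheses of Corollary~\ref{thm:complexityvariety}. The vertex generator $x_i^q-1$ involves only $x_i$, and the edge generator for $(i,j)$ involves exactly $x_i$ and $x_j$, so the sparsity graph of this generating set is exactly $G$. Fixing a perfect elimination ordering of the given chordal completion $\bar G$ (which exists by Definition~\ref{defn:perfectelimination} and can be computed by MCS), the variables of $R$ inherit an order that is a perfect elimination ordering of $\bar G$, and $G(F)=G\subseteq\bar G$. Now let $X_j$ be any maximal clique of $\bar G$ and $x_i\in X_j$: the generator $x_i^q-1$ lies in $\K[X_j]$ and has leading monomial $x_i^q$, hence is $(x_i,q)$-dominated. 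Therefore each $\hat{H_j}$ of~\eqref{eq:fakeHj} is $q$-dominated; in particular $I$ is $q$-dominated, so by Corollary~\ref{thm:exactsimplicialeachl} the domination condition holds, chordal elimination succeeds, and $I$ is zero dimensional.

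Finally, apply Corollary~\ref{thm:complexityvariety}: since each maximal $\hat{H_j}$ is $q$-dominated and $\bar G$ has clique number $\kappa$, Algorithm~\ref{alg:elimcliques} computes the clique elimination ideals $H_l\equal I\cap\K[X_l]$, and hence a description of $\V(I)$, in $\widetilde{O}(nq^{\alpha\kappa})$ field operations; since $\V(I)$ is the set of $q$-colorings of $G$, this proves the claim.

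The only genuine point requiring care is the verification that the $q$-domination hypothesis of Corollary~\ref{thm:complexityvariety} holds for \emph{every} maximal clique — this is immediate here because the generators $x_i^q-1$ are present and each lies in every clique ideal containing $x_i$ — together with the cosmetic but necessary observation that the chosen generating set has sparsity graph exactly $G$, so that $\bar G$ is a legitimate chordal completion of $G(F)$.
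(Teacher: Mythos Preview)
Your proof is correct and follows essentially the same approach as the paper: encode $q$-colorings via the coloring ideal, observe that the generators $x_i^q-1$ make each maximal $\hat{H_j}$ $q$-dominated, and invoke Corollary~\ref{thm:complexityvariety}. Your write-up is simply more explicit about verifying the $q$-domination hypothesis for every maximal clique, whereas the paper just asserts it.
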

\begin{proof}
  It is known that graph $q$-colorings can be encoded with the following system of polynomials:
  \begin{subequations} \label{eq:qcolorings}
  \begin{align}
 x_i^q - 1&=0, &i\in V\\
 x_i^{q-1}+x_i^{q-2}x_j+\dots+x_ix_j^{q-2}+x_j^{q-1}&=0,&(i,j)\in E
  \end{align}
  \end{subequations}
  where $V,E$ denote the vertices and edges, and where each color corresponds to a different square root of unity~\cite{bayer1982division,hillar2008algebraic}.
  Note that the ideal $I_G$ given by these equations satisfies the $q$-dominated condition stated before.
  The chordal graph associated to such ideal is $\bar{G}$.
  The result follows from Corollary~\ref{thm:complexityvariety}.
\end{proof}

To conclude, we emphasize the differences between our results to similar methods in graph theory and constraint satisfaction.
First, note that for systems of polynomials we do not know a priori a discrete set of possible solutions.
And even if the variety is finite, the solutions may not have a rational (or radical) representation.
In addition, by using Gr\"obner bases methods we take advantage of many well studied algebraic techniques.
Finally, even though our analysis here assumes zero dimensionality, we can use our methods in underconstrained systems and, if they are close to satisfy the $q$-dominated condition, they should perform well.
Indeed, in Section~\ref{s:sensorsequations} we test our methods on underconstrained systems.

\section{Applications}\label{s:applications}

In this section we show numerical evaluations of the approach proposed in some concrete applications.
Our algorithms were implemented using Sage~\cite{sage}.
Gr\"obner bases are computed with Singular's interface~\cite{singular}, except when $\K=\F_2$ for which we use PolyBoRi's interface~\cite{polybori}.
Chordal completions of small graphs ($n<32$) are found using Sage's vertex separation algorithm.
The experiments are performed on an i7 PC with 3.40GHz, 15.6 GB RAM, running Ubuntu 12.04. 

We will show the performance of chordal elimination compared to the Gr\"obner bases algorithms from Singular and PolyBoRi.
In all the applications we give here chordal elimination is successful because of the results of Section~\ref{s:exactelim}.
It can be seen below that in all the applications our methods perform better, as the problem gets bigger, than the algorithms from Singular and PolyBoRi.

As mentioned before, chordal elimination shares some of the limitations as other elimination methods and it performs the best under the conditions studied in Section~\ref{s:finitefield}.
We show two examples that meet such conditions in Sections~\ref{s:coloringsequations} and~\ref{s:cryptoequations}.
The first case relates to the coloring problem, which was already mentioned in Corollary~\ref{thm:complexitycolorings}.
The second case is an application to cryptography, where we solve equations over the finite field $\F_2$.

After that, Sections~\ref{s:sensorsequations} and~\ref{s:differentialequations} show cases where the conditions from Section~\ref{s:finitefield} are not satisfied.
We use two of the examples from~\cite{nie2008sparse}, where they study a similar chordal approach for semidefinite programming relaxations (SDP).
Gr\"obner bases are not as fast as SDP relaxations, as they contain more information, so we use smaller scale problems.
The first example is the sensor localization problem and the second one is given by discretizations of differential equations.

\subsection{Graph colorings}\label{s:coloringsequations}

We consider Equations~\eqref{eq:qcolorings} for $q$-colorings of a graph, over the field $\K=\Q$.
We fix the graph $G$ of Figure~\ref{fig:graphunique3col} and vary the number of colors $q$.
Such graph was considered in~\cite{hillar2008algebraic} to illustrate a characterization of uniquely colorable graphs using Gr\"obner bases.
We use a different ordering of the vertices that determines a simpler chordal completion (the clique number is 5).
\begin{figure}[htb]
  \centering
  \includegraphics[scale=0.35]{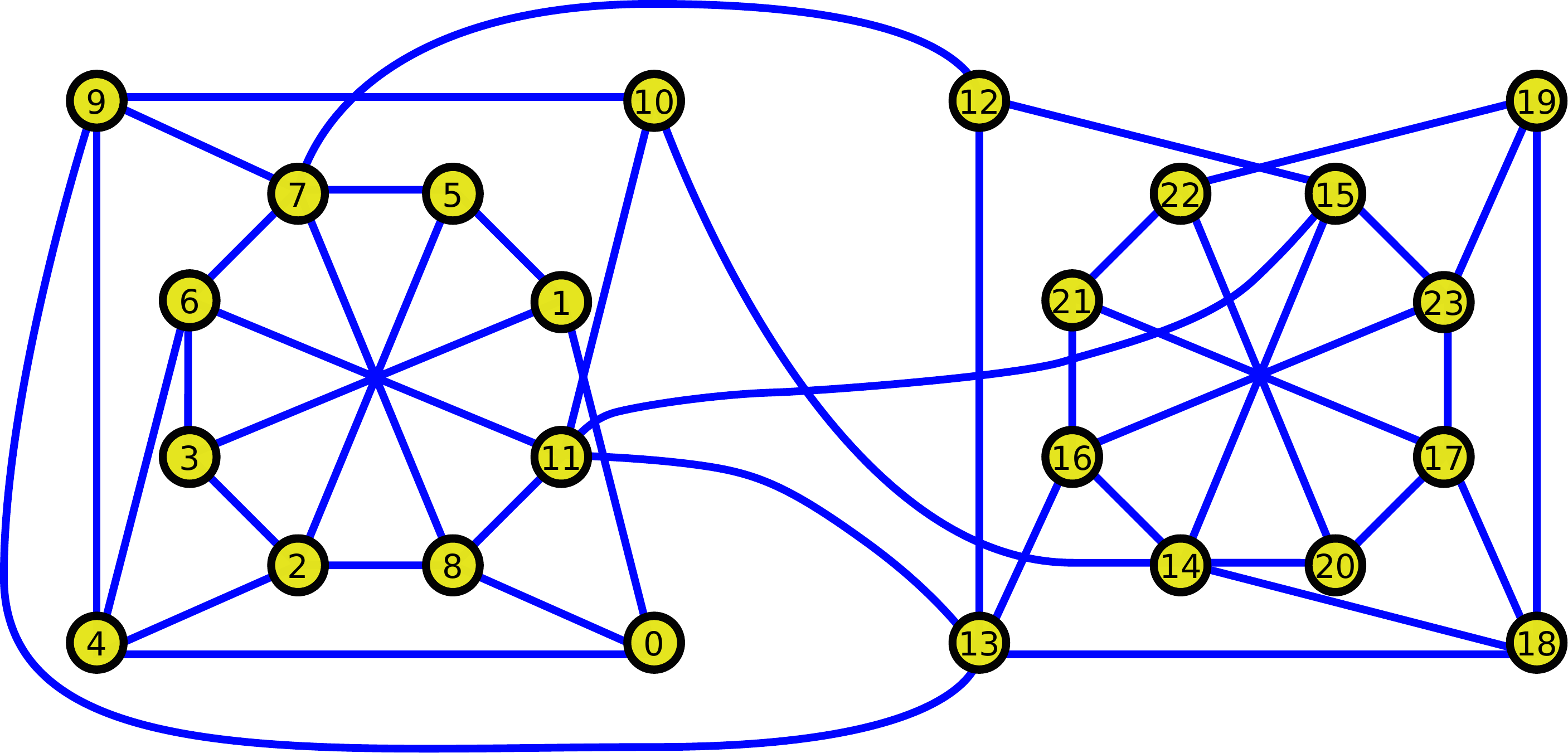}
  \caption[24-vertex graph with a unique 3-coloring]
  {Graph with a unique 3-coloring~\cite{hillar2008algebraic}.}
  \label{fig:graphunique3col}
\end{figure}

Table~\ref{tab:timingscoloringsunique} shows the performance of Algorithm~\ref{alg:eliml} and Algorithm~\ref{alg:elimcliques}, 
compared to Singular's default Gr\"obner basis algorithm using degrevlex order (lex order takes much longer).
It can be seen how the cost of finding a Gr\"obner basis increases very rapidly as we increase $q$, as opposed to our approach.
In particular, for $q=4$ we could not find a Gr\"obner basis after 60000 seconds (16.7 hours), but our algorithms ran in less than one second.
The underlying reason for such long time is the large size of the solution set (number of 4-colorings), which is $|\V(I)|=572656008$.
Therefore, it is expected that the size of its Gr\"obner basis is also very large.
On the other hand, the projection on each clique is much smaller, $|\V(H_l)|\leq 576$, and thus the corresponding Gr\"obner bases (found with Algorithm~\ref{alg:elimcliques}) are also much simpler. 

\begin{table}[htb]
\centering
\caption[Performance on $q$-colorings in a 24-vertex graph]
{Performance (in seconds) on Equations~\eqref{eq:qcolorings}  (graph of Figure~\ref{fig:graphunique3col}) for: Algorithm~\ref{alg:eliml}, Algorithm~\ref{alg:elimcliques}, computing a degrevlex Gr\"obner basis with the original equations (Singular). 
  One experiment was interrupted after 60000 seconds.
}
  \label{tab:timingscoloringsunique}
\begin{tabular}{@{}lllllll@{}}
\toprule
$q$& Variables& Equations& Monomials&  ChordElim& CliquesElim& DegrevlexGB \\
\midrule
2  & 24       & 69       & 49       &  0.058    & 0.288      & 0.001       \\
\midrule
3  & 24       & 69       & 94       &  0.141    & 0.516      & 5.236       \\
\midrule
4  & 24       & 69       & 139      &  0.143    & 0.615      &$>60000$     \\
\midrule
5  & 24       & 69       & 184      &  0.150    & 0.614      & -           \\
\midrule
6  & 24       & 69       & 229      &  0.151    & 0.638      & -           \\
\bottomrule
\end{tabular}
\end{table}

We repeat the same experiments, this time with the blue/solid graph of Figure~\ref{fig:graph10notchordal}.
Table~\ref{tab:timingscolorings} shows the results obtained.
This time we also show the cost of computing a lex Gr\"obner basis, using as input the clique elimination ideals $H_l$.
Again, we observe that chordal elimination is much faster than finding a Gr\"obner basis.
We also see that we can find faster a lex Gr\"obner basis than for degrevlex, by making use of the output from chordal elimination.

\begin{table}[htb]
\centering
  \caption[Performance on $q$-colorings in a 10-vertex graph]
  {Performance (in seconds) on Equations~\eqref{eq:qcolorings}  (blue/solid graph of Figure~\ref{fig:graph10notchordal}) for: Algorithm~\ref{alg:eliml}, Algorithm~\ref{alg:elimcliques}, 
  computing  a lex Gr\"obner basis with input $H_l$, and computing a degrevlex Gr\"obner basis with the original equations (Singular). 
}
  \label{tab:timingscolorings}
\begin{tabular}{@{}llllllll@{}}
\toprule
$q$& Vars& Eqs& Mons& ChordElim& CliquesElim& LexGB from $H_l$ & DegrevlexGB \\ \midrule
5  & 10       & 28       & 75       & 0.035    & 0.112      & 0.003             & 0.003        \\ 
\midrule
10 & 10       & 28       & 165      & 0.044    & 0.130      & 0.064             & 0.202        \\ 
\midrule
15 & 10       & 28       & 255      & 0.065    & 0.188      & 4.539             & 8.373        \\ 
\midrule
20 & 10       & 28       & 345      & 0.115    & 0.300      & 73.225            & 105.526      \\ 
\bottomrule
\end{tabular}
\end{table}

\subsection{Cryptography}\label{s:cryptoequations}
We consider the parametric family $SR(n,r,c,e)$ of AES variants from~\cite{cid2005small}.
Such cypher can be embedded into a structured system of polynomials equations over $\K=\F_2$ as shown in~\cite{cid2005small}.
Note that as the field is finite the analysis from Section~\ref{s:finitefield} holds.

We compare the performance of Algorithm~\ref{alg:eliml} to PolyBoRi's default Gr\"obner bases algorithm, using both lex and degrevlex order.
As the input to the cipher is probabilistic, for the experiments we seed the pseudorandom generator in fixed values of 0, 1, 2.
We fix the values $r=1,c=2,e=4$ for the experiments
and we vary the parameter $n$, which corresponds to the number of identical blocks used for the encryption.

\begin{table}[htb]
\centering
\caption[Performance on cryptography equations]
{Performance (in seconds) on the equations of $SR(n,1,2,4)$ for: Algorithm~\ref{alg:eliml}, and computing  (lex/degrevlex) Gr\"obner bases (PolyBoRi). 
Three different experiments (seeds) are considered for each $n$.
Some experiments aborted due to insufficient memory.}
  \label{tab:timingsAES}
\begin{tabular}{@{}lllllll@{}}
\toprule
$n$                 & Variables            & Equations            & Seed &ChordElim& LexGB            & DegrevlexGB      \\ \midrule
\multirow{3}{*}{4}  & \multirow{3}{*}{120} & \multirow{3}{*}{216} & 0    & 517.018 & 217.319          & 71.223           \\ \cmidrule(l){4-7} 
                    &                      &                      & 1    & 481.052 & 315.625          & 69.574           \\ \cmidrule(l){4-7} 
                    &                      &                      & 2    & 507.451 & 248.843          & 69.733           \\ \midrule
\multirow{3}{*}{6}  & \multirow{3}{*}{176} & \multirow{3}{*}{320} & 0    & 575.516 & 402.255          & 256.253          \\ \cmidrule(l){4-7} 
                    &                      &                      & 1    & 609.529 & 284.216          & 144.316          \\ \cmidrule(l){4-7} 
                    &                      &                      & 2    & 649.408 & 258.965          & 133.367          \\ \midrule
\multirow{3}{*}{8}  & \multirow{3}{*}{232} & \multirow{3}{*}{424} & 0    & 774.067 & 1234.094         & 349.562          \\ \cmidrule(l){4-7} 
                    &                      &                      & 1    & 771.927 & $>1500$, aborted & 369.445          \\ \cmidrule(l){4-7} 
                    &                      &                      & 2    & 773.359 & 1528.899         & 357.200          \\ \midrule
\multirow{3}{*}{10} & \multirow{3}{*}{288} & \multirow{3}{*}{528} & 0    & 941.068 & $>1100$, aborted & 1279.879         \\ \cmidrule(l){4-7} 
                    &                      &                      & 1    & 784.709 & $>1400$, aborted & 1150.332         \\ \cmidrule(l){4-7} 
                    &                      &                      & 2    & 1124.942& $>3600$, aborted & $>2500$, aborted\\ \bottomrule
\end{tabular}
\end{table}

Table~\ref{tab:timingsAES} shows the results of the experiments.
We observe that for small problems standard Gr\"obner bases outperform chordal elimination, particularly using degrevlex order.
Nevertheless, chordal elimination scales better, being faster than both methods for $n=10$.
In addition, standard Gr\"obner bases have higher memory requirements, which is reflected in the many experiments that aborted for this reason.

\subsection{Sensor Network Localization}\label{s:sensorsequations}
We consider the \emph{sensor network localization} problem, also called \emph{graph realization} problem, given by the equations:
\begin{subequations} \label{eq:sensorsequations}
\begin{align}
  \|x_i - x_j\|^2 &= d_{ij}^2 &(i,j)\in \mathcal{A}\\
  \|x_i - a_k\|^2 &= e_{ik}^2 &(i,k)\in \mathcal{B}
\end{align}
\end{subequations}
where $x_1,\ldots,x_n$ are unknown sensor positions, $a_1,\ldots,a_m$ are some fixed anchors, and $\mathcal{A},\mathcal{B}$ are some sets of pairs which correspond to sensors that are close enough.
We consider the problem over the field $\K=\Q$.
Observe that the set $\mathcal{A}$ determines the graph structure of the system of equations.
Note also that the equations are simplicial (see Definition~\ref{defn:simplicial}) and thus Theorem~\ref{thm:exactsimplicial} says that chordal elimination succeeds.
However, the conditions from Section~\ref{s:finitefield} are not satisfied.

We generate random test problems in a similar way as in~\cite{nie2008sparse}.
First we generate $n=20$ random sensor locations $x_i^*$ from the unit square $[0,1]^2$.
The $m=4$ fixed anchors are $(1/2\pm 1/4, 1/2\pm 1/4)$.
We fix a proximity threshold $D$ which we set to either $D=1/4$ or $D=1/3$.
Set $\mathcal{A}$ is such that every sensor is adjacent to at most $3$ more sensors and $\|x_i-x_j\|\leq D$.
Set $\mathcal{B}$ is such that every anchor is related to all sensors with $\|x_i-a_k\|\leq D$.
For every $(i,j)\in \mathcal{A}$ and $(i,k)\in \mathcal{B}$ we compute $d_{ij},e_{ik}$.

We compare the performance of Algorithm~\ref{alg:eliml} and Singular's algorithms.
We consider Singular's default Gr\"obner bases algorithms with both degrevlex and lex orderings, and FGLM algorithm if the ideal is zero dimensional.

We use two different values for the proximity threshold $D=1/4$ and $D=1/3$.
For $D=1/4$ the system of equations is underconstrained (positive dimensional), and for $D=1/3$ the system is overconstrained (zero dimensional).
We will observe that in both cases chordal elimination performs well.
Degrevlex Gr\"obner bases perform slightly better in the overconstrained case, and poorly in the underconstrained case.
Lex Gr\"obner bases do not compete with chordal elimination in either case.

Table~\ref{tab:timingssensorsboth} summarizes the results obtained.
We used 50 random instances for the underconstrained case ($D=1/4$) and 100 for the overconstrained case ($D=1/3$).
We can see that in the underconstrained case neither lex or degrevelex Gr\"obner bases ever finished within 1000 seconds.
On the other hand, chordal elimination completes more than half of the instances.
For the overconstrained case, lex Gr\"obner basis algorithm continues to perform poorly.
On the other hand,  degrevlex Gr\"obner bases and the FGLM algorithm have slightly better statistics than chordal elimination.

\begin{table}[htb]
\centering
\caption[Performance on sensor localization equations]
{Statistics of experiments performed on random instances of Equations~\eqref{eq:sensorsequations}. We consider two situations: 50 cases of underconstrained systems ($D=1/4$) and 100 cases of overconstrained systems ($D=1/3$).
Experiments are interrupted after 1000 seconds.
}
  \label{tab:timingssensorsboth}
\begin{tabular}{@{}llllllll|l@{}}
\toprule
$D$                 & Repet.             & Vars              & Eqs                      &ChordElim& LexGB& DegrevlexGB& LexFGLM&              \\ \midrule
\multirow{2}{*}{1/4}&\multirow{2}{*}{ 50}&\multirow{2}{*}{40}&\multirow{2}{*}{$39\pm 5$}& 478.520 & 1000 & 1000       & -         &Mean time (s) \\ \cmidrule(l){5-9} 
                    &                    &                   &                          &  56\%   & 0\%  & 0\%        & -         &Completed     \\ \midrule
\multirow{2}{*}{1/3}&\multirow{2}{*}{100}&\multirow{2}{*}{40}&\multirow{2}{*}{$48\pm 6$}&  298.686& 1000 & 219.622    & 253.565   &Mean time (s) \\ \cmidrule(l){5-9} 
                    &                    &                   &                          &  73\%   & 0\%  & 81\%       & 77\%      &Completed     \\ \bottomrule
\end{tabular}
\end{table}

Despite the better statistics of degrevlex and FGLM in the overconstrained case, one can identify that for several of such instances chordal elimination performs much better.
This can be seen in Figure~\ref{fig:histogram_sensors}, where we observe the scatter plot of the performance of both FGLM and Algorithm~\ref{alg:eliml}.
In about half of the cases (48) both algorithms are within one second and for the rest: in 29 cases FGLM is better, in 23 chordal elimination is better.
To understand the difference between these two groups, we can look at the clique number of the chordal completions.
Indeed, the 23 cases where chordal elimination is better have a mean clique number of $5.48$, compared to $6.97$ of the 29 cases where FGLM was better.
This confirms that chordal elimination is a suitable method for cases with chordal structure, even in the overconstrained case.

\begin{figure}[htb]
  \centering
  \includegraphics[width=9cm]{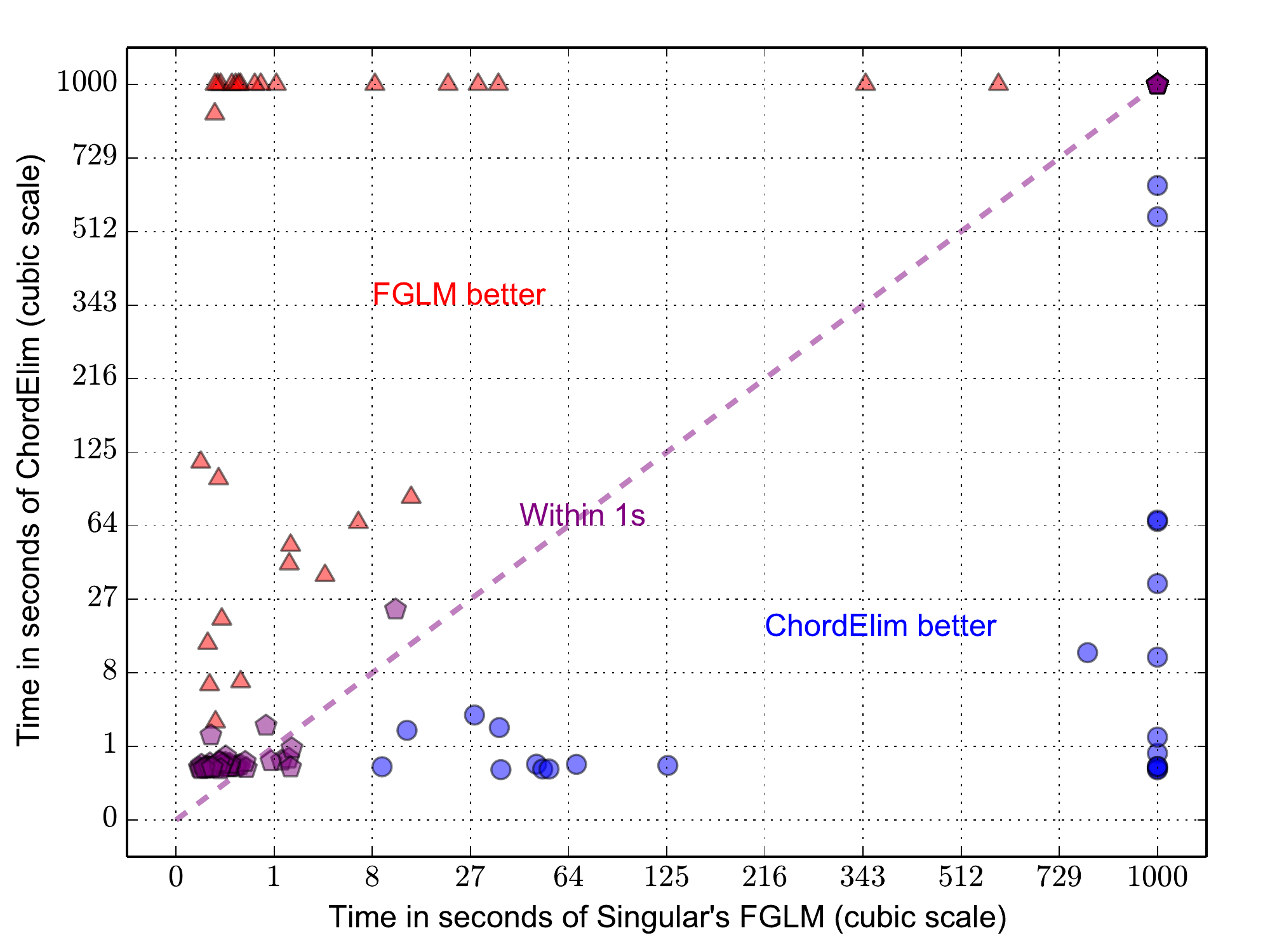}
  \caption[Scatter plot of performance on overconstrained sensors equations]
  {Scatter plot of the time used by Singular's FGLM and Algorithm~\ref{alg:eliml} on 100 random overconstrained ($D=1/3$) instances of Equations~\eqref{eq:sensorsequations}. Darker points indicate overlap.} 
  \label{fig:histogram_sensors}
\end{figure}

\subsection{Differential Equations}\label{s:differentialequations}
We consider now the following equations over the field $\K=\Q$:
\begin{subequations} \label{eq:differentialequations}
\begin{align}
0&=2x_1 - x_2 + \frac{1}{2}h^2(x_1+t_1)^3\\
0&=2x_i - x_{i-1} - x_{i+1} + \frac{1}{2}h^2(x_i+t_i)^3&\mbox{ for }i=2,\ldots,n-1\\
0&=2x_n - x_{n-1} + \frac{1}{2}h^2(x_n+t_n)^3
\end{align}
\end{subequations}
with $h = 1/(n+1)$ and $t_i = i/(n+1)$. 
Such equations were used in~\cite{nie2008sparse}, and arise from discretizing the following differential equation with boundary conditions:
\begin{align*}
x'' + \frac{1}{2}(x+t)^3 = 0, \;\;\;\;\;\;\;\; x(0)=x(1)=0.
\end{align*}
Note that these polynomials are simplicial (see Definition~\ref{defn:simplicial}) and thus chordal elimination succeeds because of Theorem~\ref{thm:exactsimplicial}.
Even more, the equations $J_l$ obtained in chordal elimination form a lex Gr\"obner basis.
However, the results from Section~\ref{s:finitefield} do not hold.
Nevertheless, we compare the performance of chordal elimination with Singular's default Gr\"obner basis algorithm with lex order.
We also consider Singular's FGLM implementation.

\begin{table}[htb]
\centering
\caption[Performance on finite difference equations]
{Performance (in seconds) on Equations~\eqref{eq:differentialequations} for: Algorithm~\ref{alg:eliml}, and computing  a lex Gr\"obner basis with two standard methods (Singular's default and FGLM). 
}
  \label{tab:timingsdifferential}
\begin{tabular}{@{}llllll@{}}
\toprule
$n$& Variables & Equations & ChordElim   & LexGB    & LexFGLM \\ \midrule
3  & 3         & 3         & 0.008       & 0.003    & 0.007      \\ \midrule
4  & 4         & 4         & 0.049       & 0.044    & 0.216      \\ \midrule
5  & 5         & 5         & 1.373       & 1.583    & 8.626      \\ \midrule
6  & 6         & 6         & 76.553      & 91.155   & 737.989    \\ \midrule
7  & 7         & 7         & 7858.926    & 12298.636& 43241.926  \\ \bottomrule
\end{tabular}
\end{table}

Table~\ref{tab:timingsdifferential} shows the results of the experiments.
The fast increase in the timings observed is common to all methods.
Nevertheless, it can be seen that chordal elimination performs faster and scales better than standard Gr\"obner bases algorithms.
Even though the degrevlex term order is much simpler in this case, FGLM algorithm is not efficient to obtain a lex Gr\"obner basis.

\bibliography{groebner}

\bibliographystyle{plain}

\appendix

\section{Additional proofs}\label{s:appendixproofs}

\subsection{Proofs from Section~\ref{s:chordelim}}\label{s:proofschordelim}

\begin{proof}[Proof of Theorem~\ref{thm:eliml}]
  We prove it by induction on $l$.
 The base case is Lemma~\ref{thm:elim}.
 Assume that the result holds for some $l$ and let's show it for $l+1$.

  By induction hypothesis $I_l,W_1,\ldots,W_l$ satisfy Equation~\eqref{eq:eliml1} and Equation~\eqref{eq:eliml2}.
 Lemma~\ref{thm:elim} with $I_l$ as input tell us that $I_{l+1},W_{l+1}$ satisfy:
\begin{align}
  \overline{\pi(\V(I_l))} &\subset \V(I_{l+1})\\
  \V(I_{l+1}) - \V(W_{l+1}) &\subset \pi(\V(I_l))
\end{align}
  where $\pi:\K^{n-l}\to\K^{n-l-1}$ is the projection onto the last factor.
  Then,
  \begin{align*}
    \pi_{l+1}(\V(I)) =\pi(\pi_{l}(\V(I))) \subset \pi(\V(I_l)) \subset \V(I_{l+1})
  \end{align*}
  and as $\V(I_{l+1})$ is closed, we can take the closure.
 This shows Equation~\eqref{eq:eliml1}.

  We also have
  \begin{align*}
    \pi_{l+1}(\V(I)) = \pi(\pi_l(\V(I))) &\supset \pi(\V(I_l) - [\pi_l(\V(W_1))\cup \cdots \cup \pi_l(\V(W_l))]) \\
    &\supset \pi(\V(I_l)) -\pi[\pi_l(\V(W_1))\cup \cdots \cup \pi_l(\V(W_l))] \\
    &= \pi(\V(I_l)) -[\pi_{l+1}(\V(W_1))\cup \cdots \cup \pi_{l+1}(\V(W_l))] \\
    &\supset (\V(I_{l+1})-\V(W_{l+1})) -[\pi_{l+1}(\V(W_1))\cup \cdots \cup \pi_{l+1}(\V(W_l))] \\
    &= \V(I_{l+1}) -[\pi_{l+1}(\V(W_1))\cup \cdots \cup \pi_{l+1}(\V(W_{l+1}))]
  \end{align*}
  which proves Equation~\eqref{eq:eliml2}.
\end{proof}

\subsection{Proofs from Section~\ref{s:cliqueselim}}\label{s:proofscliqueselim}

\begin{proof}[Proof of Lemma~\ref{thm:elimlower}]
  Let $H_\Lambda:=I \cap \K[\Lambda]$ and ${J}_\Lambda := \sum_{x_i\in \Lambda} {J}_i$.
  Let $x_l\in \Lambda$ be its largest element.
  For a fixed $x_l$, we will show by induction on $|\Lambda|$ that $\V(H_\Lambda) = \V(J_\Lambda) = \pi_{\Lambda}(V)$.

  The base case is when $\Lambda=\{x_l,\ldots, x_{n-1}\}$. 
  Note that as $x_l$ is fixed, such $\Lambda$ is indeed the largest possible lower set.
  In such case ${J}_\Lambda=I_l$ as seen in Equation~\eqref{eq:decomposeJ}, and as we are assuming that the domination condition holds, then $\V(H_\Lambda) = \V(I_l) = \pi_l(V)$.
 
  Assume that the result holds for $k+1$ and let's show it for some $\Lambda$ with $|\Lambda|=k$.
 Consider the subtree $T_l = T|_{\{x_l,\ldots,x_{n-1}\}}$ of $T$.
  As $T_l|_\Lambda$ is a proper subtree of $T_l$ with the same root, there must be an $x_m < x_l$ with $x_m\notin \Lambda$ and such that $x_m$ is a leaf in $T_l|_{\Lambda'}$, where $\Lambda' = \Lambda\cup \{x_m\}$.
  We apply the induction hypothesis in $\Lambda'$, obtaining that $\V(H_{\Lambda'})= \V(J_{\Lambda'})= \pi_{\Lambda'}(V)$.

 Note now that $J_m$ is is a subset of both $H_{\Lambda'},J_{\Lambda'}$.
 Observe also that we want to eliminate $x_m$ from these ideals to obtain $H_\Lambda, J_\Lambda$.
 To do so, let's change the term order to $x_m>x_l>x_{l+1}>\cdots>x_{n-1}$. 
 Note that such change has no effect inside $X_m$, and thus the term ordering for ${J_m}$ remains the same.
 As the domination condition holds, then $J_m$ is $x_m$-dominated, and thus $H_{\Lambda'},J_{\Lambda'}$ are also $x_m$-dominated.
 This means that  Lemma~\ref{thm:exactelim} holds for $H_{\Lambda'},J_{\Lambda'}$ when we eliminate $x_m$, and then
 \begin{align*}
   \V(\elim{m+1}{H_{\Lambda'}}) &= \pi_{m+1}(\V(H_\Lambda')) = \pi_{\Lambda}(V)\\
   \V(\elim{m+1}{J_{\Lambda'}}) &= \pi_{m+1}(\V(J_\Lambda')) = \pi_{\Lambda}(V).
 \end{align*}

 Notice that $H_\Lambda = \elim{m+1}{H_{\Lambda'}}$, so all we have to do now is to show that  $J_\Lambda \equal \elim{m+1}{J_{\Lambda'}}$.
 Note that
 \begin{align*}
   \elim{m+1}{{J}_{\Lambda'}}= \elim{m+1}{{J}_m+{J}_\Lambda}.
 \end{align*}
 Observe that the last expression is reminiscent of Lemma~\ref{thm:elim}, but in this case we are eliminating $x_m$.
 As mentioned before, $J_m$ is $x_m$-dominated, so elimination succeeds.
 Therefore, we have
 \begin{align*}
   \elim{m+1}{{J}_m+{J}_\Lambda}\equal \elim{m+1}{{J}_m} + {J}_\Lambda.
 \end{align*}
 Let $x_p$ be the parent of $x_m$ in $T$.
 Then Equation~\eqref{eq:gradeddec} says that $ \elim{m+1}{{J}_m}\subset J_p$, where we are using that the term order change maintains $J_m$.
 Observe that $x_p\in \Lambda$ by the construction of $x_m$, and then $J_p\subset J_\Lambda$.
 Then, 
 \begin{align*}
   \elim{m+1}{{J}_m} + {J}_\Lambda =  {J}_\Lambda.
 \end{align*}
 Combining the last three equations we complete the proof.
\end{proof}

For the proof of Theorem~\ref{thm:preservestructure}, we will need two results before.

\begin{lem}\label{thm:cliquesvarietyIl}
  Let $I$ be a zero dimensional ideal, let $H_j=I\cap \K[X_j]$ and let $I_l = \elim{l}{I}$.
  Then,
 $$I_l \equal H_l + H_{l+1} + \cdots + H_{n-1}.$$
\end{lem}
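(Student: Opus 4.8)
The plan is to prove the two containments $\V(I_l) \subset \V(H_l + \cdots + H_{n-1})$ and $\V(H_l + \cdots + H_{n-1}) \subset \V(I_l)$ separately, working at the level of varieties as the notation $\equal$ permits. For the easy inclusion: each $H_j = I \cap \K[X_j]$ satisfies $H_j \subset I$, and when $x_j \geq x_l$ every variable appearing in a generator of $H_j$ lies in $X_j \subset \{x_l, \ldots, x_{n-1}\}$ (recall from Lemma~\ref{thm:cliquecontainment} and the perfect elimination ordering that $X_j \subset \{x_j, \ldots, x_{n-1}\}$). Hence $H_j \subset I \cap \K[x_l,\ldots,x_{n-1}] = \elim{l}{I} = I_l$ for every $j \geq l$, so $H_l + \cdots + H_{n-1} \subset I_l$ and therefore $\V(I_l) \subset \V(H_l + \cdots + H_{n-1})$.

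For the reverse inclusion, the natural approach is to use the projection characterization $\V(I_l) = \overline{\pi_l(\V(I))} = \pi_l(\V(I))$, where the last equality holds because $I$ is zero dimensional (so $\V(I)$ is finite and its projection is already closed). So I would take a point $a \in \V(H_l + \cdots + H_{n-1}) \subset \K^{n-l}$ and show $a \in \pi_l(\V(I))$, i.e. that $a$ extends to a point of $\V(I)$. Since $I = H_0 + H_1 + \cdots + H_{n-1}$ by Lemma~\ref{thm:cliquesvariety}, it suffices to extend $a$ to a common zero of all the $H_j$ with $j < l$ as well; and by the clique structure, extending coordinate-by-coordinate along the elimination tree is the way to go. Concretely, process the variables $x_{l-1}, x_{l-2}, \ldots, x_0$ in decreasing order of index; when we reach $x_m$ ($m < l$), all variables of $X_m$ other than $x_m$ itself already have assigned values (since $X_m \setminus \{x_m\} \subset \{x_{m+1}, \ldots, x_{n-1}\}$), and we must choose a value for $x_m$ consistent with $H_m = I \cap \K[X_m]$. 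The extension exists precisely because the domination condition makes the relevant elimination exact: by the results behind Theorem~\ref{thm:maxcliqueselim}, $\V(H_m) = \pi_{X_m}(\V(I))$, and more to the point the "every partial solution on the parent clique extends" property — Equation~\eqref{eq:gradeddec} together with Lemma~\ref{thm:exactelim} — guarantees that a partial assignment on $X_m \setminus \{x_m\}$ that is consistent with all higher cliques extends to $x_m$. Iterating down to $x_0$ produces the desired point of $\V(I)$ projecting to $a$.

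The main obstacle is making the inductive extension argument fully rigorous: one must be careful that the partial assignment built so far is genuinely a point of $\pi_{\{x_m, x_{m+1}, \ldots\}}(\V(\text{truncated ideal}))$ at each stage, so that the extendability result applies. The cleanest way is probably to induct on $l$ downward, or to invoke Lemma~\ref{thm:elimlower} for the lower set $\{x_m, \ldots, x_{n-1}\}$: that lemma already packages exactly the statement that $\elim{m}{I} \equal \sum_{x_i \geq x_m} J_i$ with variety $\pi_m(\V(I))$, and combined with the identity $I_m = J_m + \cdots + J_{n-1}$ from Equation~\eqref{eq:decomposeJ} and $H_j \supset$ (or $\equal$) appropriate sums of the $J_i$'s from Theorem~\ref{thm:maxcliqueselim}, one can likely short-circuit the hand-rolled extension and deduce the result from the already-established machinery. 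I would first attempt that route — reduce to Lemma~\ref{thm:elimlower} and Theorem~\ref{thm:maxcliqueselim} — and fall back to the explicit coordinate-by-coordinate extension only if the bookkeeping between the $H_j$'s and the $J_i$'s proves awkward.
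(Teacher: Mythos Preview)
Your geometric extension argument is correct and constitutes a genuinely different route from the paper's. The paper does not argue pointwise at all: instead it takes the generating set $F=\bigcup_j gb_{H_j}$, invokes Lemma~\ref{thm:cliquesvariety} to get $I=\langle F\rangle$, and then reruns chordal elimination on \emph{this} generating set. Because each $H_j$ is zero dimensional it contains an $x_j$-dominated element, so Corollary~\ref{thm:exactsimplicialeachl} applies and chordal elimination succeeds; the algorithm's $l$-th output is then $\langle F_l\rangle=\sum_{j\geq l}H_j$, giving the result. Your approach instead lifts a point of $\bigcap_{j\geq l}\V(H_j)$ one coordinate at a time up the elimination tree, using only $\V(H_m)=\pi_{X_m}(\V(I))$, the clique containment $X_{m-1}\setminus\{x_{m-1}\}\subset X_p$ from Lemma~\ref{thm:cliquecontainment}, and Lemma~\ref{thm:cliquesvariety} at the very end. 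This is more elementary and entirely self-contained within basic elimination theory; the paper's proof, by contrast, leans on the chordal-elimination machinery already built but is shorter once that machinery is in hand.

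One correction to your write-up, however: drop the appeals to Theorem~\ref{thm:maxcliqueselim}, Equation~\eqref{eq:gradeddec}, Lemma~\ref{thm:exactelim}, and Lemma~\ref{thm:elimlower}. All of those carry the hypothesis that the domination condition holds for chordal elimination, which Lemma~\ref{thm:cliquesvarietyIl} does \emph{not} assume (only zero-dimensionality). You do not actually need them: since here $H_m$ is the \emph{true} elimination ideal $I\cap\K[X_m]$ (not the algorithm's approximation), the equality $\V(H_m)=\pi_{X_m}(\V(I))$ is immediate from standard elimination theory and finiteness of $\V(I)$. With that single observation in place, your inductive extension step goes through exactly as you sketched, and the ``short-circuit via Lemma~\ref{thm:elimlower}'' detour is unnecessary.
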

\begin{proof}
  For each $x_j$ let $gb_{H_j}$ be a lex Gr\"obner basis of $H_j$.
  Let $F = \bigcup_{x_j} gb_{H_j}$ be the concatenation of all $gb_{H_j}$'s.
  Then the decomposition of $I$ from Lemma~\ref{thm:cliquesvariety} says that $I=\langle F \rangle$.
  Observe now that if we use chordal elimination on $F$, at each step we only remove the polynomials involving some variable;
  we never generate a new polynomial.
  Therefore our approximation of the $l$-th elimination ideal is given by $F_l = \bigcup_{x_j\leq x_l} gb_{H_j}$.
  Note now that as $H_j$ is zero dimensional it is also $x_j$-dominated, and thus Corollary~\ref{thm:exactsimplicialeachl} says that elimination succeeds.
  Thus $I_l \equal \langle F_l \rangle = \sum_{x_j\leq x_l} H_j$.
\end{proof}

 \begin{thm}[\cite{gao2003grobner}]\label{thm:lexgrobstructure}
   Let $I$ be a radical zero dimensional ideal and $V=\V(I)$.
   Let $gb$ be a minimal Gr\"obner basis with respect to an elimination order for $x_0$. 
   Then the set 
   $$D=\{\deg_{x_0}(p): p\in gb\}$$  
   where $\deg$ denotes the degree, is the same as 
   $$F=\{|\pi^{-1}(z)\cap V|: z \in \pi(V)\}$$
   where $\pi:\K^n\to \K^{n-1}$ is the projection eliminating $x_0$.
 \end{thm}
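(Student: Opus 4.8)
The plan is to translate the statement into one about the initial ideal $\initial(I)$ and then read off the combinatorics from the ring structure of $R/I$, using crucially that $I$ is radical and zero-dimensional. Since the order eliminates $x_0$, a polynomial $p$ has $\lm(p)$ divisible by $x_0$ iff $p$ itself involves $x_0$, and then $\deg_{x_0}(p)=\deg_{x_0}(\lm(p))$; moreover, for a minimal Gr\"obner basis the leading monomials $\{\lm(p):p\in gb\}$ are exactly the minimal monomial generators of $\initial(I)$. So $D$ is the set of $x_0$-degrees of the minimal generators of $\initial(I)$. Those not involving $x_0$ are the minimal generators of $\initial(\elim{1}{I})$ and contribute degree $0$, so the content is to show that the $x_0$-degrees of the generators that do involve $x_0$ are precisely $F=\{d_z: z\in\pi(V)\}$, where $d_z:=|\pi^{-1}(z)\cap V|$.

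First I would record the staircase picture. Because the order is an elimination order for $x_0$, $\initial(\elim{1}{I})=\initial(I)\cap\K[x_1,\ldots,x_{n-1}]$, so the standard monomials of $I$ lying in $\K[x_1,\ldots,x_{n-1}]$ are exactly the standard monomials of $\elim{1}{I}$; call this set $\mathcal{M}$. Since $\initial(I)$ is a monomial ideal, for each $m'\in\mathcal{M}$ the set $\{a\ge 0: x_0^a m'\text{ is standard}\}$ is a down-set $\{0,1,\ldots,e(m')-1\}$ for some $e(m')\ge 1$. A direct check shows the minimal generators of $\initial(I)$ involving $x_0$ are exactly the $x_0^{e(m')}m'$ with $m'$ divisibility-minimal among elements of $\mathcal{M}$ sharing a given value of $e(\cdot)$; hence the set of their $x_0$-degrees equals $E:=\{e(m'):m'\in\mathcal{M}\}$. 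It remains to prove $E=F$.

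For this I would use the splitting coming from radicality. As $I$ is radical and zero-dimensional, $R/I\cong\prod_{v\in V}\K$ by the Chinese Remainder Theorem; likewise $\elim{1}{I}=I\cap\K[x_1,\ldots,x_{n-1}]=\I(\pi(V))$ is the (radical) vanishing ideal of the finite set $\pi(V)$, so $A:=\K[x_1,\ldots,x_{n-1}]/\elim{1}{I}\cong\prod_{z\in\pi(V)}\K$ is semisimple. Viewing $R/I$ as a finite $A$-module through the inclusion $A\hookrightarrow R/I$, semisimplicity gives $R/I\cong\bigoplus_{z}\K_z^{\,m_z}$ with $m_z=\dim_\K\!\big(R/I\otimes_A\K_z\big)$; and $R/I\otimes_A\K_z\cong\prod_{v:\pi(v)=z}\K$ directly from the two product decompositions, so $m_z=d_z$. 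The last move is to match this coordinate-free decomposition against the staircase: the standard-monomial basis of $R/I$ groups into ``columns'' $C_{m'}=\mathrm{span}_\K\{m',x_0m',\ldots,x_0^{e(m')-1}m'\}$, each stable under multiplication by $x_0$, and I want the multiset $\{\!\{e(m'):m'\in\mathcal{M}\}\!\}$ of column heights to coincide with $\{\!\{d_z:z\in\pi(V)\}\!\}$ (which gives $E=F$). The cleanest way to pin this down is an incremental argument in the spirit of the Buchberger--M\"oller construction: process $V$ fibre by fibre over $\pi(V)$, and show that appending the $d_z$ points lying over a new base point $z$ creates exactly one new column, of height $d_z$, and modifies no existing column height.

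The main obstacle is precisely this last matching step: moving from the module decomposition $R/I\cong\bigoplus_z\K_z^{d_z}$ to the order-dependent column heights $e(m')$. Naive ``triangularity'' of the $A$-action on the columns fails (multiplying $x_0^a m'$ by $x_i$ can push one into a shorter column and force further reduction), so the bookkeeping must be done carefully, most safely through the incremental Buchberger--M\"oller viewpoint just described, or by induction on $|\pi(V)|$ via saturation/intersection of $I$ with the ideals of the remaining fibres. (I also note that, as literally stated, $D$ contains $0$ as soon as $\elim{1}{I}\neq(0)$, i.e.\ whenever $n\ge 2$ and $V\neq\emptyset$; the assertion the argument above establishes, and the one used later in the paper, is that the $x_0$-degrees of the elements of $gb$ that involve $x_0$ are exactly $F$.)
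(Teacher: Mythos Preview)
The paper does not prove this statement: it is quoted from \cite{gao2003grobner} and used as a black box in the proof of Theorem~\ref{thm:preservestructure}, so there is no in-paper argument to compare against. Your reduction is correct --- with an elimination order, the $x_0$-degrees of the minimal generators of $\initial(I)$ that involve $x_0$ are precisely the column heights $E=\{e(m'):m'\in\mathcal{M}\}$ --- and your parenthetical remark that $0\in D$ whenever $n\ge 2$ is a valid observation about the literal statement.

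The proposal does, however, leave a genuine gap, namely the step you yourself flag as ``the main obstacle'': identifying the multiset of column heights $\{\!\{e(m')\}\!\}$ with the multiset of fibre sizes $\{\!\{d_z\}\!\}$. You suggest a Buchberger--M\"oller induction but do not carry it out, and the bare module decomposition $R/I\cong\bigoplus_z\K_z^{d_z}$ does not by itself see the order-dependent quantities $e(m')$. The gap closes directly, without any incremental construction. For $d\ge 0$ set $U_d:=A\cdot 1+A\cdot x_0+\cdots+A\cdot x_0^{\,d-1}\subseteq R/I$. Because reduction by a Gr\"obner basis for an elimination order never raises the $x_0$-degree, $U_d$ equals the span $W_d$ of the standard monomials of $x_0$-degree $<d$, so $\dim_\K U_d=\sum_{m'\in\mathcal{M}}\min(d,e(m'))$. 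On the other hand, localising at each $z\in\pi(V)$ the $z$-component of $R/I$ is $\K[x_0]/(p_z)$ with $\deg p_z=d_z$, and the image of $U_d$ there is the span of $1,x_0,\ldots,x_0^{d-1}$, of dimension $\min(d,d_z)$; hence also $\dim_\K U_d=\sum_{z}\min(d,d_z)$. Equating the two expressions for every $d$ forces the two multisets to coincide, and therefore $E=F$.
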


\begin{proof}[Proof of Theorem~\ref{thm:preservestructure}]
  If $x_l=x_{n-1}$, then $I_l=H_l$ and the assertion holds.
  Otherwise, note that $I_l,H_l$ are also radical zero dimensional so we can use Theorem~\ref{thm:lexgrobstructure}. 
  Let
  \begin{align*}
F_{I_l}&=\{|\pi_{I_l}^{-1}(z)\cap\V(I_l)|: z \in \pi_{I_l}(\V(I_l))\}\\
F_{H_l}&=\{|\pi_{H_l}^{-1}(z)\cap\V(H_l)|: z \in \pi_{H_l}(\V(H_l))\}
  \end{align*}
  where  $\pi_{I_l}:\K^{n-l}\to \K^{n-l-1}$ and $\pi_{H_l}:\K^{|X_l|}\to \K^{|X_l|-1}$ are projections eliminating $x_l$.
  Then we know that $D_{I_l} = F_{I_l}$ and $D_{H_l}=F_{H_l}$, so we need to show that $F_{I_l}=F_{H_l}$.

  For some $z\in \K^{n-l}$, let's denote $z=:(z_l,z_H,z_I)$ where $z_l$ is the $x_l$ coordinate, $z_H$ are the coordinates of $X_l\setminus x_l$, and $z_I$ are the coordinates of $\{x_{l},\ldots,x_{n-1}\}\setminus X_l$.
  Thus, we have $\pi_{I_l}(z)= (z_H,z_I)$ and $\pi_{H_l}(z_l,z_H)=z_H$.

  As $I$ is zero dimensional, then Lemma~\ref{thm:cliquesvarietyIl} implies that $I_l \equal H_l + I_{l+1}$.
  Note also that $\V(I_{l+1})=\pi_{I_l}(\V(I_l))$ as it is zero dimensional.
  Then,
  \begin{align*}
    z \in \V(I_l) \iff (z_l,z_H)\in \V(H_l) \mbox{ and } (z_H,z_I)\in \pi_{I_l}(\V(I_l)).
  \end{align*}
  Thus, for any $(z_H,z_I)\in \pi_{I_l}(\V(I_l))$ we have 
  \begin{align*}
    (z_l,z_H,z_I) \in \V(I_l) \iff (z_l,z_H)\in \V(H_l).
  \end{align*}
  Equivalently, for any $(z_H,z_I)\in \pi_{I_l}(\V(I_l))$ we have
  \begin{align}\label{eq:fibercontainment}
    z \in \pi_{I_l}^{-1}(z_H,z_I)\cap \V(I_l) \iff \rho(z) \in \pi_{H_l}^{-1}(z_H)\cap \V(H_l)
  \end{align}
  where $\rho(z_l,z_H,z_I):=(z_l,z_H)$.
  Therefore, $F_{I_l}\subset F_{H_l}$.
  
  On the other hand, note that if $z_H \in  \pi_{H_l}(\V(H_l))$, then there is some $z_I$ such that $(z_H,z_I)\in  \pi_{I_l}(\V(I_l))$. 
  Thus, for any $z_H \in\pi_{H_l}(\V(H_l))$ there is some $z_I$ such that \eqref{eq:fibercontainment} holds.
  This says that $F_{H_l}\subset F_{I_l}$, completing the proof.
\end{proof}

\subsection{Proofs from Section~\ref{s:finitefield}}\label{s:proofsfinitefield}

\begin{proof}[Proof of Lemma~\ref{thm:qdominated}]
  Let $x_l$ be arbitrary and let $x_m \in X_l$.
  We want to find a generator of $J_l$ that is $(x_m,q)$-dominated.
  Let $x_j\geq x_l$ be such that $X_l\subset X_j$ and $X_j$ is a maximal clique.
  Note that $x_m \in X_j$.
  Observe that  $\hat{H_j}\subset J_j$ because of Lemma~\ref{thm:eliminiationcontainment}, and thus $J_j$ is $q$-dominated. 
  Then there must be a generator $f\in J_j$ that is $(x_m,q)$-dominated.

  Let's see that $f$ is a generator of $J_l$, which would complete the proof.
  To prove this we will show that $f\in \K[X_l]$, and then the result follows from Lemma~\ref{thm:eliminiationcontainment}.
  As the largest variable of $f$ is $x_m$, then all its variables are in 
  $X_j \setminus \{x_{m+1},\ldots,x_{j}\} \subset X_j \setminus \{x_{l+1},\ldots,x_{j}\}.$
  Thus, it is enough to show that 
  $$X_j \setminus \{x_{l+1},x_{l+2},\ldots,x_{j}\}\subset X_l.$$

  The equation above follows by iterated application of Lemma~\ref{thm:cliquecontainment}, as we will see.
  Let $x_p$ be the parent of $x_j$ in $T$, and observe that $x_l\in X_p$ as $x_l\leq x_p$ and both are in clique $X_j$.
  Then Lemma~\ref{thm:cliquecontainment} implies that $X_j\setminus \{x_{p+1},\ldots,x_{j}\}\subset X_p$.
  If $x_p = x_l$, we are done.
  Otherwise, let $x_r$ be the parent of $x_p$, and observe that $x_l \in X_r$ as before.
  Then,
  $$X_j\setminus \{x_{r+1},\ldots,x_{j}\}\subset X_p\setminus \{x_{r+1},\ldots,x_{p}\}\subset X_r.$$
  If $x_r = x_l$, we are done.
  Otherwise, we can continue this process that has to eventually terminate.
  This completes the proof.
\end{proof}

\end{document}